\keywords{Event-B, Formal Semantics, Institutions, Refinement, Modularisation}
\theoremstyle{plain} %\crefname{satz}{Satz}{S\"atze}
\newcommand\CommentNF[1]{\color{gray}{\#\; \hbox{\emph{#1}}}}
\newcommand\Comment[1]{\hfill \hspace{0.1cm}\color{gray}{\#\; \hbox{\emph{#1}}}}
\newcommand\CommentLN[1]{\hfill \hspace{0.1cm}\color{gray}{\; \hbox{\emph{#1}}}}
\newcommand\maplet[2]{\ensuremath{#1\mathop{\mapsto}#2}}
\newcommand{\groundevt}{\phi({\overline{x}, \overline{x}'}) }
\newcommand{\groundevtinit}{\phi({\overline{x}'}) }
\newcommand\BR[1]{\ensuremath{\llbracket #1 \rrbracket}}
\newcommand\setof[1]{\mathcal{P}(#1)} % chktex 36
\newcommand\EBtoEVT{\textsc{eb2evt}}
\newcommand\Rodin{\textsf{Rodin}}
\newcommand\Hets{\textsc{Hets}}
\newtheorem{subprop}[thm]{Sub-Proposition}
\begin{document}

\title[Building Specifications in the Event-B Institution]
      {Building Specifications in the Event-B Institution}

\author[M. Farrell]{Marie Farrell\lmcsorcid{0000-0001-7708-3877}}	%required % chktex 8
\address{Department of Computer Science and Hamilton Institute, Maynooth University, Ireland}	%required
\email{marie.farrell@mu.ie}  %optional
\thanks{This work was initially funded by a Government of Ireland Postgraduate Grant from the Irish Research Council, and subsequently supported by EPSRC Hubs for Robotics and AI in Hazardous Environments: EP/R026092 (FAIR-SPACE)}
\thanks{Farrell and Monahan dedicate this paper to the memory of Dr.~James F. Power who passed away before he could see this work accepted for publication. We thank him for his contributions and encouragement throughout this project}

\author[R. Monahan]{Rosemary Monahan\lmcsorcid{0000-0003-3886-4675}} % chktex 8
%\address{Department of Computer Science, Maynooth University, Ireland}

\author[J.F. Power]{James F. Power\lmcsorcid{0000-0002-6770-3910}}% chktex 8
%\address{Department of Computer Science, Maynooth University, Ireland}

\begin{abstract}
This paper describes a formal semantics for the Event-B specification
language using the theory of institutions.  We define an institution for
Event-B, {\iEVT}, and prove that it meets the validity requirements
for satisfaction preservation and model amalgamation.  We also present a
series of functions that show how the constructs of the Event-B
specification language can be mapped into our institution.  Our
semantics sheds new light on the structure of the Event-B language,
allowing us to clearly delineate three constituent sub-languages: the
superstructure, infrastructure and mathematical languages.  One of the
principal goals of our semantics is to provide access to the generic
modularisation constructs available in institutions, including
specification-building operators for parameterisation and refinement.
We demonstrate how these features subsume and enhance the corresponding
features already present in Event-B through a detailed study of their use in a
worked example.  We have implemented our approach via a parser and
translator for Event-B specifications, {\EBtoEVT}, which also provides a
gateway to the {\Hets} toolkit for heterogeneous specification.

\end{abstract}

\maketitle

%%%%%%%%%%%%%%%%%%%%%%%%%%%%%%%%%%%%%%%%%%%%%%%%%%%%%%%%%%%%%%%%%%%%%%%%%%%%
\section{Introduction}%
\label{sec:intro}
%%%%%%%%%%%%%%%%%%%%%%%%%%%%%%%%%%%%%%%%%%%%%%%%%%%%%%%%%%%%%%%%%%%%%%%%%%%%

When formal methods are used in software development, formal specification languages are used to model software systems and to provide a basis for their verification. Event-B is an industrial-strength formal specification language that  originally evolved from the B-Method~\cite{schneider2001b}, and has been used in a multitude of domains including air-traffic control systems and medical devices~\cite{abrial_modeling_2010}. Even though it is a mature formal specification language, Event-B has a number of drawbacks, particularly its lack of standardised modularisation constructs. In addition, while Event-B has been provided with a semantics in terms of proof obligations, the abstractness of this approach can make it difficult to deal formally with modularisation, or to provide a concrete basis for interoperability with other formalisms.

In this paper, we describe the formal constructions that allow the Event-B language to be embedded into the theory of institutions. Institutions are mathematical structures that make use of category theory and provide a generic framework for formalising a language or logic~\cite{goguen_institutions:_1992, sanella_foundations_2012}.   Having built an institution to support the core concepts in Event-B, we can then use this institution as a target for the semantics of the full Event-B specification language.  This in turn gives access to a range of generic modularisation constructs and provides a foundation for interoperability between Event-B and other formalisms.

\subsection{The structure of this paper}

Sections~\ref{sec:sem} and~\ref{sec:bg} review the background material and basic definitions used in this paper.
The semantics of programming languages is a well-understood concept that has been studied for many years and we use it as a starting point for our research. It has become evident from our investigations that defining a semantics of a specification language, such as Event-B in our case, is a slightly different endeavour.
We provide some background on the semantics of specification languages in Section~\ref{sec:sem}, and we give the basic definitions from the theory of institutions in this context.
In Section~\ref{sec:bg} we give an overview of the Event-B specification language, describe its abstract syntax and introduce an example that we will use throughout the paper.

The core of our semantics is presented in Sections~\ref{sec:iEVT} and~\ref{sec:transsemantics}.  In Section~\ref{sec:iEVT} we define our
institution for Event-B, {\iEVT}, that was originally developed in
our previous work~\cite{farrell_institution_2017}.
The {\iEVT}
institution provides the mathematical structures that will act as a target
for the semantics of constructs from the Event-B language. The proofs of the satisfaction relation and the comorphism between {\iFOPEQ} and {\iEVT} are contained in the paper. All other necessary proofs are contained in the appendix.

In Section~\ref{sec:transsemantics} we present a series of definitions of the  functions that map the abstract syntax of the Event-B language into the corresponding constructs in the {\iEVT} institution.  A key step here is
splitting the Event-B language into a three-layer model, by decomposing it into its superstructure, infrastructure and mathematical sub-languages.  As well as aiding the presentation, this also helps to expose the important features of the Event-B language.  In particular, once we have delineated these sub-languages, we can then employ the specification-building operators made available to us in the theory of institutions in our definition.

In the final sections of our paper, Sections~\ref{sec:impl} and~\ref{sec:refmod}, we explore some of the consequences of our semantics.
We have validated our semantics by building a Haskell translator tool that we call {\EBtoEVT}, and in Section~\ref{sec:impl} we describe the correspondence between this tool and our semantics. To illustrate the use of this tool in practice we apply it to our running example from Section~\ref{sec:ebexample}.

Formal refinement is core to the Event-B development process and, as such, any formalisation of Event-B should take this into account. In Section~\ref{sec:refmod} we discuss what refinement means in the theory of institutions and show how our formalisation of {\iEVT} supports Event-B refinement.  Again, this is illustrated via our running example. The weaknesses that Event-B has in terms of modularisation are also addressed in Section~\ref{sec:refmod} where we show how the generic specification-building operators from the theory of institutions can provide a set of well-defined modularisation constructs for Event-B.

One of the major benefits of these operators is that they are designed to be used in an institution-, and thus formalism-independent fashion. In this way, we provide scope for interoperability between Event-B and other formalisms that have been incorporated into the institution-theoretic framework. Our semantics and its associated {\EBtoEVT} tool are instrumental to this, since they allow us to bridge the gap between the {\Rodin} tool for Event-B and the {\Hets} tool-set used for institutions.

\subsection{Contributions of this paper}
The principal contributions of this paper are summarised as follows:

\begin{enumerate}
%\item A reformulation of our institution for Event-B, {\iEVT}, that more explicitly captures the notion of invariant sentences than our previous work~\cite{farrell_institution_2017}.
\item A formal semantics for the Event-B formal specification
  language, defined as a series of functions from Event-B constructs
  to specifications over {\iEVT}.  This provides clarity on the
  meaning of the language elements and their interaction, and allows
  us to distinguish their differing levels of responsibility using the
  three-layer model presented in Figure~\ref{fig: ebstruc}.

\item A well-defined set of modularisation constructs using the
  specification-building operators available through the theory of
  institutions.  These are built-in to our semantics, they subsume and
  extend the existing Event-B modularisation constructs, and they
  provide a standardised approach to exploring new modularisation
  possibilities.

\item An explication of Event-B refinement in the context of the
  {\iEVT} institution.  Refinement in {\iEVT} incorporates and extends
  the refinement constructs of Event-B, and our semantics provides a
  more general context for exploring other options.

\end{enumerate}

\noindent
In addition, we have implemented an {\EBtoEVT} translator tool that transforms
Event-B specifications that have been developed using the {\Rodin}
Platform into specifications over the {\iEVT} institution.  We have
used this to validate our semantic definitions, and to provide
interoperability with the existing large corpus of specifications
available in this format~\cite{farrell_clones_2017}.

%%%%%%%%%%%%%%%%%%%%%%%%%%%%%%%%%%%%%%%%%%%%%%%%%%%%%%%%%%%%%%%%%%%%%%%%%%%%
\section{The Semantics of Specification Languages}%
\label{sec:sem}
%%%%%%%%%%%%%%%%%%%%%%%%%%%%%%%%%%%%%%%%%%%%%%%%%%%%%%%%%%%%%%%%%%%%%%%%%%%%

The study of the semantics of specification languages is closely
related to the semantics of programming languages, since both are
species of formal languages and, in many cases, have closely-related
origins.  For example, the specification language at the heart of the
Vienna Development Method (VDM) emerged from a line of research on the
semantics of programming languages starting with PL/1 and Algol~\cite{jones-vdl-jucs-2001,AstarteJones-Algol-16}.  Providing a
semantics has many practical advantages, since the semantics can
facilitate an increased understanding of the language, can provide a
basis for standardisation documents (e.g., the ISO standard for Z~\cite{iso_z_2002}), can act as a specification for tools that
process the language, and can provide a basis for comparison and
interoperability with similar languages.

There are many approaches to defining a semantics for a
programming language, including denotational, operational and axiomatic
approaches among others~\cite{watt_programming_1991,gunter_semantics_1992,winskel_semantics_1993,goguen1996algebraic}.
A denotational semantics shows how to map the language constructs to
mathematical structures~\cite{scott1971toward,schmidt_denotational_1986}, and much of the
focus of this approach is on the correct selection of these
structures.  In contrast, the axiomatic approach characterises the
language in terms of the logical properties that must hold~\cite{hoare1969axiomatic}, and much of the focus here is on proof
obligations and theorem-proving.  While not exclusive, these perspectives can often be reflected in the specification language itself, with a contrast between \textit{model-oriented} and \textit{property-oriented} approaches~\cite{bjorner_se_2006}.

Some of the best-known model-oriented specification languages are Z
and VDM, both of which use typed first-order logic and set theory
overlaid with a set of language constructs to build specification units.
The Z specification language has been given a denotational semantics
by Spivey~\cite{spivey1988understanding}.  This semantics defines a
mapping from the basic syntactic unit of specification, a \textit{schema}, to a
\textit{variety}, which consists of a signature and set of models.  These models
map names used in the specification to appropriate values from the
underlying logic, such as sets, relations and functions.
In a similar vein, the semantics of VDM presented by Larsen and
Pawlowski~\cite{larsen-vdm-1995} shows how to map a \textit{document} to a
set of \textit{environments} that model it; these environments map names to
their denotations.

In order to specify the behaviour of software systems, many specification languages are based on models that include a concept of \textit{state}.
This has some similarity to the operational semantics used for programming
languages~\cite{hennessy_semantics_1990}, though the overall approach is still essentially denotational.
An early example was the COLD specification language, and its semantics mapped the basic unit of specification, a \textit{class description}, to a set of \textit{states} along with a \textit{transition relation} for each procedure~\cite{feijs-COLD-1994}.  The COLD specification-building operators are based on a class algebra, closely related to the structuring operators found in algebraic specification.  There are now many formalisms and languages which have semantics defined in terms of states and transitions including, for example, sequences of events in Object-Z~\cite{smith2012object,smith_oz_semantics_1995}, the \textit{trace}-based semantics of CSP~\cite{hoare1978communicating,roscoe1998theory} and the \textit{behaviours} of TLA+~\cite{lamporttla,merz2012logic}.

Along a similar vein, Unified Theories of
Programming (UTP) is a framework where specifications are interpreted
in terms of relations between their before and after states~\cite{hoare1998unified}. This work, which is rooted in lattice theory,
 somewhat parallels the approach using institutions, which is rooted in
category theory. More recent work uses UTP to define an algebra of programs and an associated refinement relation to derive denotational and operational models of the Guarded Command Language (GCL) and CSP~\cite{he2018roadmap}.

Translational approaches to formal semantics appear throughout the literature, in particular, the denotational UTP semantics for Circus (combination of CSP and Z) which translates the CSP component of the specification into Z~\cite{oliveira2009utp}. Another related translational approach is the semantics of a Dafny program which is given in terms of its translation to the Boogie intermediate verification language which generates the verification conditions to be proven with the Z3 SMT solver~\cite{leino2010dafny}.

The semantics of Clear, the first algebraic specification language, is a mix of denotational semantics and category-theoretic elements~\cite{burstall1980semantics}. Subsequently, the Common Algebraic Specification Language, CASL, was devised as a basis for future work on algebraic specifications using the unifying framework supplied by the theory of institutions~\cite{caslref_2004}. The semantics of CASL follows a similar appooach to Clear and is thus founded in the theory of institutions. In particular, CASL's two step semantics involves translating CASL constructs to their underlying mathematical concepts. In this way, the CASL static semantics determines signatures and its model semantics determines model-classes over the underlying institution. The Heterogeneous Toolset, \textsc{Hets}, provides tool support for CASL specifications and facilitates the use of heterogeneous specifications in an institution-theoretic fashion~\cite{grumberg_heterogeneous_2007}.

\subsection{Modularisation of specifications}

In his semantics for Z, Spivey notes that while ordinary mathematical reasoning is similar to
software specification, the latter has the additional problem of
scale.  In this context, dealing with modularisation, including naming
and structuring constructs, assumes increased importance.  In modular software engineering, a system is decomposed into multiple independent \textit{modules} that can be recomposed at a later stage to form the entire system~\cite{ghezzi2002fundamentals}. There are a multitude of benefits to this approach including ease of code maintenance, code reuse and enabling multiple engineers to work on different components of the same system in parallel.

Although intended to provide support for provably correct stepwise development, formal refinement has been often used as a modularisation technique in formal specification languages. In these languages, support for other modularisation features is typically added after the language has been defined~\cite{ehrig1992introduction,kesten1998modularization}.
%In general, formal specification languages utilise refinement as their main modularisation technique with support for other modularisation features typically added after the language has been defined~\cite{ehrig1992introduction,kesten1998modularization}.
The Z
specification language has a \textit{schema calculus}, whose structuring operations are deliberately designed to look like logical operators, but this does not have to be the case.
One of the purposes of re-engineering the Z specification language in the development of Object-Z was to enhance the support for modular developments~\cite{wordsworth1992software,smith2012object}. Similarly, the VVSL language was developed as an approach to modularisation for VDM~\cite{middelburg1989vvsl}.

When Burstall and Goguen presented the
semantics of the Clear specification language~\cite{burstall1980semantics}, one of their key observations was that
the specification-building operators could be given a meaning in a
very abstract way, almost independently of the notions of the
underlying formalism.  They characterised the meaning of a
Clear specification as an inter-connected collection of \textit{theories}, where a theory is a closed set of properties derivable from the sentences of the specification.  In this context, the semantics of the
specification-structuring operators become theory transformations,
which Burstall and Goguen describe as ``the mathematical essence of
Clear semantics''.

The Clear specification language shares many concepts with other
algebraic specification languages~\cite{bergstra_algebraic_1989,ehrig1992introduction}, most
notably culminating in the Common Algebraic Specification Language
(CASL)~\cite{caslref_2004} and the Heterogeneous Tool Set ({\Hets})~\cite{grumberg_heterogeneous_2007}.
These ideas were generalised in the theory of institutions, which describes these features in terms of category theory~\cite{goguen_institutions:_1992}. It has been observed that most of the formal languages used in Computer Science can be captured as institutions. One notable example is a category-theoretic formalisation of Z and its structuring mechanisms which captures the structure of modular Z specifications in a natural way~\cite{casto2015categorical}.
The semantics of Event-B that we present
in this paper follows a similar school of thought by
first defining an institution for Event-B, called {\iEVT}, in Section~\ref{sec:iEVT}
and then translating Event-B specifications
into structured specifications in the {\iEVT} institution in Section~\ref{sec:transsemantics}.

\subsection{Institutions: basic definitions}%
\label{sec:ins}

The theory of institutions, originally developed by Goguen and Burstall in a series of papers originating from their work on algebraic specification, provides a general framework for defining a logical system~\cite{goguen1983introducing,burstall1977putting,goguen_institutions:_1992}.  Institutions are generally used for representing logics in terms of their vocabulary (signatures), syntax (sentences) and semantics (models and satisfaction condition). The basic maxim of institutions, inspired by Tarski's theory of truth~\cite{tarski1944semantic}, is that
\begin{center}
\textit{``Truth is invariant under change of notation''}.
\end{center}

The use of institutions facilitates the development of specification language concepts, such as structuring of specifications, parameterisation, and refinement. This is achieved independently of the underlying logical system resulting in an environment for heterogeneous specification and combination of logics~\cite{sannella1988specifications, burstall1977putting}.
The theory of institutions is rooted in category theory and our work follows the notation and terminology used by Sannella and Tarlecki~\cite{sanella_foundations_2012}. Thus, we define
an institution as follows~\cite{goguen_institutions:_1992}:

\begin{defi}[\textbf{Institution}]%
\label{def:ins}
An \textbf{institution} $\mathcal{INS}$ for some given logic consists of:
\begin{itemize}
\item A category $\mathbf{Sign}_{\mathcal{INS}}$ whose objects are called \textit{signatures} and  whose arrows are called \textit{signature morphisms}.

\item A functor $\mathbf{Sen}_{\mathcal{INS}}: \mathbf{Sign}_{\mathcal{INS}} \rightarrow \mathbf{Set}$ yielding a set $\mathbf{Sen}_{\mathcal{INS}}(\Sigma)$ of $\Sigma$-\textit{sentences} for each signature $\Sigma \in \lvert \mathbf{Sign}_{\mathcal{INS}}\rvert$ and a function $\mathbf{Sen}_{\mathcal{INS}}(\sigma): \mathbf{Sen}_{\mathcal{INS}}(\Sigma)\rightarrow \mathbf{Sen}_{\mathcal{INS}}(\Sigma ')$ for each signature morphism $\sigma: \Sigma \rightarrow \Sigma'$.

\item A functor $\mathbf{Mod}_{\mathcal{INS}}: \mathbf{Sign}_{\mathcal{INS}}^{op} \rightarrow \mathbf{Cat}$ yielding a category $\mathbf{Mod}_{\mathcal{INS}}(\Sigma)$ of \textit{$\Sigma$-models} for each signature $\Sigma\in\lvert \mathbf{Sign}_{\mathcal{INS}}\rvert$ and a functor $\mathbf{Mod}_{\mathcal{INS}}(\sigma): \mathbf{Mod}_{\mathcal{INS}}(\Sigma')\rightarrow \mathbf{Mod}_{\mathcal{INS}}(\Sigma)$ for each signature morphism $\sigma: \Sigma \rightarrow \Sigma'$.

\item For every signature $\Sigma$, a \textit{satisfaction relation} $\models_{\mathcal{INS}, \Sigma}$ determining the satisfaction of $\Sigma$-sentences by $\Sigma$-models.

\item An institution must uphold the \textbf{satisfaction condition}:
for any signature morphism $\sigma:\Sigma \rightarrow \Sigma'$, $\phi \in \textbf{Sen}_{\mathcal{INS}}(\Sigma)$ and $M' \in \lvert\textbf{Mod}_{\mathcal{INS}}(\Sigma')\rvert$, the translations
$\textbf{Mod}_{\mathcal{INS}}(\sigma)$ of models and $\textbf{Sen}_{\mathcal{INS}}(\sigma)$ of
sentences preserve the following satisfaction condition
\begin{equation*}%
\label{eq:inssatcond}
M' \models _{\mathcal{INS}, \Sigma'} \textbf{Sen}_{\mathcal{INS}}(\sigma)(\phi)  \quad\iff\quad \textbf{Mod}_{\mathcal{INS}}(\sigma)(M') \models _{\mathcal{INS}, \Sigma} \phi
\end{equation*}
where $\mathbf{Sen}_{\mathcal{INS}}(\sigma)(\phi)$ (resp. $\mathbf{Mod}_{\mathcal{INS}}(\sigma)(M')$) indicates the translation of $\phi$ (resp.\ reduction of $M'$)  along the signature morphism $\sigma$.
\end{itemize}
\end{defi}
\noindent
Note that we often denote $\mathbf{Mod}_{\mathcal{INS}}(\sigma): \mathbf{Mod}_{\mathcal{INS}}(\Sigma') \rightarrow \mathbf{Mod}_{\mathcal{INS}}(\Sigma)$ by \[\_\rvert_{\sigma}: \mathbf{Mod}_{\mathcal{INS}}(\Sigma') \rightarrow \mathbf{Mod}_{\mathcal{INS}}(\Sigma)\] to indicate the model reduct along a signature morphism $\sigma: \Sigma \rightarrow \Sigma'$. Model reducts are central to the definition of an institution as they allow us to consider a model over one signature as a model over another via a signature morphism.

Institutions have been defined for many different types of logic~\cite[\S 4.1]{sanella_foundations_2012}.  One example is the institution of first-order predicate logic with equality, {\iFOPEQ}:

\begin{defi}[{\iFOPEQ}]%
\label{def:fopeq}
The institution for first-order predicate logic with equality, {\iFOPEQ}, is composed of the following~\cite[\S 4.1]{sanella_foundations_2012}~\cite{farrell_institution_2017,farrell_phdthesis}:

\begin{description}
\item[Signatures] Tuples of the form, $\Sigma_{{\iFOPEQ}} = \langle S, \Omega, \Pi \rangle$, where $S$ is a set of sort names, $\Omega$ is a set of operation names indexed by arity and sort, and $\Pi$ is a set of predicate names indexed by arity. Signature morphisms are sort/arity-preserving functions that rename sorts, operations and predicates.

\item[Sentences] For any $\Sigma_{{\iFOPEQ}} = \langle S, \Omega, \Pi \rangle$, $\Sigma_{{\iFOPEQ}}$-sentences are closed first-order formulae built out of atomic formulae using $\wedge, \vee, \lnot, \Rightarrow, {\iff}, \exists, \forall$. Atomic formulae are equalities between $\langle S, \Omega \rangle$-terms, predicate formulae of the form $p(t_1,\ldots, t_n)$ where $p \in \Pi$ and $t_1, \ldots, t_n$ are terms (with variables), and the logical constants \texttt{true} and \texttt{false}.

\item[Models] Given a signature $\Sigma_{\iFOPEQ} = \langle S, \Omega, \Pi \rangle$,
a model over {\iFOPEQ} consists of a carrier set $\lvert A\rvert _s$ for each sort
name $s \in S$, a function $f_A: \lvert A \rvert _{s_1} \times \dotsb \times
\lvert A\rvert _{s_n} \rightarrow \lvert A\rvert _s$ for each operation name $f \in
\Omega_{s_1 \dotsc s_n,s}$ and a relation $p_A \subseteq \lvert A\rvert _{s_1}
\times \dotsb \times \lvert A \rvert _{s_n}$ for each predicate name $p \in
\Pi_{s_1\dotsb s_n}$, where $s_1,\dotsc,s_n$, and $s$ are sort names.

\item[Satisfaction Relation] The satisfaction relation in {\iFOPEQ} is the usual satisfaction of
first-order sentences by first-order structures. This amounts to evaluating the truth of the {\iFOPEQ}-formula using the values in the carrier sets and the interpretation of operation and predicate names supplied by the models.

\end{description}

\end{defi}

\noindent
We will use {\iFOPEQ} in Section~\ref{sec:iEVT} as part of the
definition of our Event-B institution.

\subsection{Writing Specifications over an Institution}
Formally, a specification document written in a specification language is a set of sentences from that language, known as a \textit{presentation}, which is the basic unit of specification over an institution.

\begin{defi}[Presentation]%
\label{def:pres}
For any signature $\Sigma$, a $\Sigma$-presentation (sometimes called a \textit{flat specification}) is a pair $\langle \Sigma, \Phi \rangle$ where $\Phi \subseteq \textbf{Sen}(\Sigma)$. $M \in \lvert \textbf{Mod}(\Sigma)\rvert$ is a model of a $\Sigma$-presentation $\langle \Sigma, \Phi \rangle$  if $M \models \Phi$~\cite[\S 4.2]{sanella_foundations_2012}.
\end{defi}

\begin{table}
\centering
\begin{tabular}{ccp{9.2cm}}
\textbf{Operation} & \textbf{Format} & \textbf{Description}\\\toprule
Translation & $SP_1$ \ikw{with} $\sigma$ & Renames the signature components of $SP_1$ (e.g., sort, operation and predicate names in {\iFOPEQ}) using the signature morphism $\sigma: \Sigma_{SP_1} \rightarrow \Sigma'$. The resulting specification is the same as the original with the appropriate renamings carried out when $\sigma$ is bijective. In the case where $\sigma$ is not surjective then translation adds new symbols. If $\sigma$ is not injective then the signature morphism requires that the interpretations of the symbols identified by $\sigma$ coincide.
\begin{itemize}
\item[] $\mathit{Sig}[SP_1$ \ikw{with} $\sigma] = \Sigma'$
\item[] $\mathit{Mod}[SP_1$ \ikw{with} $\sigma]$

$= \{M' \in \lvert \textbf{Mod}(\Sigma')\rvert \ \rvert \ M'  \lvert_{\sigma} \in Mod[SP_1]\}$.
\end{itemize}\vspace{-10pt}\\
\midrule
Sum&$SP_1$ \ikw{and} $SP_2$ & Combines the specifications $SP_1$ and $SP_2$. It is the most straightforward way of combining specifications with different signatures. This is achieved by forming a specification with a signature corresponding to the union of the signatures of $SP_1$ and $SP_2$. Translation should be used before sum to avoid any unintended name clashes.
\begin{itemize}
\item[] $\mathit{SP_1}$ \ikw{and} $\mathit{SP_2} = (\mathit{SP_1}$ \ikw{with} $\iota)$ \ikw{$\cup$} $(\mathit{SP_2}$ \ikw{with} $\iota')$
\end{itemize}
where $\mathit{Sig}[SP_1] = \Sigma$,
$\mathit{Sig}[SP_2] = \Sigma'$,
$\iota: \Sigma \xhookrightarrow{} \Sigma \cup \Sigma'$, $\iota': \Sigma' \xhookrightarrow{} \Sigma \cup \Sigma'$ and \ikw{$\cup$} is applied to specifications ($SP_3$ and $SP_4$) over the same signature ($\Sigma''$) as follows
\begin{itemize}
\item[]  $\mathit{Sig}[SP_3$ \ikw{$\cup$} $SP_4] = \Sigma''$
\item[]  $\mathit{Mod}[SP_3$ \ikw{$\cup$} $SP_4] = \mathit{Mod}[SP_3] \cap \mathit{Mod}[SP_4]$.
\end{itemize}\vspace{-10pt}\\
\midrule
Enrichment&$SP_1$ \ikw{then} \ldots & Extends the specification $SP_1$ by adding new sentences after the \ikw{then} specification-building operator. This operator can be used to represent superposition refinement of Event-B specifications by adding new variables and events. The resulting signature and model construction is similar to \ikw{and}. \\
\midrule
Hiding&$SP_1$ \ikw{hide via} $\sigma$ & Hiding via the signature morphism $\sigma$ allows viewing a specification, $SP_1$, as a specification containing only the signature components of another specified by the signature morphism $\sigma: \Sigma \rightarrow \Sigma_{SP_1}$. This is useful for representing refinement as it allows a concrete specification to be viewed as one written using only those signature items supported by its corresponding abstract specification.
\begin{itemize}
\item[] $\mathit{Sig}[SP_1$ \ikw{hide via} $\sigma] = \Sigma$
\item[] $\mathit{Mod}[SP_1$ \ikw{hide via} $\sigma] = \{ \ M \lvert_{\sigma} \ \lvert \ M \in Mod[SP_1]\}$.
\end{itemize}\vspace{-10pt}\\
\end{tabular}
\caption{The basic institution-theoretic specification-building
  operators that can be used to modularise specifications in a
  formalism-independent manner. Here $SP_1$ and $SP_2$ denote
  specifications written over some institution, and $\sigma$ is a
  signature morphism in the same institution.}%
\label{table: sbo}
\end{table}

To support modularisation, \textit{structured specifications} are then built up from flat specifications using the specification-building operators that are available in the theory of institutions~\cite[\S 5]{sanella_foundations_2012}. These generic specification-building operators are summarised briefly in Table~\ref{table: sbo}, and we will demonstrate their use in Event-B in Section~\ref{sec:refmod}. Note that the intended semantics of a
  specification $SP$ is given by its signature, denoted $Sig[SP]$, and its
  class of models, $Mod[SP] \subseteq Mod(Sig[SP])$.

\subsection{Heterogeneous specification: connecting institutions}
Just as we can combine flat and structured specifications within an institution using the specification-building operators, we can combine specifications from different specification languages by defining a mapping between their institutions. This supports the software engineering notion of separation of concerns, allowing us to model different aspects of a specification using different formalisms, each of which is more suited to specifying particular aspects of the system.
An institution \textit{comorphism} allows the embedding of a primitive institution into a more complex one, and we will use it in Section~\ref{sec:comor} to define a relationship between {\iFOPEQ} and {\iEVT}.

\begin{defi}[\textbf{Institution Comorphism}]%
\label{def:inscomor}
An institution comorphism $\rho:
\mathbf{INS}\rightarrow\mathbf{INS'}$  has three components:
\begin{itemize}
\item A functor $\rho^{Sign}: \mathbf{Sign} \rightarrow
  \mathbf{Sign'}$.
\item A natural transformation $\rho ^{Sen}:
\mathbf{Sen}\rightarrow \rho ^{Sign}; \mathbf{Sen'}$, that is,
for each $\Sigma \in \lvert\mathbf{Sign}\rvert$, a function
$\rho^{Sen}_{\Sigma}: \mathbf{Sen}(\Sigma) \rightarrow
\mathbf{Sen'}(\rho^{Sign}(\Sigma))$.
\item A natural transformation $\rho^{Mod}:(\rho^{Sign})^{op};\mathbf{Mod'}\rightarrow \mathbf{Mod} $, that is, for each $\Sigma \in \lvert\mathbf{Sign}\rvert$, a functor $\rho^{Mod}_{\Sigma}:\mathbf{Mod'}(\rho^{Sign}(\Sigma))\rightarrow \mathbf{Mod}(\Sigma)$.
\end{itemize}

\noindent Further, an institution comorphism must ensure that for any signature $\Sigma
\in \lvert\mathbf{Sign}\rvert$, the translations $\rho^{Sen}_{\Sigma}$
of sentences and $\rho^{Mod}_{\Sigma}$ of models preserve the
satisfaction relation, that is, for any $\psi \in
\mathbf{Sen}(\Sigma)$ and $M' \in
\lvert\mathbf{Mod}(\rho^{Sign}(\Sigma))\rvert$ we have
\[
\rho^{Mod}_{\Sigma}(M') \models_{\Sigma} \psi
\quad\iff\quad
M' \models '_{\rho^{Sign}(\Sigma)}\rho^{Sen}_{\Sigma}(\psi)
\]
and the following diagrams in $\mathbf{Sen}$ and $\mathbf{Mod}$ commute
for each signature morphism in $\mathbf{Sign}$~\cite[\S 10.4]{sanella_foundations_2012}:
\[
\begin{tikzcd}
\Sigma_1 \arrow[d, "\sigma "]\\ \Sigma_2
\end{tikzcd}
\;\;
\begin{tikzcd}
\mathbf{Sen}(\Sigma_1) \arrow[r, "\rho^{Sen}_{\Sigma_1} "] \arrow[d, "\mathbf{Sen}(\sigma)"']
& \mathbf{Sen'}(\rho^{Sign}(\Sigma_1)) \arrow[d, "\mathbf{Sen'}(\rho^{Sign}(\sigma))" ] \\
\mathbf{Sen}(\Sigma_2) \arrow[r, "\rho^{Sen}_{\Sigma_2} "']
& \mathbf{Sen'}(\rho^{Sign}(\Sigma_2))
\end{tikzcd}
\;\;
\begin{tikzcd}
\mathbf{Mod'}(\rho^{Sign}(\Sigma_2)) \arrow[r, "\rho^{Mod}_{\Sigma_2} "] \arrow[d, "\mathbf{Mod'}(\rho^{Sign}(\sigma))"']
& \mathbf{Mod}(\Sigma_2) \arrow[d, "\mathbf{Mod}(\sigma)" ] \\
\mathbf{Mod'}(\rho^{Sign}(\Sigma_2)) \arrow[r, "\rho^{Mod}_{\Sigma_1} "']
& \mathbf{Mod}(\Sigma_1)
\end{tikzcd}
\]
\end{defi}

In summary, this section has outlined the basic definitions that will
be needed to understand our construction of the Event-B institution
that will be presented in Section~\ref{sec:iEVT}.

%%%%%%%%%%%%%%%%%%%%%%%%%%%%%%%%%%%%%%%%%%%%%%%%%%%%%%%%%%%%%%%%%%%%%%%%%%%%
\section{An Overview of the Event-B Specification Language}%
\label{sec:bg}
%%%%%%%%%%%%%%%%%%%%%%%%%%%%%%%%%%%%%%%%%%%%%%%%%%%%%%%%%%%%%%%%%%%%%%%%%%%%

In this section, we describe the prerequisite background material that is pertinent to this paper. In particular, Section~\ref{sec:eb} briefly outlines the Event-B formal specification language and  Section~\ref{sec:ebexample} introduces a simple example that we will refer to throughout this document. Finally, Section~\ref{sec:ebrelsem} discusses existing approaches to defining a semantics for Event-B.

\subsection{The Syntax for Event-B}%
\label{sec:eb}

\begin{figure}
\input eb_abstract_syntax
\caption{The Event-B syntax, consisting of the definition of machines,
  contexts and events.  This syntax does not define the
  non-terminals \textit{Predicate} and \textit{Expression}, since these
  will be mapped to {\iFOPEQ}-formulae and terms respectively in our
   semantics.}%
\label{fig1: syntax}
\end{figure}
Event-B evolved from the B Method, often termed Classical-B, as a formal method for the  specification and verification of systems using a set-theoretic notation~\cite{schneider2001b}.
The abstract syntax for Event-B is described briefly in~\cite{abrial_modeling_2010} and we provide a more detailed version in Figure~\ref{fig1: syntax}.   An Event-B specification consists of a set of \textit{machine} and \textit{context} definitions.
\begin{itemize}
\item \textit{Machines} encapsulate the dynamic behaviour of a system in terms of a state space, represented by a set of variables and constrained by the machine-level invariants.  A machine contains a set of events, which can be parameterised by values, and are further constrained by the  predicates in the guards, witnesses and actions. Guards are predicates representing the conditions under which a given event may be triggered. Witnesses are used to relate abstract and refined event parameters. Actions are before-after predicates describing the updates to the state that are caused by the corresponding event.

Events are also equipped with a status which is either \texttt{ordinary}, \texttt{anticipated} or \texttt{convergent} and this is used for proving termination properties with respect to a variant expression defined in the machine. Specifically, \texttt{anticipated} events must not increase the variant expression, whilst \texttt{convergent} events must strictly decrease the variant expression.

\item \textit{Contexts} model the static behaviour of a system using sets and constants which can be constrained by axioms.
\end{itemize}

\noindent
Both machines and contexts allow the user to specify \textit{theorems} which are used to generate a specific set of proof obligations. Since these must be consequences of the specification and do not add any additional constraints, we omit them from further discussion here.

In Figure~\ref{fig1: syntax} we do not give a definition for the non-terminals \textit{predicate} and \textit{expression}, which we intend to be understood as representing the corresponding concepts from first-order logic.  We will ultimately delegate the semantics of these constructs in Section~\ref{sec:interface} to {\iFOPEQ}, the institution for first-order predicate logic with equality.
Thus we immediately distinguish two basic sub-languages in Event-B:\@ the
Event-B \textit{mathematical language}, consisting of predicate logic, set
theory and arithmetic, and the Event-B \textit{modelling language} which uses these in the definition of events, machines and contexts.

In the spirit of the institutional approach, we can further decompose the Event-B modelling language into two sub-languages which (inspired by an early
version of the UML specification~\cite{uml_infrastructure_2011,uml_superstructure_2011}) we call the
\textit{infrastructure language} and the \textit{superstructure
language}. The infrastructure language covers those concepts which are internal to a given context or machine, while the superstructure
language includes the specification-building operators in Event-B.
These constructs allow one machine to \textit{refine} another, or to \textit{see} a context, and allow a context to \textit{extend} other contexts.
It is our thesis that these Event-B operators can be represented and substantially extended using the generic specification-building operators that are part of the institutional framework, and we show how this can be achieved in Section~\ref{sec:super}.

In summary, we divide the constructs of the Event-B language
into three layers, each layer corresponding to one of its three
constituent sub-languages, as illustrated in Figure~\ref{fig: ebstruc}.
The right hand side of Figure~\ref{fig: ebstruc} shows the
institutional constructs that we will use to define the semantics of each of these constituent sub-languages. This three-layer model plays a key role in structuring the definitions of the functions given in Section~\ref{sec:transsemantics}.

\begin{figure}
\hspace{-15pt}
\begin{tikzpicture}[block/.style={
  text width=0.3*\columnwidth
  }]
\matrix [column sep=2mm, row sep=9mm] {
  \node[block, align = center] (super) [] {Modelling Superstructure}; &
  \node[block, align = center] (supc) [draw, shape=rectangle] {refines, sees}; &
  \node[block, align = center] (supsbo) [] {{\iEVT} specification-building operators}; \\
  \node[block, align = center] (ebinf) [] {Modelling Infrastructure}; &
  \node[block, align = center](ebinfc) [draw, shape=rectangle] {variables, invariants,\\ variants, events}; &
  \node[block, align = center](evtsen) [] {{\iEVT}-sentences}; \\
  \node[block, align = center](fopeq) [] {Mathematical Language}; &
  \node[block, align = center](fopeqc) [draw, shape=rectangle] {carrier sets, constants,\\ axioms, extends}; &
  \node[block, align = center] (fopeqsbo) [] {{\iFOPEQ}-sentences and specification-building operators}; \\
};
\draw[->, thick] (ebinfc) -- (supc);
\draw[->, thick] (fopeqc) -- (ebinfc);

\end{tikzpicture}
\caption{We split the Event-B syntax, defined in Figure~\ref{fig1:
    syntax}, into three sub-languages: superstructure, infrastructure
  and a mathematical language.  This figure shows the corresponding
  Event-B constructs, and their representation in our semantics.}%
\label{fig: ebstruc}
\end{figure}

\subsection{A running example illustrating Event-B specification and refinement}%
\label{sec:ebexample}

We illustrate the use of Event-B via a running example of \textit{cars on a bridge} that was originally presented in Abrial's book on modelling in Event-B~\cite[Ch.\ 2]{abrial_modeling_2010}. Figure~\ref{fig:ebm0} contains an Event-B specification corresponding to the first abstract machine in this development. It describes the behaviour of cars entering and leaving the mainland. This abstract model is comprised of a context (lines 1--7) specifying a natural number constant $d$ representing the total number of cars, and a basic machine description (lines 8--29) which has a counter $n$ for the number of cars on the island or on the bridge, along with events to initialise the number of cars, and to increase or decrease the counter as cars leave or enter. In this example, the bridge is only wide enough for one car so it can only be used by cars travelling in one direction at any given time.

\begin{figure}
\begin{programsc}
\CONTEXT{cd}
  \CONSTANTS
    d
  \AXIOMS
    \bLabel{axm1}{d\in\mathbb{N}}
    \bLabel{axm2}{d>0}
\END

\MACHINE{m0}
  \SEES{cd}
  \VARIABLES
    n
  \INVARIANTS
    \bLabel{inv1}{n\in\mathbb{N}}
    \bLabel{inv2}{n \leq d}
%    \bLabel{inv3}{n<0 \lor n<d}
  \EVENTS
    \INITIALISATION
      \bkw{then}
        \bLabel{act1}{n := 0}
    \EVT{ML\_out}{ordinary}
      \bkw{when}
        \bLabel{grd1}{n<d}
      \bkw{then}
        \bLabel{act1}{n := n+1}
    \EVT{ML\_in}{ordinary}
      \bkw{when}
        \bLabel{grd1}{n>0}
      \bkw{then}
        \bLabel{act1}{n := n-1}
\END
\end{programsc}
\caption{An Event-B specification of an abstract machine \texttt{m$0$}
  and context \texttt{cd}. This machine specification consists of the events
  \texttt{ML\_out} and \texttt{ML\_in} that model the behaviour of
  cars leaving and entering the mainland respectively.}%
\label{fig:ebm0}

\end{figure}

\begin{figure}

\begin{programsc}
\MACHINE{m1}
  \REFINES{m0}
  \SEES{cd}
  \VARIABLES
    a, b, c
  \INVARIANTS
    \bLabel{inv1}{a\in\mathbb{N}}{//\textit{on bridge to island}}
    \bLabel{inv2}{b\in\mathbb{N}}{//\textit{on island}}
    \bLabel{inv3}{c\in\mathbb{N}}{//\textit{on bridge to mainland}}
    \bLabel{inv4}{n = a+b+c}
    \bLabel{inv5}{a=0 \lor c=0}
    \bLabel{thm1}{a+b+c \in \mathbb{N}} theorem
    \bLabel{thm2}{c>0 \lor a>0 }
           $\lor (a+b < d \land c=0)$
           $\lor (0<b \land a=0)$ theorem
  \VARIANT 2*a + b
  \EVENTS
    \INITIALISATION
      \bkw{then}
        \bLabel{act2}{a := 0}
        \bLabel{act3}{b := 0}
        \bLabel{act4}{c := 0}
    \EVT{ML\_out}{ordinary}
      \REFINES{ML\_out}
      \bkw{when}
        \bLabel{grd1}{a+b<d}
        \bLabel{grd2}{c=0}
      \bkw{then}
        \bLabel{act1}{a := a+1}
    \EVT{IL\_in}{convergent}
      \bkw{when}
        \bLabel{grd1}{a>0}
      \bkw{then}
        \bLabel{act1}{a := a-1}
        \bLabel{act2}{b := b+1}
    \EVT{IL\_out}{convergent}
      \bkw{when}
        \bLabel{grd1}{0<b}
        \bLabel{grd2}{a=0}
      \bkw{then}
        \bLabel{act1}{b := b-1}
        \bLabel{act2}{c := c+1}
    \EVT{ML\_in}{ordinary}
      \REFINES{ML\_in}
      \bkw{when}
        \bLabel{grd1}{c>0}
      \bkw{then}
        \bLabel{act2}{c := c-1}
\END
\end{programsc}

\caption{An Event-B specification of a machine \texttt{m1} with
  additional events \texttt{IL\_in} and \texttt{IL\_out} to model the
  behaviour of cars entering and leaving the island. The variables
  \texttt{a}, \texttt{b}, and \texttt{c} keep track of the number of
  cars on the bridge going to the island, the number of cars on the
  island and the number of cars on the bridge going to the mainland,
  respectively.}%
\label{fig:ebm1}
\end{figure}

\paragraph{Refinement.}
Event-B supports the process of refinement allowing the system specification to be gradually constructed from an abstract specification into a concrete specification~\cite{abrial_refinement_2007}.
The first refinement step is represented by giving a specification of a concrete machine \texttt{m1}, shown in Figure~\ref{fig:ebm1}, which is declared to refine \texttt{m0} (line 2).  New variables are added to record the number of cars on the bridge going to the island, the number of cars on the island and the number of cars on the bridge going to the mainland (\texttt{a}, \texttt{b} and \texttt{c} respectively on line 5) and the invariants on lines 7--15 ensure that these function as expected.

The abstract machine, \texttt{m0}, is refined by \texttt{m1} in two ways.  First, new events are added to describe cars entering and leaving the island (\texttt{IL\_out} and \texttt{IL\_in} on lines 30--42). Both of these events are defined to be \texttt{convergent}, meaning that they must decrease the value of variant expression defined on line 16.  Second, the events \texttt{ML\_in} and \texttt{ML\_out} from \texttt{m0} are refined to utilise the new variables, \texttt{a}, \texttt{b} and \texttt{c}; their relationship with the abstract events from \texttt{m0} is noted explicitly on lines 24 and 44.

\subsection{Existing approaches to Event-B semantics}%
\label{sec:ebrelsem}
As noted earlier, the only semantic definitions that exist for Event-B have been given in terms of proof obligations which, due to their abstract nature, can inhibit the formal definition of modularisation constructs and interoperability. We discuss these approaches below.

Proof support for Event-B is provided by its Eclipse-based IDE, the {\Rodin} Platform~\cite{abrial2010rodin}. For each of the specification components listed in Figure~\ref{fig1: syntax}, the {\Rodin} Platform  generates a series of proof obligations which can then be proved automatically and/or interactively using its proving interface.
{\Rodin} supports refinement by reusing proofs carried out in an abstract model to prove properties about the concrete model.

This notion of associating a specification with the set of proof obligations it generates can be seen as a form of semantics for those specifications, and this is the approach taken by Hallerstede in his semantics for Event-B~\cite{hallerstede_purpose_2008}.  Note that this is not a property-oriented specification, since proof obligations should be theorems of the underlying mathematical context, not axioms that could be used to specify a set of models or define a theory. In essence, the proof obligations are used to characterise rather than fix the semantics of Event-B models. By not associating Event-B with any behavioural model, this approach has the advantage of making an Event-B specification independent of any model of computation.  Hallerstede asserts that this facilitates interoperability, since there is no predetermined semantic model to complicate the mapping.

However, this approach also has some disadvantages.  For larger specifications it offers no explicit name-management, and has no formal concept of modularisation.  For example, the semantics of an Event-B machine at the end of a long refinement chain would be a large series of proof obligations collected along the way from different machines and contexts, and with no explicit mention of named events.
In a previous empirical study of a large corpus of Event-B specifications we have found a tendency for monolithic development, with repeated sub-specifications analogous to software code clones~\cite{farrell_clones_2017}.
We hypothesise that a more fully developed set of specification-building operators would facilitate the construction of large Event-B specifications in a more modular manner.

Abrial presented a model of Event-B machines in terms of states (variable-to-value mappings) and traces~\cite[Ch.\ 14]{abrial_modeling_2010}.  While Abrial intends these as a justification for the proof obligations, we have used a version of these models in our construction of the Event-B institution in Section~\ref{sec:evtmod}, and we thus define the semantics in this context.
Since the resulting proof obligations will then be part of the theory associated with the model, we anticipate that our approach will be complementary to that of Hallerstede.

There have been other approaches to defining a semantics for elements of the Event-B language, principally in the context of mapping it to other formalisms.  It has been shown that a shallow embedding of the Event-B logic in Isabelle/HOL  provides a sound semantics for the mathematical language of Event-B but not for the full modelling language~\cite{schmalz_logic_2010}. Moreover, Event-B is a special form of Back's Action systems which has been formally defined using Dijkstra's weakest precondition predicate transformers~\cite{back1990refinement}. This accounts for the refinement calculus used but not for a full semantic definition of the Event-B modelling language itself.

Other related work includes a CSP semantics for Event-B refinement~\cite{schneider_behavioural_2014}. This behavioural semantics defines the relationship between the first and last machines in a refinement chain but does not give a semantics for the internal components of these machines.
In previous work, we presented an institution-theoretic approach to  interoperability between Event-B (using the {\iEVT} institution) and CSP (using the {\iCSPCASL} institution)~\cite{farrell_csp_2017}. This work supported heterogeneous specification by constraining the models of {\iEVT}-specifications by the models of {\iCSPCASL} ones.

The approach presented in this paper treats Event-B as a complete language rather than as a calculus whose semantics can be given in terms of proof obligations.  Our semantics commits to the theory of institutions, and gets in return a standard set of specification-building operators, along with a generic  framework for interoperability with other formalisms, albeit at the price of including them in the institutional framework.

%%%%%%%%%%%%%%%%%%%%%%%%%%%%%%%%%%%%%%%%%%%%%%%%%%%%%%%%%%%%%%%%%%%%%%%%%%%%
\section{\texorpdfstring{${\iEVT}$}{EVT}: The Institution for Event-B}%
\label{sec:iEVT}
%%%%%%%%%%%%%%%%%%%%%%%%%%%%%%%%%%%%%%%%%%%%%%%%%%%%%%%%%%%%%%%%%%%%%%%%%%%%

This section introduces the {\iEVT} institution, which will play the role of the semantic domains for our semantics: that is, our semantics will map Event-B specifications into structured specifications in the {\iEVT} institution.  The definitions in this section then ensure that these structured specifications have a meaningful semantics.

To define the {\iEVT} institution, Sections~\ref{sec:evt} through~\ref{sec:evtsat} give definitions for its basic elements corresponding to Definition~\ref{def:ins}, namely
$\mathbf{Sign_{{\iEVT}}}$, $\mathbf{Sen_{{\iEVT}}}$, $\mathbf{Mod_{{\iEVT}}}$ and the satisfaction relation, $\models_{{\iEVT}}$.  We then prove that the institution-theoretic satisfaction condition holds for {\iEVT}.

A structured specification in {\iEVT} represents an Event-B machine.
We handle the Event-B contexts using the institution for first-order predicate logic with equality ({\iFOPEQ} in Definition~\ref{def:fopeq}), and in Section~\ref{sec:comor} we define an institution comorphism from {\iFOPEQ} to {\iEVT} to enable this.   Finally, in Section~\ref{sec:push} we show that {\iEVT} exhibits good behaviour with respect to modularisation by proving properties relating to pushouts and amalgamation in {\iEVT}.  This allows us to use the generic specification-building operators of institutions in {\iEVT}, and we discuss the pragmatics of this in Section~\ref{sec:prag}.

This work was originally presented in~\cite{farrell_institution_2017} and it forms the basis of our definition of a translational semantics in the subsequent sections. %and we extend it here with a new kind of sentence in the {\iEVT} institution to more explicitly capture the notion of Event-B invariant sentence and a corresponding satisfaction condition.

\subsection{\texorpdfstring{$\mathbf{Sign}_{\iEVT}$}{Sign-EVT}, the Category of \texorpdfstring{$\iEVT$}{EVT}-signatures}%
\label{sec:evt}
Signatures describe the vocabulary that specifications written in a particular institution can have. Here, we define what it means to be an {\iEVT}-signature (Definition~\ref{def:evtsig}) and an {\iEVT}-signature morphism (Definition~\ref{def:evtsigmor}). We show that $\mathbf{Sign}_{{\iEVT}}$ is indeed a category (Lemma~\ref{lem:signevt}) which is a necessary requirement for the definition of the {\iEVT} institution.
\begin{defi}[\textbf{{\iEVT}-Signature}]%
\label{def:evtsig}
A \textbf{signature} in {\iEVT} is a five-tuple of sets
$\Sigma_{{\iEVT}} = \langle S, \Omega, \Pi, E, V \rangle$ where
\begin{itemize}
\item The first three components form a standard {\iFOPEQ}-signature, $\langle S, \Omega, \Pi \rangle$. %, where $S$ is a set of sort names, $\Omega$ is a set of operation names and $\Pi$ is a set of predicate names. The latter two are indexed appropriately by sort/arity.
\item $E$ is a function from event names to their status. We write elements of
  this function as tuples of the form $(e\mapsto s)$, for an event named $e$
  with a status $s$ which always belongs to the
  poset $\{\texttt{ordinary} < \texttt{anticipated} < \texttt{convergent}\}$.
\item $V$ is a set of sort-indexed variables.  We write elements of
  this set as tuples of the form $(v:s)$, for a variable named $v$
  with sort $s \in S$.

  In theory it is possible for two variables of different sorts to
  share the same name, but we advise against this as it may lead to
  confusing specifications.  We assume that $V$ does not
  contain any \emph{primed} variable names and that such names are not
  introduced via any signature morphisms.
\end{itemize}
Following the convention for Event-B~\cite{abrial_modeling_2010}, we
assume that every signature has an initial event, called
\texttt{Init}, whose status is always \texttt{ordinary}.

\end{defi}

\noindent\emph{Notation:}
We write $\Sigma$ in place of $\Sigma_{{\iEVT}}$ when describing a
signature over our institution for Event-B. For signatures
over institutions other than {\iEVT} we will use the subscript
notation where necessary; e.g., a signature over {\iFOPEQ} is denoted by
$\Sigma_{{\iFOPEQ}}$.  For a given signature $\Sigma$, we access its
individual components using a dot-notation, e.g.\ $\Sigma.S$ for the
set $S$ in the tuple $\Sigma$.
For any signature, $\Sigma.E$ is a function from event names to their
status, and we will access the domain of this function, $dom(\Sigma.E)$ when
we wish to deal with event names only.

\begin{defi}[\textbf{{\iEVT}-Signature Morphism}]%
\label{def:evtsigmor}
A \textbf{signature morphism} $\sigma: \Sigma \rightarrow \Sigma'$ is a five-tuple containing $\sigma_S$, $\sigma_\Omega$, $\sigma_\Pi$, $\sigma_E$ and $\sigma_V$.  Here $\sigma_S$, $\sigma_\Omega$, $\sigma_\Pi$ are the mappings taken from the corresponding signature morphism in {\iFOPEQ}, as follows % with composition, associativity of composition and identities as in {\iFOPEQ}.
\begin{itemize}
\item $\sigma_S: \Sigma.S \rightarrow \Sigma'.S$ is a function mapping sort names to sort names.
\item $\sigma_\Omega: \Sigma.\Omega \rightarrow \Sigma'.\Omega$ is a family of functions mapping operation names in $\Sigma.\Omega$, respecting the arities and result sorts.
\item $\sigma_\Pi: \Sigma.\Pi \rightarrow \Sigma'.\Pi$ is a family of functions mapping the predicate names in $\Sigma.\Pi$, respecting the arities.
\end{itemize}
and
\begin{itemize}
\item $\sigma_E: \Sigma.E \rightarrow \Sigma'.E$ is a function on
  event names and their status with the condition that
  for the initial event $\sigma_E ( \texttt{Init}\mapsto
  \texttt{ordinary} ) =  (\texttt{Init}\mapsto \texttt{ordinary} )$, while
  for non-initial events $\sigma_E$ preserves the status ordering given by the poset
  $\{\mathtt{ordinary} < \mathtt{anticipated} <
  \mathtt{convergent}\}$.

\item $\sigma_V: \Sigma.V \rightarrow \Sigma'.V$ is a function on
  sets of variable names whose sorts change according to $\sigma_S$. %, working similarly to the
  %sort-preserving mapping for constant symbols, $\sigma_\Omega$ as outlined above.
  Thus if some variable $v$ is renamed to $w$, we must have that
$\sigma(v:s) = (w: \sigma_S(s))$,
  where  $\sigma_S$ is the sort mapping as described above.

 % More explicitly, $\sigma_V$ is made up of a renaming function $\sigma_{V_R}$ which maps variable names to variable names and the sort morphism $\sigma_S$ as described above. This can be written as follows for some $s \in S$.$ \;\; \;\;\;\;\;\;\sigma_V((v_n)_{n \in \nat})  =   \sigma_V(((v:s)_n)_{n \in \nat}) =  ((\sigma_{V_R}(v): \sigma_S(s))_n)_{n \in \nat}$
\end{itemize}
\end{defi}
\begin{restatable}{lem}{signevt}%
\label{lem:signevt}
{\iEVT}-signatures and signature morphisms define a category $\mathbf{Sign}_{\iEVT}$. The objects are signatures and the arrows are signature morphisms. Composition of arrows is performed componentwise.
\end{restatable}

%\begin{proof}
%See Appendix.
%\end{proof}
In this section, we have defined the category of {\iEVT}-signatures, $\mathbf{Sign}_{{\iEVT}}$. Next, we define the functor $\mathbf{Sen}_{{\iEVT}}$ that yields {\iEVT}-sentences.

\subsection{The Functor \texorpdfstring{$\mathbf{Sen}_{\iEVT}$}{Sen-EVT}, yielding \texorpdfstring{${\iEVT}$}{EVT}-sentences}%
\label{sec:evtsen}
The second component of an institution is a functor called $\mathbf{Sen}$ that generates a set of sentences over a particular signature (Definition~\ref{def:ins}).
Here, we define what is meant by a $\Sigma_{\iEVT}$-Sentence (Definition~\ref{def:evtsen}) and we prove that $\mathbf{Sen}_{\iEVT}$ is indeed a functor (Lemma~\ref{lem:sen}).

\begin{defi}[\textbf{$\Sigma_{\iEVT}$-Sentence}]%
\label{def:evtsen}

Sentences over {\iEVT} are of the form $\langle e, \phi(\overline{x}, \overline{x}\prime) \rangle$. Here, $e$ is an event name in the domain of $\Sigma.E$ (recall that elements of $\Sigma.E$ are functions containing tuples of the form $(e \mapsto \texttt{status})$) and $\phi(\overline{x}, \overline{x}\prime)$ is an open {\iFOPEQ}-formula over these variables $\overline{x}$ from $\Sigma.V$ and the primed versions, $\overline{x}\prime$, of the variables.

%\item[II.] $\langle inv, \phi(\overline{x}, \overline{x}\prime) \rangle$\\
% This kind of sentence is used to represent invariants. Here, $inv$ is the tag used for sentences that define invariants and  $\phi(\overline{x}, \overline{x}\prime)$ is as above.

\end{defi}

In the {\Rodin} Platform, Event-B machines are presented (syntactically sugared) as can be seen in abstract form on the left of Figure~\ref{fig:abseb}. The set of variables in a machine is denoted by $\overline{x}$ (line 2) and the invariants are denoted by $I(\overline{x})$ on line 3. The {\iEVT}-signature derived from this machine would look like: \[\Sigma = \langle S, \Omega, \Pi, \{(
\mathtt{Init}\mapsto \mathtt{ordinary}),  \ldots, (\mathtt{e}_i \mapsto \mathtt{convergent}), \ldots\}, \{\overline{x}\}\rangle\] where the {\iFOPEQ}-component of the signature, $\langle S, \Omega, \Pi \rangle$, is drawn from the ``seen'' context(s) on line 1.
\begin{figure}

\begin{minipage}{0.47\textwidth}
\begin{programs}
\MACHINE{m} \REFINES{a} \SEES{ctx}
  \VARIABLES $\overline{x}$
  \INVARIANTS $I(\overline{x})$
  \VARIANT $n(\overline{x})$
  \EVENTS
  \INITIALISATION \texttt{ordinary}
    \bkw{then} \bLabel{act-name}{BA(\overline{x}\sp{\prime})}
   \vdots
  \EVT{$e_i$} \texttt{convergent}
    \bkw{any} $\overline{p}$
    \bkw{when} \bLabel{guard-name}{G(\overline{x},\overline{p})}
    \bkw{with} \bLabel{witness-name}{W(\overline{x},\overline{p})}
    \bkw{then} \bLabel{act-name}{BA(\overline{x},\overline{p},\overline{x}\sp{\prime})}
   \vdots
\bkw{END}
\end{programs}

\end{minipage}
\quad\vline\quad
\begin{minipage}{0.47\textwidth}
\begin{programs}*
$\{\langle\text{$e$}, I(\overline{x}) \land I(\overline{x}\prime)\rangle \mid e \in dom(\Sigma.E)\}$

$\langle\texttt{Init}, BA(\overline{x}\sp{\prime})\rangle$

$\langle\text{e}_i, n(\overline{x}\prime) < n(\overline{x})\rangle$

{$\langle\text{e}, \exists\overline{p} \cdot G(\overline{x},\overline{p}) \land W(\overline{x},\overline{p}) \land BA(\overline{x},\overline{p},\overline{x}\sp{\prime})\rangle$}

\end{programs}
\end{minipage}
\caption{The elements of an Event-B machine specification as presented
  in {\Rodin} (left) and the corresponding sentences in the {\iEVT}
  institution (right).  The {\iEVT}-signature is not shown here, but would
  have to be constructed before the sentences.}%
\label{fig:abseb}
\end{figure}
The variant expression, denoted by $n(\overline{x})$ on line 4, is used for proving termination properties. As described in Section~\ref{sec:eb}, events that have a status of \texttt{anticipated} or \texttt{convergent} must not increase or must strictly decrease the variant expression respectively~\cite{abrial_modeling_2010}. Each machine has an \texttt{Initialisation} event (lines 6--7) whose action is interpreted as a predicate $BA(\overline{x})$. Events can have parameter(s) as given by the list of identifiers $\overline{p}$ on line 10. $G(\overline{x}, \overline{p})$ and  $W(\overline{x},\overline{p})$ are predicates that
represent the guard(s) and witness(es) respectively over the variables and parameter(s) (lines 11--12). Actions are interpreted as before-after predicates i.e. \texttt{x := x + 1} is interpreted as \texttt{x$\prime$ = x + 1}. Thus, the predicate $BA(\overline{x},\overline{p},\overline{x}\sp{\prime})$ on line 13 represents the action(s) over the parameter(s) $\overline{p}$ and the sets of variables $\overline{x}$ and $\overline{x}\sp{\prime}$. % chktex 26

Formulae written in the mathematical language (such as the axioms that may appear in contexts) are interpreted as sentences over {\iFOPEQ}.  We can include these in specifications over {\iEVT} using the comorphism defined in Section~\ref{sec:comor}. We represent the Event-B invariant, variant and event predicates as sentences over {\iEVT}. These are illustrated alongside the Event-B predicates that they correspond to in Figure~\ref{fig:abseb}.
\begin{description}
\item[\textbf{Invariants}] For each Event-B invariant, $I(\overline{x})$, we form the
open {\iFOPEQ}-sentence $I(\overline{x}) \land
I(\overline{x}\prime)$ and pair it with each event name in $\Sigma$, since invariants apply to all events in a machine.
%Each invariant is paired with the $inv$ invariant tag.
Thus, we form the set of {\iEVT} event sentences $\{\langle e, I(\overline{x}) \land I(\overline{x}\prime)\rangle \mid e \in dom(\Sigma.E)\}$.
%When we define the {\iEVT}-satisfaction relation, $\models_{{\iEVT}}$, described in Definition~\ref{def:sat}, we show how these sentences must hold for all events in the signature.

\item[\textbf{Variants}] Firstly, we assume the existence of a suitable type for variant expressions and the usual interpretation of the predicates $<$ and $\leq$ over the integers in the signature. In particular, we assume that all specifications include a standard built-in library that defines mathematical operators, such as $<$ and $\leq$, which are axiomatised consistently with their Event-B counterparts. The variant expression applies to specific events, so we pair it with
an event name in order to meaningfully evaluate it. This
expression can be translated into an open {\iFOPEQ}-term, which we
denote by $n(\overline{x})$, and we use this to construct a formula
based on the status of the event(s) in the signature $\Sigma$.
\begin{itemize}
\item For each $( e \mapsto\texttt{anticipated}) \in \Sigma.E$
  we form the {\iEVT}-sentence $\langle e, n(\overline{x}\prime) \leq n(\overline{x})\rangle$
\item For each $( e \mapsto \texttt{convergent}) \in \Sigma.E$
  we form the {\iEVT}-sentence $\langle e, n(\overline{x}\prime) < n(\overline{x}) \rangle$
\end{itemize}

\item[\textbf{Events}] Event guard(s) and witnesses are also labelled predicates that can be
translated into open {\iFOPEQ}-formulae over the variables
$\overline{x}$ in $V$ and parameters $\overline{p}$. These are denoted
by $G(\overline{x},\overline{p})$ and $W(\overline{x},\overline{p})$
respectively.  In Event-B, actions are interpreted as before-after
predicates, thus they can be translated into open
{\iFOPEQ}-formulae denoted by $BA(\overline{x},\overline{p},
\overline{x}\prime)$. Thus for each event we build the formula
\[ \phi(\overline{x}, \overline{x}\prime) \quad=\quad \exists
\overline{p} \cdot G(\overline{x},\overline{p}) \land
W(\overline{x},\overline{p}) \land BA (\overline{x}, \overline{p},
\overline{x}\prime)\] where $\overline{p}$ are the event parameters. This
is used to build an {\iEVT}-sentence of the form $\langle e,
\phi(\overline{x}, \overline{x}\prime)\rangle$.

The
\texttt{Init} event, which is an Event-B sentence over only the after
variables denoted by $\overline{x}\prime$, is a special case. In this case, we form the {\iEVT}-sentence $\langle \texttt{Init}, \phi(\overline{x}\prime)\rangle$ where $\phi(\overline{x}\prime)$ is a predicate over the after values of the variables as assigned by the \texttt{Init} event.

\end{description}
As a further example, Figure~\ref{fig:evtsentl} contains the {\iEVT}-sentences corresponding to the abstract Event-B specification on lines 1--29 of Figure~\ref{fig:ebm0} excluding those invariants which are used for typing purposes.

\begin{figure}
\begin{center}
\begin{minipage}{0.7\textwidth}
\begin{programs}
%$\{\langle$ $e$, d $\in$ $\mathbb{N}$ $\rangle \mid e \in \{\text{Init, ML\_out, ML\_in}\}\}$
$\{\langle$ $e$, d > 0 $\rangle\mid e \in \{\text{Init, ML\_out, ML\_in}\}\}$
%$\{\langle$ $e$, n $\in$ $\mathbb{N}$ $\land$ n$\prime$ $\in$ $\mathbb{N}$ $\rangle\mid e \in \{\text{Init, ML\_out, ML\_in}\}\}$
$\{\langle$ $e$, n $\leq$ d $\land$ n$\prime$ $\leq$ d $\rangle\mid e \in \{\text{Init, ML\_out, ML\_in}\}\}$
%$\{\langle$ $e$, n < 0 $\lor$ n < d $\land$ n$\prime$ < 0 $\lor$ n$\prime$ < d $\rangle\mid e \in \{\text{Init, ML\_out, ML\_in}\}\}$

$\langle$ Init, n$\prime$ = 0 $\rangle$
$\langle$ ML\_out, n < d $\land$ n$\prime$ = n + 1 $\rangle$
$\langle$ ML\_in, n > 0 $\land$ n$\prime$ = n - 1 $\rangle$
\end{programs}

\end{minipage}
\end{center}
\caption{These are the {\iEVT}-sentences corresponding to the abstract Event-B specification as illustrated on lines 1--29 of Figure~\ref{fig:ebm0}. Note that here the constant, $d$, is interpreted as a 0-ary operator.}%
\label{fig:evtsentl}
\end{figure}

\begin{restatable}{lem}{lemsen}%
\label{lem:sen}
There is a functor $\mathbf{Sen}_{\iEVT}: \mathbf{Sign}_{\iEVT} \rightarrow \mathbf{Set}$ generating for each {\iEVT}-signature $\Sigma$ a set of {\iEVT}-sentences which is an object in the category \textbf{Set} and for each {\iEVT}-signature morphism $\sigma: \Sigma_1 \rightarrow \Sigma_2$ (morphisms in the category $\mathbf{Sign}_{\iEVT}$) a function $Sen(\sigma): Sen(\Sigma_1) \rightarrow Sen(\Sigma_2)$ (morphisms in the category \textbf{Set}) translating {\iEVT}-sentences.
\end{restatable}
%\begin{proof}
%See Appendix.
%\end{proof}
In this section, we have defined what is meant by an {\iEVT}-sentence. Next, we define {\iEVT}-models and the $\mathbf{Mod}_{{\iEVT}}$ functor.

\subsection{The Functor \texorpdfstring{$\mathbf{Mod}_{\iEVT}$}{Mod-EVT}, yielding \texorpdfstring{${\iEVT}$}{EVT}-models}%
\label{sec:evtmod}

Our
construction of {\iEVT}-models is based on Event-B mathematical models
as described by Abrial~\cite[Ch.\ 14]{abrial_modeling_2010}.  In these models
the state is represented as a sequence of variable-values and models
are defined over before and after states.
We interpret these states as sets of variable-to-value mappings in our definition
of {\iEVT}-models and so we define a $\Sigma$-state of an algebra $A$ in Definition~\ref{def:state}.
\begin{defi}[\textbf{$\Sigma$-\textit{State}}$_A$]%
\label{def:state}
For any given {\iEVT}-signature $\Sigma$ we define a
\textbf{$\Sigma$-state} of an algebra $A$ as a set of (sort appropriate) variable-to-value mappings whose domain is the set of sort-indexed variable names
$\Sigma.V$. Note that $A$ is a model for the underlying {\iFOPEQ} signature associated with $\Sigma$.
We define the set $State_{A}$ as the set of all such  $\Sigma$-states. By
``sort appropriate'' we mean that for any variable $x$ of sort $s$ in
$V$, the corresponding value for $x$ should be drawn from $\lvert A\rvert _s$, the
carrier set of $s$ given by a {\iFOPEQ}-model $A$.
\end{defi}
Essentially, a $\Sigma$-state$_A$ corresponds to valuations in the
$\Sigma_{\iFOPEQ}$-model, $A$, of the variables in $\Sigma.V$.
\begin{defi}[\textbf{$\Sigma_{{\iEVT}}$-\textbf{Model}}]%
\label{def:evtmod}
Given $ \Sigma = \langle S, \Omega, \Pi, E, V \rangle$,
$\mathbf{Mod}_{\iEVT}(\Sigma)$ provides a category of {\iEVT}-models, where an
{\iEVT}-\textbf{model} over $\Sigma$ is a tuple $\langle A, L, R \rangle$ where
\begin{itemize}
\item $A$ is a $\Sigma_{{\iFOPEQ}}$-model.
\item $L \subseteq State_{A}$ is the non-empty initialising set that provides the
states after the \texttt{Init} event. We note that trivial models are excluded as the initialising set $L$ is never empty. We can see this because even in the extreme cases where there are no predicates or variables in the $\mathtt{Init}$ event, $L$ is the singleton containing the empty map $(L=\{\{\}\}$).
\item For every event name, $e$, in the domain of $E$, denoted by $e\in dom(E)$, other than \texttt{Init}, we define $R.e \subseteq State_{A} \times State_{A}$ where for each pair of states $\langle s, s\prime\rangle$ in $R.e$, $s$ provides values for the variables $x$ in $V$, and $s\prime$ provides values for their primed versions $x\prime$.
Then
$R = \{R.e \mid e \in dom(E) \ and \ e \neq \texttt{Init}\}$.
\end{itemize}
\end{defi}

\noindent
Intuitively, a model over $\Sigma$ maps every event name $e$ in $(\Sigma.E)$ to a  set of variable-to-value mappings over the carriers corresponding to the sorts of each of the variables $x \in \Sigma.V$ and their primed versions $x\prime$.
For example, given the event \texttt{e} on the left of Figure~\ref{fig:rex}, with natural number
variable \texttt{x} and boolean variable \texttt{y}, we
construct the variable-to-value mappings on the right of Figure~\ref{fig:rex}.
\begin{figure}
\hspace{-50pt}
%\centering
\begin{minipage}{0.42\textwidth}
\begin{programs}
  \EVT{e}
    \bkw{when} \bLabel{grd1}{x<2}
    \bkw{then} \bLabel{act1}{x := x + 1}
         \bLabel{act2}{y := \texttt{false}}
\end{programs}
\end{minipage}
\quad
\begin{minipage}{0.5\textwidth}
{\[R_e = \left\{\begin{array}{llll}
  \{ \maplet{x}{0},& \maplet{y}{false},&
  \maplet{x'}{1},& \maplet{y'}{false} \},
\\
\{ \maplet{x}{0},& \maplet{y}{true},&
\maplet{x'}{1},& \maplet{y'}{false} \},
\\
\{ \maplet{x}{1},& \maplet{y}{false},&
\maplet{x'}{2},& \maplet{y'}{false} \},
\\
\{ \maplet{x}{1},& \maplet{y}{true},&
\maplet{x'}{2},& \maplet{y'}{false} \}
\end{array}\right\}\]}
\end{minipage}
\caption{The left hand side shows an example of an Event-B event, $e$,
  with natural number variable $x$ and boolean variable $y$. When
  $x<2$, the event increments the value of $x$ and changes $y$ to
  \texttt{false}. The right hand side enumerates $R_e$, the corresponding set
  of variable-to-value mappings for the event on the left.}%
\label{fig:rex}
\end{figure}
The notation used in this example
is interpreted as $\maplet {\textit{variable name}}
{\textit{value}}$ which is a data state where the
value is drawn from the carrier set corresponding to the sort of the
variable name given in $\Sigma.V$.

The proof requirement here is to show that
$\mathbf{Mod}_{{\iEVT}}(\Sigma)$ forms a category for a given
{\iEVT}-signature $\Sigma$.
\begin{restatable}{lem}{evtmodcatlem}%
\label{lem:evtmodcat} For any {\iEVT}-signature $\Sigma$ there is a category of {\iEVT}-models $\mathbf{Mod}_{\iEVT}(\Sigma)$ where the objects in the category are {\iEVT}-models and the arrows are {\iEVT}-model morphisms.
\end{restatable}

%\begin{proof}
%See Appendix.
%\end{proof}

Model reducts are central to the definition of an institution (Definition~\ref{def:ins}) and in Definition~\ref{def:evtred}, we define the {\iEVT}-model reduct. Then, in Lemma~\ref{lem:modfun}, we prove that the {\iEVT}-model reduct is a functor.

\begin{defi}[\textbf{{\iEVT}-model reduct}]%
\label{def:evtred}
The reduct of an {\iEVT}-model $M=
\langle A, L,
R \rangle$ along an {\iEVT}-signature morphism $\sigma: \Sigma
\rightarrow \Sigma'$ is given by $M\rvert _\sigma = \langle A\rvert _\sigma,
L\rvert_\sigma, R\rvert_\sigma\rangle$. Here $A\rvert _\sigma$ is the reduct of the
{\iFOPEQ}-component of the {\iEVT}-model along the {\iFOPEQ}-components of $\sigma$. $L\rvert_\sigma$ and $R\rvert _\sigma$ are based on the reduction of the states of $A$ along $\sigma$, i.e., for every $\Sigma'$-state $s$ of $A$, that is for every sorted map $s : \Sigma'.V \rightarrow \lvert A\rvert$, $s\rvert _{\sigma}$ is the map $\Sigma'.V \rightarrow \lvert A\rvert$ given by the composition $\sigma_{V} ; s$. This extends in the usual manner from states to sets of states and to relations on states.
\end{defi}

\begin{restatable}{lem}{lemmodfun}%
\label{lem:modfun}
For each {\iEVT}-signature morphism $\sigma: \Sigma_1 \rightarrow \Sigma_2$ the {\iEVT}-model reduct is a functor $\textbf{Mod}(\sigma)$ from $\Sigma_2$-models to $\Sigma_1$-models.
\end{restatable}

%Recall that each $\Sigma$-State$_{A}$ is a set of variable-to-value mappings of the form
%$$\{x_1 \mapsto a_{1}, \ldots, x_n \mapsto a_{n}\}$$
%where $x_1, \ldots, x_n \in \Sigma.V $ (Definition~\ref{def:state}).

%\begin{proof}
%See Appendix.
%\end{proof}

\begin{restatable}{lem}{lemmodfunmor}
There is a functor $\mathbf{Mod}_{\iEVT}$ yielding a category $\mathbf{Mod}(\Sigma)$ of {\iEVT}-models for each {\iEVT}-signature $\Sigma$, and for each {\iEVT}-signature morphism $\sigma: \Sigma_1 \rightarrow \Sigma_2$ a functor $\mathbf{Mod}(\sigma)$ from $\Sigma_2$-models to $\Sigma_1$-models.
\end{restatable}

%\begin{proof}
%See Appendix.
%\end{proof}
In this section, we defined {\iEVT}-models, model reducts and the $\mathbf{Mod}_{{\iEVT}}$ functor. Next, we describe the satisfaction relation, $\models_{{\iEVT}}$, in {\iEVT} and prove that {\iEVT} is a valid institution.
\subsection{The Satisfaction Relation for \texorpdfstring{${\iEVT}$}{EVT}}%
\label{sec:evtsat}
All institutions are equipped with a satisfaction relation to evaluate if a particular model satisfies a sentence. Here, we define the satisfaction relation for {\iEVT}.
In order to define the satisfaction relation for {\iEVT}, we describe
an embedding from {\iEVT}-signatures and models to
{\iFOPEQ}-signatures and models.

\smallskip\noindent
Given an {\iEVT}-signature $\Sigma = \langle S, \Omega, \Pi, E, V \rangle$ we form the following two {\iFOPEQ}-signatures:
\begin{itemize}
\item $\Sigma^{(V,V\prime)}_{{\iFOPEQ}} = \langle S, \Omega\cup V \cup V', \Pi\rangle$  where
  $V$ and $V'$ are the variables and their primed versions,
  respectively, that are drawn from the {\iEVT}-signature, and
  represented as $0$-ary operators with unchanged sort.
The intuition
here is that the set of variable-to-value mappings for the free
variables in an {\iEVT}-signature $\Sigma$ are represented as expansions of $A$
along the signature inclusion obtained by adding a distinguished $0$-ary operation
symbol to the corresponding {\iFOPEQ}-signature for each of the variables $x \in V$ and
their primed versions.

\item Similarly, for the initial state and its variables, we construct
  the signature $\Sigma^{(V')}_{{\iFOPEQ}}= \langle S,
  \Omega\cup V', \Pi\rangle$.
\end{itemize}

\smallskip\noindent
Given the {\iEVT} $\Sigma$-model $\langle A,L,R \rangle$, we construct
the {\iFOPEQ}-models:
\begin{itemize}
\item For every pair of states $\langle s,s' \rangle$, we form the
  $\Sigma^{(V,V')}_{{\iFOPEQ}}$-model expansion $A^{(s, s')}$,
  which is the {\iFOPEQ}-component $A$ of the {\iEVT}-model, with
 $s$ and $s'$ added as interpretations for the new operators that
correspond to the variables from $V$ and $V'$ respectively.
%that map each predicate symbol $r_e$ to its corresponding relation $r_e \in R$ and $r_{\texttt{Init}}$ to $L$. The application of signature morphism $\sigma$ to predicate symbols of the form $r_e$ where $e \in dom(\Sigma.E)$ is given by $\sigma(r_e) = r_{\sigma(e)}$.\\
\item For each initial state $s' \in L$ we construct the
  $\Sigma^{(V')}_{{\iFOPEQ}}$-model expansion $A^{(s')}$ analogously.
  \end{itemize}

\noindent
For any {\iEVT}-sentence over $\Sigma$ of the form
$\langle e, \phi(\overline{x}, \overline{x}') \rangle$, we create a
corresponding {\iFOPEQ}-formula by replacing the free variables with
their corresponding operator symbols.  We write this (closed) formula
as $\groundevt$.

Given these {\iFOPEQ}-signatures and models, we now define the satisfaction relation for {\iEVT} in Definition~\ref{def:sat}.
\begin{defi}[\textbf{Satisfaction Relation}]%
\label{def:sat}
We define the satisfaction relation in {\iEVT} as follows.  We specify satisfaction for initial events over $L$ and non-initial events over $R$.
For any {\iEVT}-model $\langle A,L,R \rangle$ and
      {\iEVT}-sentence $\langle e, \phi(\overline{x}, \overline{x}\prime)
      \rangle$, where $e$ is an event name other than \texttt{Init}, we define:

\centerline{$
\langle A,L,R \rangle \models_{\Sigma}
  \langle e, \phi(\overline{x}, \overline{x}') \rangle
\quad \iff \quad
\forall \langle s,s' \rangle \in R.e \cdot A^{(s, s\prime)}
\models_{{\Sigma^{(V,V')}_{\iFOPEQ}}} \groundevt
$}

Similarly, we evaluate the satisfaction relation of {\iEVT}-sentences of the form $\langle \texttt{Init}, \phi(\overline{x}')\rangle $ as follows:

\centerline{$\langle A,L,R \rangle
\models_{\Sigma}
\langle \mathtt{Init}, \phi(\overline{x}') \rangle
\quad \iff \quad
\forall s'  \in L \cdot A^{(s\prime)}  \models_{{\Sigma^{(V')}_{\iFOPEQ}}} \groundevtinit$}

%\item \textbf{Case II:}
%For any {\iEVT}-model $\langle A,L,R \rangle$ and
%      {\iEVT}-sentence $\langle inv, \phi(\overline{x}, \overline{x}%\prime)
%      \rangle$, we form the set of sentences $\{\langle e, \phi(\overline{x}, \overline{x}\prime)\rangle \mid e \in dom(\Sigma.E)\}$ and the satisfaction relation is as for Case I above.

%$$\begin{array}{rrl}
%\langle A,L,R \rangle \models_{\Sigma}
%  \langle inv, \phi(\overline{x}, \overline{x}\prime) \rangle \iff &&
%  \forall s\prime  \in L \cdot A^{(s\prime)}  \models_{{\Sigma^{(V\prime)}_{\iFOPEQ}}} \groundevt\\[2ex]
%&\land& \text{for every non-\texttt{Init} event, e,}\\[1ex]
%&&\quad\quad\langle \forall s,s' \rangle \in R.e \cdot A^{(s, s\prime)}
%\models_{{\Sigma^{(V,V\prime)}_{\iFOPEQ}}} \groundevt\\
%\end{array}$$
%\end{itemize}

\end{defi}
\begin{thm}[\textbf{Satisfaction Condition}]%
\label{thm:evtsat}
Given {\iEVT} signatures $\Sigma_1$ and $\Sigma_2$, a signature morphism $\sigma: \Sigma_1 \rightarrow \Sigma_2$, a $\Sigma_2$-model $M_2$ and a $\Sigma_1$-sentence $\psi_1$, the following satisfaction condition holds:
\begin{center}
$ \mathbf{Mod}(\sigma)(M_2)\models_{{\iEVT}_{\Sigma_1}} \psi_1 \quad \iff \quad M_2 \models_{{\iEVT}_{\Sigma_2}} \mathbf{Sen}(\sigma)(\psi_1)$
\end{center}
\end{thm}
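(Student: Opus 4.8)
The plan is to reduce the {\iEVT} satisfaction condition to the satisfaction condition already available in {\iFOPEQ}. Since Definition~\ref{def:sat} defines {\iEVT}-satisfaction entirely by unfolding an {\iEVT}-sentence into a closed {\iFOPEQ}-formula $\groundevt$ and testing it against the {\iFOPEQ}-model expansions $A^{(s,s')}$, the whole argument should bottom out in the {\iFOPEQ} satisfaction condition applied to a suitably extended signature morphism.

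First I would fix notation, writing $\psi_1 = \langle e, \phi(\overline{x},\overline{x}')\rangle$ with $e \in dom(\Sigma_1.E)$ and $M_2 = \langle A_2, L_2, R_2\rangle$. By Lemma~\ref{lem:sen}, $\mathbf{Sen}(\sigma)(\psi_1)$ is $\langle \sigma_E(e), \mathbf{Sen}_{\iFOPEQ}(\hat\sigma)(\phi)\rangle$, where $\hat\sigma: (\Sigma_1)^{(V,V')}_{\iFOPEQ} \rightarrow (\Sigma_2)^{(V,V')}_{\iFOPEQ}$ is the {\iFOPEQ}-signature morphism obtained by extending the {\iFOPEQ}-part of $\sigma$ with the renaming of the new $0$-ary operators induced by $\sigma_V$ (sending the operator for $v$ to the operator for $\sigma_V(v)$, and likewise for the primed variables). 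I would then unfold both sides of the biconditional using Definition~\ref{def:sat}: the left-hand side becomes a universal statement ranging over pairs in $(R_2\rvert_\sigma).e$, and the right-hand side a universal statement ranging over pairs in $R_2.\sigma_E(e)$.

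The two bridging facts I need are: (i) by Definition~\ref{def:evtred} the reduct relabels events via $\sigma_E$ and reduces states via $\sigma_V$, so that $(R_2\rvert_\sigma).e = \{\langle s\rvert_\sigma, s'\rvert_\sigma\rangle \mid \langle s,s'\rangle \in R_2.\sigma_E(e)\}$, which supplies a bijection between the two index sets of quantification; and (ii) the commutation identity $(A_2\rvert_\sigma)^{(s\rvert_\sigma,\, s'\rvert_\sigma)} = (A_2^{(s,s')})\rvert_{\hat\sigma}$, which says that expanding-then-reducing a model coincides with reducing-then-expanding it along the extended morphism $\hat\sigma$. With (i) and (ii) in place, each conjunct of the left-hand side reads $(A_2^{(s,s')})\rvert_{\hat\sigma} \models_{\iFOPEQ} \phi$, which by the {\iFOPEQ} satisfaction condition for $\hat\sigma$ is equivalent to $A_2^{(s,s')} \models_{\iFOPEQ} \mathbf{Sen}_{\iFOPEQ}(\hat\sigma)(\phi)$ --- exactly the matching conjunct on the right. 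Quantifying over the bijective index sets then yields the theorem. The \texttt{Init} case is handled identically, replacing $R$ by $L$, the pairs $\langle s,s'\rangle$ by single initial states $s' \in L_2$, and $\hat\sigma$ by the analogous extension of $\sigma$ over $(\Sigma_i)^{(V')}_{\iFOPEQ}$.

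The main obstacle is establishing the commutation identity (ii). This is where the encoding of variables as distinguished $0$-ary operators must be checked against the state reduct defined as the composition $\sigma_V; s$: one must verify that interpreting the new operator associated with a variable $v$ in the reduced-and-expanded model returns precisely the value assigned by the expanded-and-reduced model, which amounts to unwinding the definitions of $A^{(s,s')}$ and of state reduct and confirming that $\hat\sigma$ was defined to track $\sigma_V$ on exactly these operators. Once (ii) is secured, the remainder is routine bookkeeping over the quantifiers, and the appeal to the {\iFOPEQ} satisfaction condition closes the argument.
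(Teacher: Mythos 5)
Your proposal is correct and takes essentially the same route as the paper's proof: unfold both sides of the biconditional via Definition~\ref{def:sat} into universally quantified {\iFOPEQ}-satisfaction statements over the expanded signatures $\Sigma^{(V,V')}_{\iFOPEQ}$ and discharge the equivalence by appeal to the {\iFOPEQ} satisfaction condition. The only difference is one of detail: you make explicit the two bridging facts --- the bijection between $(R_2\rvert_\sigma).e$ and $R_2.\sigma_E(e)$ induced by the state reduct, and the commutation identity $(A_2\rvert_\sigma)^{(s\rvert_\sigma,\,s'\rvert_\sigma)} = (A_2^{(s,s')})\rvert_{\hat\sigma}$ --- which the paper's much terser argument leaves implicit when it asserts that ``validity follows from the validity of satisfaction in {\iFOPEQ}''.
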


\begin{proof}
We must prove this for {\iEVT}-sentences of the form $\langle e, \phi(\overline{x}, \overline{x}')\rangle$ as defined in Definition~\ref{def:evtsen}, with the satisfaction relation, $\models_{{\iEVT}}$, as defined in Definition~\ref{def:sat}. Let $M_{2}$ be the {\iEVT}-model $\langle A_{2}, L_{2}, R_{2} \rangle$, and let $\psi_{1}$ be the {\iEVT}-sentence $\langle e, \phi(\overline{x}, \overline{x}') \rangle$, then the satisfaction condition is equivalent to
\begin{center}
$ \forall \langle s, s' \rangle \in R_2\rvert _{\sigma}.e \cdot  (A_2\rvert _{\sigma})^{(s, s\prime)}\rvert _\sigma \models_{{\iFOPEQ}_{\Sigma^{(V_1,V_1\prime)}_{\iFOPEQ}}} \groundevt$\\

$ \iff \forall \langle s, s' \rangle \in R_2.\sigma_{E}(e) \cdot A_2^{(s, s\prime)}  \models_{{\iFOPEQ}_{\Sigma^{(V_2,V_2\prime)}_{\iFOPEQ}}} Sen(\sigma)(\groundevt)$
\end{center}
Here, validity follows from the validity of
satisfaction in {\iFOPEQ}. We prove the result for initial
events in the same way.
Since validity follows from that  of {\iFOPEQ}, we thus conclude that {\iEVT} preserves the satisfaction condition required of an institution.
\end{proof}
In this section we have presented the satisfaction relation in {\iEVT} and showed that {\iEVT} preserves the axiomatic property of an institution (satisfaction condition in Theorem~\ref{thm:evtsat}). As the satisfaction relation in {\iEVT} (Definition~\ref{def:sat}) relied on a {\iFOPEQ} embedding of signatures and models, we discuss the relationship between {\iFOPEQ} and {\iEVT} in Section~\ref{sec:comor}.

\subsection{Specifying Event-B contexts by relating \texorpdfstring{${\iFOPEQ}$}{FOPEQ} and \texorpdfstring{${\iEVT}$}{EVT}}%
\label{sec:comor}

Initially, we defined the relationship between {\iFOPEQ} and {\iEVT}
to be a duplex institution formed from a restricted version of {\iEVT}
({\iEVT}$_{res}$) and {\iFOPEQ} where {\iEVT}$_{res}$ is the
institution {\iEVT} but does not contain any {\iFOPEQ}
components. Duplex institutions are constructed by enriching one
institution by the sentences of
another ({\iEVT}$_{res}$ is enriched by sentences from {\iFOPEQ}) using an institution semi-morphism~\cite{goguen_institutions:_1992}~\cite[\S 10.1]{sanella_foundations_2012}. This approach would allow us to
constrain {\iEVT}$_{res}$ by {\iFOPEQ} and thus facilitate the use of
{\iFOPEQ}-sentences in an elegant way.  However, duplex institutions
are not supported in {\Hets}, the only tool-set currently available for institutions~\cite{grumberg_heterogeneous_2007},
and therefore we opt for a comorphism (Definition~\ref{def:inscomor}) which embeds the simpler
institution {\iFOPEQ} into the more complex institution {\iEVT}~\cite[\S 10.4]{sanella_foundations_2012}.

\begin{defi}[\textbf{The institution comorphism} $\rho: {\iFOPEQ} \rightarrow {\iEVT}$]%
\label{def:comorevtfopeq}
We define $\rho: {\iFOPEQ} \rightarrow {\iEVT}$ to be an institution comorphism
  composed of:
\begin{itemize}
\item The functor $\rho^{Sign}: \mathbf{Sign}_{{\iFOPEQ}} \rightarrow
  \mathbf{Sign}_{{\iEVT}}$ which takes as input a {\iFOPEQ}-signature of the form $\langle S, \Omega, \Pi \rangle$ and extends it with the set  $E=\{(\texttt{Init} \mapsto \texttt{ordinary})\}$ and an empty set of variable names $V$. Here, $\rho^{Sign}(\sigma)$ works as $\sigma$ on $S$, $\Omega$ and $\Pi$, it is the identity on the \texttt{Init} event and the empty function on the empty set of variable names. %Thus giving the {\iEVT}-signature $\langle S, \Omega, \Pi, E, V \rangle$.
\item The natural transformation  $\rho^{Sen}: \mathbf{Sen}_{{\iFOPEQ}} \rightarrow \rho^{Sign};\mathbf{Sen}_{{\iEVT}}$ which treats any closed {\iFOPEQ}-sentence (given by $\phi$) as an Event-B invariant sentence, to form the set of {\iEVT}-sentences $\{ \langle e, \phi\rangle \mid e \in dom(\Sigma .E)\}$. As there are no variables in the signature, and $\phi$ is any closed formula, we do not require $\phi$ to be over the variables $\overline{x}$ and $\overline{x}\prime$.
\item The natural transformation $\rho^{Mod}: (\rho^{Sign})^{op} ;\mathbf{Mod}_{{\iEVT}} \rightarrow \mathbf{Mod}_{{\iFOPEQ}}$
is such that for any {\iFOPEQ}-signature $\Sigma$ and {\iEVT}-model $\langle A, L, \emptyset\rangle$ for $\rho^{Sign}(\Sigma)$ then

\centering$\rho^{Mod}_{\Sigma}(\langle A, L, \emptyset\rangle) = A$

%which was defined earlier when we described an embedding from {\iEVT}-models to {\iFOPEQ}-models via the construction of $A^{( s, s\prime)}$ and $A^{( s\prime)}$.
\end{itemize}
\end{defi}

\noindent
Next, we prove that $\rho:{\iFOPEQ}\rightarrow{\iEVT}$ meets the axiomatic requirements of an institution comorphism.
\begin{thm}\label{thm:comorevtfopeq}
The institution comorphism $\rho$ is defined such that for any $ \Sigma \in \lvert \mathbf{Sign}_{{\iFOPEQ}}\rvert$, the translations $\rho^{Sen}_{\Sigma}: \mathbf{Sen}_{{\iFOPEQ}}(\Sigma) \rightarrow \mathbf{Sen}_{{\iEVT}}(\rho^{Sign}(\Sigma))$ and $\rho^{Mod}_{\Sigma}: \mathbf{Mod}_{{\iEVT}}(\rho^{Sign}(\Sigma)) \rightarrow \mathbf{Mod}_{{\iFOPEQ}}(\Sigma)$ preserve the satisfaction relation.
 That is, for any $\psi \in \mathbf{Sen}_{{\iFOPEQ}}(\Sigma)$ and $M' \in \lvert\mathbf{Mod}_{{\iEVT}}(\rho^{Sign}(\Sigma))\rvert$
\begin{equation*}
\rho^{Mod}_{\Sigma}(M') \models_{{\iFOPEQ}_{\Sigma}} \psi \; \iff \; M' \models_{{\iEVT}_{\rho^{Sign}(\Sigma)}} \rho^{Sen}_{\Sigma}(\psi)\label{eqn:comor}\end{equation*}
\end{thm}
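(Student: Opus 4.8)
The plan is to unfold both sides of the biconditional using the definitions of the three components of $\rho$ (Definition \ref{def:comorevtfopeq}) together with the {\iEVT} satisfaction relation for \texttt{Init}-events (Definition \ref{def:sat}), and then to reduce everything to plain satisfaction in {\iFOPEQ}, where the equivalence becomes immediate. First I would pin down the shape of the model $M'$. Because $\rho^{Sign}(\Sigma)$ extends the {\iFOPEQ}-signature $\Sigma$ with the single event $E=\{(\texttt{Init}\mapsto\texttt{ordinary})\}$ and an empty variable set $V=\emptyset$, its only event is \texttt{Init}, so there are no non-\texttt{Init} events and every {\iEVT}-model over $\rho^{Sign}(\Sigma)$ has the form $M'=\langle A, L, \emptyset\rangle$ with $R=\emptyset$. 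By definition of $\rho^{Mod}$, the left-hand side $\rho^{Mod}_{\Sigma}(M')\models_{{\iFOPEQ}_{\Sigma}}\psi$ therefore reads simply $A\models_{{\iFOPEQ}_{\Sigma}}\psi$.

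Next I would expand the right-hand side. By definition of $\rho^{Sen}$, and since \texttt{Init} is the only event of $\rho^{Sign}(\Sigma)$, the set $\rho^{Sen}_{\Sigma}(\psi)$ reduces to the singleton $\langle\texttt{Init},\psi\rangle$, a sentence of the \texttt{Init} form. Applying the \texttt{Init} clause of Definition \ref{def:sat}, the statement $M'\models_{{\iEVT}_{\rho^{Sign}(\Sigma)}}\langle\texttt{Init},\psi\rangle$ unfolds to $\forall s'\in L\cdot A^{(s')}\models_{\Sigma^{(V')}_{{\iFOPEQ}}}\psi$. The crucial simplification is that $V=\emptyset$: the augmented signature $\Sigma^{(V')}_{{\iFOPEQ}}$ coincides with $\Sigma$ because no $0$-ary operators are added, and each model expansion $A^{(s')}$ is just $A$ itself, since there are no primed variables to interpret. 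Moreover $\psi$ is a closed {\iFOPEQ}-sentence mentioning none of the (absent) variables, so its truth is independent of the witness $s'$. The right-hand side thus collapses to $\forall s'\in L\cdot A\models_{{\iFOPEQ}_{\Sigma}}\psi$.

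It remains to discharge the universal quantifier, and this is where I expect the only real subtlety to lie. The equivalence $\bigl(\forall s'\in L\cdot A\models\psi\bigr)\iff A\models\psi$ holds only because $L$ is non-empty; this is guaranteed by Definition \ref{def:evtmod}, where the initialising set $L$ is required to be non-empty (in the extreme case $L=\{\{\}\}$). Were $L$ permitted to be empty, the universal would be vacuously true and would force the right-hand side to hold regardless of $\psi$, breaking the forward direction of the biconditional. Granting non-emptiness, both sides reduce to $A\models_{{\iFOPEQ}_{\Sigma}}\psi$, and preservation of satisfaction follows. I would close by noting that, unlike the proof of the {\iEVT} satisfaction condition in Theorem \ref{thm:evtsat}, no appeal to the {\iFOPEQ} satisfaction condition or to any naturality diagram is needed here: the argument is a direct computation that bottoms out in ordinary {\iFOPEQ}-satisfaction, with the non-emptiness of $L$ as its single load-bearing hypothesis.
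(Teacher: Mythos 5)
Your proof is correct and follows essentially the same route as the paper's: both unfold $\rho^{Mod}$, $\rho^{Sen}$ and the \texttt{Init} clause of Definition \ref{def:sat}, observe that $V'=\emptyset$ forces $(\rho^{Sign}(\Sigma))^{(V')}_{{\iFOPEQ}}=\Sigma$ and $A^{(s')}=A$, and collapse the quantification over $L$ using $L=\{\{\}\}$. Your explicit remark that the non-emptiness of $L$ is the load-bearing hypothesis for the forward direction is a welcome sharpening of a point the paper leaves implicit, but it does not change the argument.
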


\begin{proof} By Definition~\ref{def:comorevtfopeq}, $M' = \langle A, L, \emptyset \rangle$, $\rho^{Mod}_{\Sigma}(M') = A$ and
  $\rho^{Sen}_{\Sigma}(\psi) = \{ \langle e, \psi\rangle \mid e \in dom(\Sigma .E)\}$. Therefore, the above equivalence becomes
\[A \models_{{\iFOPEQ}_{\Sigma}} \psi
\; \iff \;
M' \models_{{\iEVT}_{\rho^{Sign}(\Sigma)}} \{ \langle e, \psi\rangle \mid e \in dom(\Sigma .E)\}
\]
Then, by the definition of the satisfaction relation in {\iEVT} (Definition~\ref{def:sat}) we get the following for each $\psi$ in $\{ \langle e, \psi\rangle \mid e \in dom(\Sigma .E)\}$.
\[
A \models_{{\iFOPEQ}_{\Sigma}} \psi
\; \iff \;
\forall s\prime \in L \cdot A^{( s\prime)} \models_{{\iFOPEQ}_{(\rho^{Sign}(\Sigma))^{(V')}_{{\iFOPEQ}}}} \psi
\]

We deduce that $\Sigma = (\rho^{Sign}(\Sigma))^{V'}_{{\iFOPEQ}}$,
since there are no variable names in $V'$ and thus no new operator symbols are added to the signature. As there are no variable names in $V'$, we have $L=\{\{\}\}$, so we can conclude that $A^{(s\prime)} = A$.  %the only addition is a relation $r_\texttt{Init}$ and, since $\psi$
%is a {\iFOPEQ}-sentence, this plays no part in evaluating the
%satisfaction of $\psi$.
Thus the satisfaction condition holds.
\end{proof}

\paragraph{Using the comorphism.}
The purpose of this comorphism is to allow us to use {\iFOPEQ} specifications in {\iEVT}.
Thus, given the institution comorphism $\rho:{\iFOPEQ} \rightarrow {\iEVT}$,
for any $\Sigma$-specification written over {\iFOPEQ} we can use the
specification-building operator \[\_\_ \texttt{ \ikw{with} } \rho:
Spec_{{\iFOPEQ}}(\Sigma) \rightarrow
Spec_{{\iEVT}}(\rho^{Sign}(\Sigma))\]
to interpret this as a specification over
{\iEVT}~\cite[\S 10.4]{sanella_foundations_2012}. This results in
a specification with just the \texttt{Init} event and
no variables, containing {\iFOPEQ}-sentences that hold in the initial state.
This process is used to represent contexts, specifically
their axioms, which are written over {\iFOPEQ} as sentences over
{\iEVT}.

We note that there is a dual nature to the relationship between {\iFOPEQ} and {\iEVT}. The first being the use of {\iFOPEQ} to evaluate the {\iEVT}-satisfaction condition via an embedding as presented in Section~\ref{sec:evtsat}. The second is the comorphism that we have described above which allows us to project Event-B context sentences into {\iEVT} and we illustrate this in our  semantics in Section~\ref{sec:transsemantics}. This double relationship offers a number of interesting consequences. In a way, {\iEVT} is tied to {\iFOPEQ} in terms of the evaluation of the satisfaction relation. However, the use of the comorphism to represent context sentences means that it is possible to replace this use of {\iFOPEQ} with another logic so long as a comorphism can be defined to {\iEVT}. This second relationship illustrates the plug `n' play nature of the {\iEVT} institution.

In cases where a specification is enriched with new events, then the axioms and invariants should also apply to these new events. Our use of {\iEVT}-sentences that define invariants mediates this as they are applied to all events in the specification when evaluating the satisfaction condition.
In order for an institution to correctly utilise the specification-building operators, we must prove properties with regard to pushouts and amalgamation. We prove that {\iEVT} meets these requirements in Section~\ref{sec:push}.

\subsection{Enabling specification-building via pushouts and amalgamation}%
\label{sec:push}

We ensure that the {\iEVT} institution has good modularity properties
by proving that {\iEVT} admits the amalgamation property: all pushouts
in \textbf{Sign}$_{{\iEVT}}$ exist and every pushout diagram in
\textbf{Sign}$_{{\iEVT}}$ admits \textit{weak} model amalgamation~\cite[\S 4.4]{sanella_foundations_2012}. Note that the weak amalgamation property corresponds to the usual definition of amalgamation (Definition~\ref{def:amalg}) with the uniqueness constraint dropped~\cite[\S 4.4]{sanella_foundations_2012}. This requirement was actually first referred to as ``exactness'' and defined by Diaconescu et al.~\cite{diaconescu_logical}. In~\cite{sanella_foundations_2012},
structured specifications are built by consecutive application of these
specification-building operators. The meaning of structured specifications is obtained by applying functions that correspond to the specification-building operators on the associated signatures and model classes. In particular, the signature (resp.\ model class) corresponding to a structured  specification that uses the \ikw{and} specification-building operator is obtained using the pushout (resp.\ amalgamation) construction shown below~\cite[\S 5]{sanella_foundations_2012}.

\begin{restatable}{prop}{pushoutprop}
%\begin{prop}
Pushouts exist in \textbf{Sign}$_{{\iEVT}}$.\label{prop:push}\end{restatable}
%\begin{proof}
%See Appendix.
%\end{proof}
%\begin{figure}[h!]
%\vspace{-25pt}
%	\begin{center}
%	\begin{tikzpicture}[scale=0.55]
%	\node (P0) at (90:2.5cm) {$\{v_1,v_2\} = V$};
%	\node (P1) at (90+90:4.2cm) {$\{x_1,\sigma_1(v_1),y_1,\sigma_1(v_2), z_1\} = V_1$} ;
%	\node (P2) at (90+2*90:2.5cm) {$\{x_1, x_2, \sigma_1(v_1) + \sigma_2(v_1),y_1,y_2,\sigma_1(v_2)+ \sigma_2(v_2),z_1,z_2\} = V\prime$};
%	\node (P3) at (90+3.75*72:4.2) {$\{x_2,\sigma_2(v_1),y_2,\sigma_2(v_2), z_2\}= V_2$};
%	\path[commutative diagrams/.cd, every arrow, every label]
%	(P0) edge node [swap]{$\sigma_1$} (P1)
%	(P1) edge node[swap] {$\sigma_1'$} (P2)
%	(P3) edge node {$\sigma_2'$} (P2)
%	(P0) edge node{$\sigma_2$} (P3);
%	\end{tikzpicture}
%	\end{center}
%\caption{Pushout on $\Sigma.V$.}
%\label{fig: evpo}
%\end{figure}
In Definition~\ref{def:amalg} we present the definition of amalgamation as outlined by Sannella and Tarlecki and use it to structure the proof of amalgamation in {\iEVT}~\cite[\S 4.4]{sanella_foundations_2012}.
\begin{defi}[\textbf{Amalgamation}]%
\label{def:amalg}
Let $\mathbf{INS} = \langle \mathbf{Sign}, \mathbf{Sen}, \mathbf{Mod}, \langle \models_{\Sigma} \rangle_{\Sigma \in \lvert\mathbf{Sign}\rvert}\rangle$ be an institution. The following diagram in $\mathbf{Sign}$
\begin{center}
	\begin{tikzpicture}[scale=0.5]
	\node (P0) at (90:2cm) {$\Sigma$};
	\node (P1) at (90+90:2.5cm) {$\Sigma_1$} ;
	\node (P3) at (90+3.75*72:2.5cm) {$\Sigma_2$};
	\node (P2) at (90+2*90:2cm) {$\Sigma'$};
	\path[commutative diagrams/.cd, every arrow, every label]
	(P0) edge node [swap]{$\sigma_1$} (P1)
	(P1) edge node  {$\sigma_1'$} (P2)
	(P3) edge node[swap] {$\sigma_2'$} (P2)
	(P0) edge node {$\sigma_2$} (P3);
	\end{tikzpicture}
\end{center}
admits amalgamation if
\begin{itemize}
\item for any two models $M_1 \in \lvert \mathbf{Mod}(\Sigma_1) \rvert$ and $M_2 \in \lvert \mathbf{Mod}(\Sigma_2) \rvert$, there exists a unique model $M' \in \lvert \mathbf{Mod}(\Sigma') \rvert$ (amalgamation of $M_1$ and $M_2$) such that $M'\rvert _{\sigma_1'} = M_1$ and $M'\rvert _{\sigma_2'} = M_2$.
\item for any two model morphisms $f_1:M_{11} \rightarrow M_{12}$ in $\mathbf{Mod}(\Sigma_1)$ and $f_2:M_{21} \rightarrow M_{22}$ in $\mathbf{Mod}(\Sigma_2)$ such that $f_1\rvert_{\sigma_1} = f_2\rvert_{\sigma_2}$, there exists a unique model morphism $f':M_1' \rightarrow M_2'$ in $\mathbf{Mod}(\Sigma')$ (amalgamation of $f_1$ and $f_2$) such that $f'\rvert_{\sigma_1'} = f_1$ and $f'\rvert_{\sigma_2'} = f_2$.
\end{itemize}
The institution $\mathbf{INS}$ has the \textit{amalgamation} property if all pushouts exist in $\mathbf{Sign}$ and every pushout diagram in $\mathbf{Sign}$ admits amalgamation~\cite[\S 4.4]{sanella_foundations_2012}.
\end{defi}

\begin{restatable}{prop}{weakamalgprop}
Every pushout diagram in \textbf{Sign}$_{{\iEVT}}$ admits weak model amalgamation.

\end{restatable}
%\begin{proof}
%See Appendix
%\end{proof}
%\end{proof}
%
%We handle this using the specification building operator \ikw{hide via} $\sigma$ which hides (via signature morphism $\sigma$) components of the concrete signature~\cite{sanella_foundations_2012}. In this way we interpret the concrete specification over the same signature as the abstract specification and thus translate the models accordingly. This method of reasoning is supported by Schneider's work on a CSP semantics for Event-B~\cite{schneider_behavioural_2014}.
So far, we have defined our institution for Event-B, {\iEVT}, outlined the comorphism from {\iFOPEQ} to {\iEVT}, and shown that {\iEVT} meets the requirements needed for correct use of the specification-building operators. Next, we discuss the pragmatics of specification-building in {\iEVT} (Section~\ref{sec:prag}).%, and illustrate, by example, how specifications can be written and modularised in {\iEVT} (Section~\ref{sec:write}).

\subsection{Some pragmatics of Specification Building in \texorpdfstring{${\iEVT}$}{EVT}}%
\label{sec:prag}
We represent an Event-B specification, that is composed of machines and contexts, as a structured specification that is built from flat specifications (presentations in Definition~\ref{def:pres}) over {\iEVT} using the specification-building operators. Technically, {\iEVT} allows for loose specifications which are not possible in Event-B, however, we do not exploit this in our examples that follow.

Recall that, for any
signature $\Sigma$, a $\Sigma$-presentation is a set of
$\Sigma$-sentences. A model of a $\Sigma$-presentation is a
$\Sigma$-model that satisfies all of the sentences in the presentation~\cite{goguen_institutions:_1992}.  Thus, for a presentation in {\iEVT}, model
components corresponding to an event must satisfy all of the sentences
specifying that event.  This incorporates the standard semantics of the
\emph{extends} operator for events in Event-B where the extending
event implicitly contains all the parameters, guards and actions of the
extended event but can include additional parameters, guards and actions~\cite{abrial_refinement_2007}.

An interesting aspect is that if a variable is not assigned a value, within
an action, then a model for the event may associate a new value with
this variable.  Some languages deal with this using a \emph{frame
  condition}, asserting implicitly that values for unmodified
variables do not change.  In Event-B such a condition would cause
complications when combining presentations, since variables
unreferenced in one event will be constrained not to change, and this
may contradict an action for them in the other event.  As far as we
can tell, the informal semantics for the Event-B language does not
require a frame condition, and we have not included one in our definition.

\section{A Formal Semantics for Event-B}%
\label{sec:transsemantics}
%%%%%%%%%%%%%%%%%%%%%%%%%%%%%%%%%%%%%%%%%%%%%%%%%%%%%%%%%%%%%%%%%%%%%%%%%%%%
In the previous sections, we described the Event-B formal specification language and the {\iEVT} institution which provides the target for our translational semantics. This section is the centrepiece of this paper and shows how we map from the constructs of the Event-B language into the semantic domains given by {\iEVT}.
Our objective is to define a formal semantics for Event-B by representing Event-B specifications as structured specifications over {\iEVT}, our institution for Event-B~\cite{ farrell_institution_2017}. Our approach is structured around the three constituent sub-languages of Event-B that we identified in Section~\ref{sec:bg}.

\paragraph{Exploiting the three-layer model:}%
\label{sec:eventb}
We utilise our three-layer model of the Event-B language that was presented in Figure~\ref{fig: ebstruc} in order to translate Event-B into the institutional framework as follows.
\begin{itemize}
\item At the base of Figure~\ref{fig: ebstruc} is the Event-B mathematical language. The institution of first-order predicate logic with equality, {\iFOPEQ}, is embedded via an institution comorphism (Definition~\ref{def:comorevtfopeq}) into the institution for Event-B, {\iEVT}~\cite{farrell_institution_2017}. The semantics that we define translates the constructs of this mathematical language into corresponding constructs over {\iFOPEQ}.
\item At the next level is the Event-B infrastructure, which consists of those language elements used to define variables, invariants, variants and events. These are translated into sentences over {\iEVT}.
\item At the topmost level is the Event-B superstructure which deals with the definition of Event-B machines and contexts, as well as their relationships (\texttt{refines}, \texttt{sees}, \texttt{extends}). These are translated into structured specifications over {\iEVT} using the specification-building operators.
\end{itemize}

\noindent
Closely following the syntax defined in Figure~\ref{fig1: syntax}, we
define the semantics of each of the Event-B language elements by
describing a mechanism to translate them into {\iEVT}-sentences. In
order to carry out such a translation we first extract the
corresponding {\iEVT}-signature $\Sigma = \langle S, \Omega, \Pi, E, V
\rangle$ from any given Event-B specification. Contexts can be
represented entirely by the underlying mathematical language and thus
translated into specifications over {\iFOPEQ}, and embedded using the
comorphism.

In Section~\ref{sec:interface}, we define an interface (Figure~\ref{fig: magic}) in order to facilitate the use of some {\iFOPEQ}
operations and functions.  We provide functions for
extracting the signature of an Event-B specification in Section~\ref{sec:extract}. In a bottom-up fashion, Sections~\ref{sec:sen} and~\ref{sec:super} define the semantics of the Event-B infrastructure and
superstructure languages, respectively.

\subsection{The \texorpdfstring{${\iFOPEQ}$}{FOPEQ} Interface}%
\label{sec:interface}

Just as the Event-B formalism is built on an underlying mathematical
language, our institution for Event-B is founded on {\iFOPEQ}, the
institution for first-order predicate logic with equality. In Figure~\ref{fig: magic}, we define a {\iFOPEQ} interface in order to
facilitate the use of its operations and functions in our
definitions. The description of these operations is outlined in Figure~\ref{fig: magic} and not specified further.

The function $\mathbb{P}_\Sigma$ described in Figure~\ref{fig: magic} takes a labelled predicate and outputs a {\iFOPEQ}-sentence ($\Sigma$\textit{-formula}).
The function $\mathbb{T}_\Sigma$ takes an expression and returns a $\Sigma$\textit{-term}. These functions are used to translate Event-B predicates and expressions into $\Sigma$\textit{-formulae} and $\Sigma$\textit{-terms} respectively.

The purpose of the function $\mathbb{M}$ is to take two lists of identifiers and a list of labelled predicates, and use these to form the {\iFOPEQ} signature $\langle S,\Omega,\Pi \rangle$. The reason for this is that when extracting a signature from a context (described in Figure~\ref{fig: sigextract}) carrier sets are interpreted as sorts and used to form $S$. The constants and axioms are used to form $\Omega$ and $\Pi$. We assume that $\mathbb{M}$ provides this translation.

\begin{figure}
\centering
\noindent\framebox[\columnwidth]{\parbox{0.97\columnwidth}{
\textbf{{\iFOPEQ} Operations}
\begin{itemize}
\item $\mathtt{F.and} : \Sigma\textit{-formula}^*  \rightarrow \Sigma\textit{-formula}$\\
This corresponds to the logical conjunction ($\land$) of a set of formulae in {\iFOPEQ}.
\item $\mathtt{F.lt} : \Sigma\textit{-term} \times \Sigma\textit{-term} \rightarrow \Sigma\textit{-formula}$\\
This operation takes two terms and returns a formula corresponding to arithmetic less than ($<$).
\item $\mathtt{F.leq} : \Sigma\textit{-term}\times \Sigma\textit{-term} \rightarrow \Sigma\textit{-formula}$\\
This operation takes two terms and returns a formula corresponding to arithmetic less than or equal to ($\leq$).
\item $\mathtt{F.exists} : VarName^* \times \Sigma\textit{-formula} \rightarrow \Sigma\textit{-formula}$\\
This operation takes a sequence of $VarName$s and a formula and returns a formula corresponding to the existential quantification of the $VarName$s over the input formula.
\item $\mathtt{F.}\iota : VarName ^* \rightarrow \Sigma\textit{-formula} \rightarrow \Sigma\textit{-formula}$\\
This operation takes a list of $VarName$s and a formula and returns the input formula with the names of all of the free variables (as given by the list of $VarName$s) primed.
\end{itemize}
\smallskip
\textbf{{\iFOPEQ} Functions}
\begin{itemize}
\item $\mathbb{P}_\Sigma: LabelledPred \rightarrow \Sigma\textit{-formula}$
\item $\mathbb{T}_\Sigma: Expression\rightarrow \Sigma\textit{-term}$
\item $\mathbb{M}: SetName ^* \times ConstName^*  \times LabelledPred^* \rightarrow \lvert \textbf{Sign}_{{\iFOPEQ}}\rvert$
\end{itemize}
}}
\caption{The {\iFOPEQ} interface provides access to a range of operations and functions which we assume to exist. These are used throughout our definitions in Figures~\ref{fig: sigextract},~\ref{fig:sentrans} and~\ref{fig:sbotrans}.}%
\label{fig: magic}
\end{figure}
For simplicity, we assume that it is possible to use Event-B identifiers in {\iFOPEQ} and {\iEVT}. Also, when we reference a $\Sigma\textit{-formula}$ in Figure~\ref{fig: magic} we mean a possibly open {\iFOPEQ}-formula over the signature given by $\Sigma$. We only return a closed {\iFOPEQ}-formula when applying $\mathbb{P}_{\Sigma}$ to axiom sentences since they form closed predicates in Figure~\ref{fig:sentrans}.

\subsection{Extracting the Signature}%
\label{sec:extract}

\begin{figure}
\footnotesize
\textit{For an Event-B specification $SP$, we form an environment $\xi = \mathbb{D}\BR{SP}\xi_0$ where  $\xi_0$ is the empty environment.}\\

$\begin{array}{ll@{\;}l}
\bullet &Env & = (MachineName \cup ContextName) \rightarrow |\mathbf{Sign}|  \quad \Comment{An environment maps names to signatures}\\
\\
\bullet & \mathbb{D}:& Specification \rightarrow Env \rightarrow Env
\Comment{Process a list of machine/context definitions}\\[2pt]
&\mathbb{D} & \BR{\langle ~ \rangle}\; \xi \;=\; \xi\\
&\mathbb{D} & \BR{hd :: tl}\; \xi \;=\; \mathbb{D} \;(\BR{tl}) \;(\mathbb{D} \;\BR{hd} \;\xi)\\
\\
%%%%%
\bullet & \mathbb{D}: & Machine \rightarrow Env \rightarrow Env
\Comment{Extract and store the signature for one machine}\\[2pt]
&\mathbb{D} & \left\llbracket\begin{array}{l}
\texttt{machine } m \\
\texttt{refines }  a \\
\texttt{sees }  ctx_1, \ldots,ctx_n  \\
mbody\\
\texttt{end}\\
\end{array}\right\rrbracket\; \xi
\;=\;
\xi  \cup \{ \llbracket m \rrbracket \mapsto (\langle S, \Omega, \Pi, E, V\rangle \ \ikw{$\cup$} \ r( \xi \BR{a}))\}\\ % chktex 37
&&where \\
&&  \qquad \langle S, \Omega, \Pi \rangle \;=\;
\{(\xi\; \BR{ctx_1}) \ \ikw{$\cup$} \ \ldots  \ \ikw{$\cup$} \ (\xi \;\BR{ctx_n})\}
\Comment{Include signatures from `seen' contexts}\\
&& \qquad  \langle E, V, RA \rangle \;=\;
\mathbb{D} \BR{mbody}
 \Comment{Collect names from machine body}\\
&&and \\
&& \qquad r \;:\; |\textbf{Sign}_{\iEVT}| \rightarrow |\textbf{Sign}_{\iEVT}| \Comment{Signature for the abstract machine (less the refined events)}\\
&& \qquad r (\xi \BR{a}) \;=\;
\texttt{let } \Sigma_a = \xi \BR{a} \texttt{ in } \langle \Sigma_a.S,\, \Sigma_a.\Omega,\,  \Sigma_a.\Pi,\,  RA \ndres \Sigma_a.E,\, \Sigma_a.V \rangle \\
\\
%%%%%
\bullet & \mathbb{D}: & MachineBody  \rightarrow
(\setof{EventName \times Stat} \times \setof{VarName} \times \setof{EventName} )\\ % chktex 37
&&\Comment{Extract signature elements from machine-body} \\[-2ex]
&\mathbb{D} & \left\llbracket\begin{array}{l}
\texttt{variables }  v_1, \ldots, v_n \\
\texttt{invariants }  i_1, \ldots, i_n \\
\texttt{theorems }  t_1, \ldots, t_n \\
\texttt{variant }  n  \\
\texttt{events }  e_{init} \ e_1,\ldots,e_n \\
\end{array}\right\rrbracket
\;=\;
\langle E, V, RA \rangle \\
&&where \\
&& \qquad  E \;=\; \{\emph{def}\,\BR{e_{init}},\emph{def}\,\BR{e_1}, \ldots, \emph{def}\,\BR{e_n}\}
\Comment{Names of events defined here}\\
&& \qquad  V \;=\; \langle \BR{v_1}, \ldots, \BR{v_n} \rangle
\Comment{Names of variables declared here}\\
&& \qquad RA \;=\; \emph{ref}\,\BR{e_{init}} \cup \emph{ref}\,\BR{e_1} \cup \ldots \cup \emph{ref}\,\BR{e_n}
\Comment{Names of (abstract) events refined here}\\
\\
%%%%%
\bullet & \emph{def}: & Event \rightarrow (EventName \times Stat)
\Comment{Extract event name \& status from an event definition}\\[2pt]
&\emph{def} & \BR{\texttt{event }e, \texttt{status } s, \texttt{refines } e_1, \ldots, e_n,  \cdots \texttt{end}} \;=\; (e\mapsto s )\\ % chktex 37 chktex 11
&\emph{def} & \BR{e_{init}} \;=\; \langle \mathtt{Init}, \mathtt{ordinary} \rangle\\
\\
%%%%%
\bullet & \emph{ref}: &Event \rightarrow \setof{EventName}
\Comment{Extract names of refined events from an event definition}\\[2pt]
&\emph{ref}& \BR{\texttt{event } e, \texttt{status } s, \texttt{refines } e_1, \ldots, e_n,  \cdots \texttt{end}} % chktex 11
\;=\;
\{\BR{e_1}, \ldots, \BR{e_n}\}\\
&\emph{ref}&\BR{e_{init}} = \{\BR{e_{init}}\}\\
\\
%%%%%

\bullet & \mathbb{D}: & Context \rightarrow Env \rightarrow  Env
\Comment{Extract and store the signature for one context}\\[2pt]
&\mathbb{D} & \left\llbracket\begin{array}{l}
\texttt{context } ctx \\
\texttt{extends } ctx_1,\ldots,ctx_n \\
cbody\\
\texttt{end}
\end{array}\right\rrbracket \xi
\;=\;
  \xi \;\cup\; \{\BR{ctx} \mapsto \begin{array}{l}
  (\mathbb{D}\BR{cbody} \mathbin{\ikw{$\cup$}} \xi \BR{ctx_1} \mathbin{\ikw{$\cup$}}
    \ldots \mathbin{\ikw{$\cup$}} \xi \BR{ctx_n})\}\\
  \end{array}
\\
\\
%%%%%
\bullet & \mathbb{D}: & ContextBody \rightarrow  \lvert\textbf{Sign}_{{\iFOPEQ}}\rvert
\Comment{Extract the {\iFOPEQ} signature from a context body}
\\[2pt]
&\mathbb{D} & \left\llbracket\begin{array}{l}
\texttt{sets } s_1,\ldots,s_n \\
\texttt{constants } c_1, \ldots, c_n\\
\texttt{axioms }  a_1, \ldots, a_n \\
\texttt{theorems } t_1, \ldots, t_n \\
\end{array}\right\rrbracket
\;=\; \langle S, \Omega, \Pi \rangle
\Comment{Sorts, operations, predicates}\\
&& where \\
&& \qquad  \langle S, \Omega, \Pi \rangle  = \mathbb{M}((\BR{s_1},\ldots,\BR{s_n}),\; (\BR{c_1}, \ldots, \BR{c_n}),\; (\BR{a_1}, \ldots, \BR{a_m}))\\
\end{array}$
\caption{These auxiliary functions are used to extract a signature from an Event-B specification, which is a  list of context or machine definitions.  The extraction of {\iFOPEQ}-signatures uses the interface functions described in Figure~\ref{fig: magic}.}%
\label{fig: sigextract}
\end{figure}

Before we form sentences in any institution, we must first specify the corresponding signature.
To aid this process, we define $Env$ to be the type of all
  environments, which are just mappings from machine/context names to
  their corresponding signatures. In all subsequent definitions we use
  $\xi$ to denote an environment of type $Env$. We define the overloaded function $\mathbb{D}$ in Figure~\ref{fig: sigextract} to build the environment from a given Event-B specification, which is a list of machine and context definitions.

The function $\mathit{def}$ (Figure~\ref{fig: sigextract}) extracts the function $($event name $\mapsto$ status$)$ for each event in the machine and these functions form the $E$ component of the signature. Each event name is tagged with a status in order to correctly form variant sentences which will be discussed in Section~\ref{sec:sen}. The function $\mathit{ref}$ (Figure~\ref{fig: sigextract}) forms the set of events that a particular concrete event refines. We use this function to remove the names of the refined abstract events, and the status that each is paired with, from the abstract machine's signature, before we combine it with the concrete signature. We use the domain anti-restriction operator $\ndres$ to express this, where for any set $S$ and function $T$, we have
\[S\ndres T = \{x\mapsto y \mid x \mapsto y \in T \land x \notin S\}\]

In order to illustrate this signature extraction in practice, Figure~\ref{fig:tlsig} contains the signature that was extracted using these functions from \texttt{m1} in Figure~\ref{fig:ebm1}. Notice how the variables and events in the signature include those from the abstract machine, \texttt{m0} in Figure~\ref{fig:ebm0} (lines 8--29), and the constant from the context (lines 1--7) has been included as a $0$-ary operator using the {\iFOPEQ} interface. We note that, internally, \textsf{Rodin} distinguishes typing and non-typing invariants and we have used the typing invariants on lines 7--9 of Figure~\ref{fig:ebm1} and line 13 of Figure~\ref{fig:ebm0} to infer the types of the variables, and we assume an appropriate
  mapping to types in {\iFOPEQ}.

\begin{figure}
\begin{program}
$\Sigma_{\texttt{m1}} = \langle S, \Omega, \Pi, E, V \rangle$
  where
  $S = \{\mathbb{N}\}$,
  $\Omega = \{0:\mathbb{N},d:\mathbb{N}\}$,
  $\Pi = \{>:\mathbb{N}\times\mathbb{N}\}$,
  $E = \{(\texttt{Init} \mapsto \texttt{ordinary}), (\texttt{ML\_in} \mapsto \texttt{ordinary}), (\texttt{ML\_out} \mapsto \texttt{ordinary})$,
         $(\texttt{Il\_in} \mapsto \texttt{convergent}), (\texttt{Il\_out} \mapsto \texttt{convergent})\}$,
  $V = \{\texttt{n:}\mathbb{N}, \texttt{a:}\mathbb{N}, \texttt{b:}\mathbb{N}, \texttt{c:}\mathbb{N}\}$
\end{program}
\caption{An example of the signature extraction: this is the result of applying the functions from Figure~\ref{fig: sigextract} to the Event-B machine specification of \texttt{m1} in Figure~\ref{fig:ebm1}.}%
\label{fig:tlsig}
\end{figure}

% In {\iEVT}, $E$ denotes a set of event names/identifiers but in the special case where we derive an {\iEVT}-signature directly from an Event-B specification, $E$ is a set of pairs $(String \times stat)$. The first member of the tuple corresponds to the name of the event and the second gives the status corresponding to that event as follows:
Once the environment has been formed, we can then define a systematic translation from specifications in Event-B to structured specifications over {\iEVT}. We take a bottom-up approach to this translation which comprises two parts.
\begin{itemize}
\item The first mapping (in Figure~\ref{fig:sentrans}) that we define is from the Event-B infrastructure sentences (invariants, variants, events and axioms) to sentences over {\iEVT} (for invariants, variants and events) and sentences over {\iFOPEQ} (for axioms) in Section~\ref{sec:sen}.
\item The second mapping (in Figure~\ref{fig:sbotrans}) that we provide is from the superstructure components of an Event-B specification to structured specifications over {\iEVT} (for machines) and structured specifications over {\iFOPEQ} (for contexts) in Section~\ref{sec:super}.

\end{itemize}

\begin{figure*}
{\renewcommand{\baselinestretch}{1.0}
\footnotesize
$\begin{array}{ll@{\;}l}
%%%%%
\bullet & \mathbb{S}_\Sigma: & MachineBody \rightarrow Sen_{{\iEVT}}(\Sigma)
\Comment{Build sentences from a machine body}\\[2pt]
&\mathbb{S}_\Sigma & \left\llbracket\begin{array}{l}
\texttt{variables }  v_1, \ldots, v_n \\
\texttt{invariants }  i_1,\ldots,i_n \\
\texttt{theorems }  t_1,\ldots,t_n \\
\texttt{variant }  n  \\
\texttt{events }  e_{init}, \ e_1,\ldots,e_n \\
\end{array}\right\rrbracket
\;=\;
\left(\begin{array}{ll}
& \mathbb{I}_\Sigma\BR{i_1} \cup \ldots \cup \mathbb{I}_\Sigma\BR{i_n}
\\
\cup & \mathbb{V}_\Sigma \BR{n} \\
\cup & \mathbb{E}_\Sigma \BR{e_{init}} \\
\cup&  \mathbb{E}_\Sigma\BR{e_1} \cup \ldots \cup \mathbb{E}_\Sigma\BR{e_n}
\end{array}\right)\\
\\
%%%%%
\bullet & \mathbb{I}_\Sigma: & LabelledPred \rightarrow Sen_{{\iEVT}}(\Sigma)
\Comment{Invariant sentences}\\[2pt]
&\mathbb{I}_\Sigma &\BR{i} \;=\; \{\langle \BR{e},\;
\mathtt{F.and}(\mathbb{P}_\Sigma\BR{i}, \mathtt{F.}\iota (\Sigma.V)(\mathbb{P}_\Sigma\BR{i}))\rangle \mid e \in dom(\Sigma.E)\} \\
\\
%%%%%
\bullet & \mathbb{V}_\Sigma: & Expression \rightarrow Sen_{{\iEVT}}(\Sigma)
\Comment{Variant can't increase for non-ord.\ events}\\[2pt]
&\mathbb{V}_\Sigma &\BR{n}
\;=\;
\begin{array}[t]{lr@{\;\mid\;}l}
& \{\langle \BR{e},\;
\mathtt{F.lt}(\mathtt{F.}\iota(\Sigma.V)(\mathbb{T}_\Sigma \BR{n}), \mathbb{T}_\Sigma \BR{n}) \rangle & (e \mapsto \mathtt{convergent}) \in \Sigma.E\} \\
\cup & \{\langle \BR{e},\;
\mathtt{F.leq}(\mathtt{F.}\iota(\Sigma.V) (\mathbb{T}_\Sigma\BR{n}), \mathbb{T}_\Sigma \BR{n})\rangle & (e \mapsto \mathtt{anticipated}) \in \Sigma.E\}
\end{array}\\
\\

%%%%%
\bullet & \mathbb{E}_\Sigma: & InitEvent \rightarrow Sen_{{\iEVT}}(\Sigma)
\Comment{Initial event: get sentences from actions}\\[2pt]
&\mathbb{E}_\Sigma& \left\llbracket\begin{array}{l}
\texttt{event Initialisation}    \\
\texttt{status ordinary}   \\
\texttt{then } act_1, \ldots, act_n\\
\texttt{end}\\
\end{array}\right\rrbracket
\;=\;
\{\langle \mathtt{Init},\; BA \rangle \}\\
&& where\\
&& \qquad BA \;=\; \mathtt{F.and}(\mathbb{P}_\Sigma\BR{act_1},\ldots,\mathbb{P}_\Sigma\BR{act_n}) \\
\\

\end{array}$

$\begin{array}{ll@{\;}l}
%%%%%
\bullet & \mathbb{E}_\Sigma: & Event \rightarrow Sen_{{\iEVT}}(\Sigma)
\Comment{Non-initial event: get sentences from event body}\\[2pt]
&\mathbb{E}_\Sigma& \left\llbracket\begin{array}{l}
\texttt{event } e   \\
\texttt{status} \  s   \\
\texttt{refines } e_1, \ldots, e_n \\
ebody\\
\texttt{end}\\
\end{array}\right\rrbracket
\;=\;
\{\langle \BR{e},\; \mathbb{F}_\Sigma\BR{ebody} \rangle \}\\
\\
%%%%%
\bullet & \mathbb{F}_\Sigma: & EventBody \rightarrow \Sigma\textit{-formula}
\Comment{Build a {\iFOPEQ} formula for an event definition}\\[2pt]
&\mathbb{F}_\Sigma & \left\llbracket\begin{array}{l}
\texttt{any} \  p_1, \ldots, p_n  \\
\texttt{where} \ grd_1,\ldots, grd_n   \\
\texttt{with} \ w_1,\ldots, w_n \\
\texttt{then} \ act_1,\ldots,act_n \\
\end{array}\right\rrbracket
\;=\;
\begin{array}[t]{l}
\mathtt{F.exists} (p,\; \mathtt{F.and}(G, W, BA))\\[3pt]
\qquad\Comment{Formula is existentially quantified }\\
\qquad \CommentLN{over event parameters $p$}\\
\end{array}\\
&& where\\
&& \qquad p \;=\; \langle \BR{p_1}, \ldots, \BR{p_n} \rangle \Comment{List of parameters}\\
&& \qquad G \;=\; \mathtt{F.and}(\mathbb{P}_\Sigma\BR{grd_1},\ldots,\mathbb{P}_\Sigma\BR{grd_n}) \Comment{Guards}\\
&& \qquad W \;=\; \mathtt{F.and}(\mathbb{P}_\Sigma\BR{w_1}, \ldots,\mathbb{P}_\Sigma\BR{w_n}) \Comment{Witnesses}\\
&& \qquad BA \;=\; \mathtt{F.and}(\mathbb{P}_\Sigma\BR{act_1},\ldots,\mathbb{P}_\Sigma\BR{act_n}) \Comment{Actions}\\
\\
%%%%%
\bullet & \mathbb{S}_\Sigma:& ContextBody \rightarrow Sen_{{\iFOPEQ}}(\Sigma)
\qquad \Comment{Context: get {\iFOPEQ} sentences from axioms}\\[2pt]
&\mathbb{S}_\Sigma & \left\llbracket\begin{array}{l}
\texttt{sets } s_1,\ldots,s_n \\
\texttt{constants } c_1, \ldots, c_n\\
\texttt{axioms }  a_1,\ldots,a_n  \\
\texttt{theorems } t_1,\ldots,t_n  \\
\end{array}\right\rrbracket
\;=\;
\{\mathbb{P}_\Sigma \BR{a_1}, \ldots, \mathbb{P}_\Sigma \BR{a_n} \}
 \\
\end{array}$
}
\caption{The semantics for the Event-B infrastructure sub-language is provided by translating Event-B machines and contexts into sets of sentences.
For machines we generate sentences over {\iEVT}, denoted $Sen_{{\iEVT}}(\Sigma)$, while for contexts we generate sentences over {\iFOPEQ}, denoted $Sen_{{\iFOPEQ}}(\Sigma)$. We use the interface operations and functions described in Figure~\ref{fig: magic} throughout this translation.   }%
\label{fig:sentrans}
\end{figure*}
\subsection{Defining the Semantics of Event-B Infrastructure Sentences}%
\label{sec:sen}
In this section, we define a translation from Event-B infrastructure sentences to sentences over {\iEVT}. We translate the axiom sentences that are found in Event-B contexts to sentences over {\iFOPEQ} as they form part of the underlying Event-B mathematical language as shown in Figure~\ref{fig: ebstruc}. We refer the reader to Section~\ref{sec:evtsen}, where we described how to translate Event-B variants, invariants and events into {\iEVT}-sentences, and the equations described here mirrors this translation.

We define an overloaded meaning function, $\mathbb{S}_\Sigma$, for
specifications in Figure~\ref{fig:sentrans}. $\mathbb{S}_\Sigma$ takes
as input a specification and returns a set of sentences over {\iEVT}
($Sen_{{\iEVT}}(\Sigma)$) for machines and a set of sentences over
{\iFOPEQ} ($Sen_{{\iFOPEQ}}(\Sigma)$) for contexts. When applying
$\mathbb{S}_\Sigma$ to a machine (resp.\ context) we also define
functions for processing invariants, variants and events
(resp.\ axioms). These are given by $\mathbb{I}_\Sigma,
\mathbb{V}_\Sigma$ and $\mathbb{E}_\Sigma$. Axioms are predicates that
can be translated into closed {\iFOPEQ}-formulae using the function
$\mathbb{P}_\Sigma$ which is defined in the interface in Figure~\ref{fig: magic}.

Given a list of invariants $i_1, \ldots, i_n$ we define the function
$\mathbb{I}_\Sigma$ in Figure~\ref{fig:sentrans}. Each invariant, $i$,
is a \textit{LabelledPred} from which we form the following open
{\iFOPEQ}-sentence:
$\mathtt{F.and}(\mathbb{P}_\Sigma\BR{i},\mathtt{F.}\iota(\Sigma.V)(\mathbb{P}_\Sigma
\BR{i}))$. Each invariant sentence is paired with each event name in
the signature to form the set $\{ \langle e,
\mathtt{F.and}(\mathbb{P}_\Sigma\BR{i},\mathtt{F.}\iota(\Sigma.V)(\mathbb{P}_\Sigma
\BR{i}))\rangle \mid e \in dom(\Sigma.E)\}$ and this ensures that the
invariant is applied to all events when evaluating the satisfaction
condition.

The use of conjunction here for invariants ($\mathtt{F.and}$) rather
than implication may seem an unusual choice.  Typically (e.g. Abrial's
INV proof obligation,~\cite[\S5.2.1]{abrial_modeling_2010}), one demands
that if the invariant holds in the before-state then it should also
hold in the after-state.  However, since our sentences are effectively
indexed by events, and our semantics are stratified over events, we
are always working in a context where a given event has occurred.
Thus, in this (localised) context, we specify that the invariant must hold
in \emph{both} the before and after states.  Since separate
sentences are conjoined when grouped, using an implication here would
also allow undesirable models when such sentences are combined. Specifically, if implication were used between the invariants in the before and after states then models that may not obey the invariant in the before state would be valid (i.e. $\mathit{false} \Rightarrow \mathit{true}$). In our context, this is incorrect because we are assuming that a given event has happened which is only possible if the invariant is true in the before state. Further, we use logical conjunction to combine sentences so that the combination imposes restrictions on the set of allowable models. Including implication causes issues here because it would essentially create a sentence with multiple options (if-then branches) that would be carried through to the models.

Given a variant expression $n$, we define the function $\mathbb{V}_\Sigma$ in Figure~\ref{fig:sentrans}. The variant is only relevant for specific events so we pair it with an event name in order to meaningfully evaluate the variant expression. An event whose status is \texttt{convergent} must strictly decrease the variant expression~\cite{abrial_modeling_2010}. An event whose status is \texttt{anticipated} must not increase the variant expression. This expression can be translated into an open {\iFOPEQ}-term using the function $\mathbb{T}_\Sigma$ as described in Figure~\ref{fig: magic}. From this we  form a formula based on the status of the event(s) in the signature $\Sigma$ as described in Section~\ref{sec:sen}. Event-B machines are only permitted to have one variant~\cite{abrial2010rodin}.%, therefore if we combine machines we need to take the conjunction of the variant expressions as the new variant.<- this is a modularisation argument.

In Figure~\ref{fig:sentrans} we define the function $\mathbb{E}_\Sigma$ to process a given event definition. Event guard(s) and witnesses are predicates that can be translated via $\mathbb{P}_\Sigma$ into open {\iFOPEQ}-formulae denoted by $G$ and $W$ respectively in Figure~\ref{fig:sentrans}. In Event-B, actions are interpreted as before-after predicates e.g. \texttt{x := x + 1} is interpreted as \texttt{x$\prime$ = x + 1}. Therefore, actions can also be translated via $\mathbb{P}_\Sigma$ into open {\iFOPEQ}-formulae denoted by $BA$ in Figure~\ref{fig:sentrans}. Thus the semantics of an $EventBody$ definition is given by the function $\mathbb{F}_\Sigma$ which returns the formula $\mathtt{F.exists} (p, \mathtt{F.and}(G, W, BA))$ where $p$ is a list of the event parameters. % chktex 26

A context can exist independently of a machine and is written as a specification over {\iFOPEQ}. Thus, we translate an axiom sentence directly as a {\iFOPEQ}-sentence which is a closed $\Sigma$-formula using the function $\mathbb{P}_\Sigma$ given in Figure~\ref{fig: magic}. Axiom sentences are closed {\iFOPEQ}-formulae (elements of $Sen_{{\iFOPEQ}}(\Sigma)$) which are interpreted as valid sentences in {\iEVT} using the comorphism $\rho$.

Figure~\ref{fig:sentrans} describes our equations for the Event-B infrastructure constructs.  Next we show the relevant equations for the Event-B superstructure constructs.

\subsection{Defining the Semantics of Event-B Superstructure Sentences}%
\label{sec:super}

Based on the syntax defined in Figure~\ref{fig1: syntax} we have
identified the core constructs that form the Event-B superstructure
language:
\begin{itemize}
\item A context \texttt{extends} other contexts
\item A machine \texttt{sees} contexts and
  \texttt{refines} another (more abstract) machine
\item An individual event \texttt{refines} other events from the
  abstract machine
\end{itemize}

\noindent
In this section, we define a semantics for the Event-B superstructure language using specification-building operators. In Figure~\ref{fig:sbotrans} we define the  function $\mathbb{B}$ to translate Event-B specifications written using the superstructure language to structured specifications over {\iEVT} that use the specification-building operators defined in the theory of institutions~\cite[\S 5.1]{sanella_foundations_2012}.
We translate a specification as described by Figure~\ref{fig1: syntax} into a structured specification over the institution {\iEVT}.

\begin{figure*}
{\renewcommand{\baselinestretch}{1.0}
\footnotesize
\textit{The semantics of an Event-B specification $SP$ are given by
$\mathbb{B}\BR{SP}\xi$, where $\xi = \mathbb{D}\BR{SP}\xi_0$ is the environment defined by Figure~\ref{fig: sigextract}}.\\[3ex]

$\begin{array}{ll@{\;}l}
%%%%%
\bullet & \mathbb{B}:& Specification \rightarrow Env \rightarrow \lvert\textbf{Spec}\rvert^*
\Comment{Process specifications in an environment,}\\[2pt]
&\mathbb{B}& \BR{\langle~\rangle}\;  \xi \;=\; \langle~\rangle
\CommentLN{build a list of structured specifications}\\
&\mathbb{B}& \BR{hd :: tl}\;  \xi
\;=\;
(\mathbb{B}\; \BR{hd}\; \xi) :: (\mathbb{B}\; \BR{tl}\; \xi)\\
\vspace{5pt}\\
%%%%%
\bullet & \mathbb{B}:& Machine \rightarrow Env \rightarrow \lvert\textbf{Spec}_{\iEVT}\rvert
\Comment{Build an {\iEVT} structured specification for one machine}\\[2pt]
&\mathbb{B}&
\left\llbracket\begin{array}{l}
\texttt{machine } m \\
\texttt{refines }  a \\
\texttt{sees }  ctx_1, \ldots, ctx_n \\
mbody \\
\texttt{end}\\
\end{array}\right\rrbracket\; \xi
\;=\;
\begin{array}{l}
\left\langle\begin{array}{l}
\Sigma, \;
\left[\begin{array}{l}
\SPEC{\BR{m}}{\iEVT} \\
\quad\CommentNF{Include contexts using the comorphism $\rho$:}\\
\quad(\BR{ctx_1} \mathtt{\;\ikw{and}\;} \ldots \mathtt{\;\ikw{and}\;} \BR{ctx_n}) \mathtt{\;\ikw{with}\;} \rho \\
\quad\Comment{Sentences from the refined machine (if any):}\\
\quad(\mathtt{\ikw{and}\;} \mathbb{A}_{\Sigma} \BR{mbody}\BR{a} \xi) \\
\mathtt{\ikw{then}} \\
\quad \mathbb{S}_\Sigma\BR{mbody}\\
\end{array}\right] \\
\end{array}\right\rangle\\[2pt]
\quad
where \ \Sigma = \xi\BR{m}. \\
\end{array}\\
\\
%%%%%
\bullet & \mathbb{A}_\Sigma: & MachineBody \rightarrow EventName \rightarrow Env \rightarrow |\mathbf{Spec}_{\iEVT}|
\Comment{Extract any relevant specification from }\\
&& \CommentLN{the refined (abstract) machine}\\[-2ex]
&\mathbb{A}_\Sigma& \left\llbracket\begin{array}{l}
\texttt{variables }  v_1, \ldots, v_n \\
\texttt{invariants }  i_1,\ldots,i_n \\
\texttt{theorems }  t_1,\ldots,t_n \\
\texttt{variant }  n  \\
\texttt{events }  e_{init}, \ e_1,\ldots,e_n \\
\end{array}\right\rrbracket \BR{a}\; \xi
\;=\;
\begin{array}[t]{l}
\mathbb{I}_\Sigma\BR{i_1} \mathtt{\;\ikw{and}\;}\ldots \mathtt{\;\ikw{and}\;}\mathbb{I}_\Sigma\BR{i_n} \\
\mathtt{\ikw{and}\;}
\mathbb{R}_\Sigma\BR{e_1}\BR{a}\xi
\mathtt{\;\ikw{and}\;}\ldots \mathtt{\;\ikw{and}\;}\mathbb{R}_\Sigma\BR{e_n}\BR{a}\xi\\[3pt]
\Comment{Conjoin sentences from each event definition}\\
\end{array}\\
\\
%%%%%
\end{array}$

$\begin{array}{ll@{\;}l}

\bullet & \mathbb{R}_\Sigma: & Event\rightarrow EventName \rightarrow Env \rightarrow |\mathbf{Spec}_{\iEVT}|
\Comment{Extract specification from one refined event}\\[2pt]
&\mathbb{R}_\Sigma&
\left \llbracket \begin{array}{l}
\texttt{event } e_c   \\
\texttt{status} \  s  \\
\texttt{refines} \  e_{1}, \ldots, e_{n}  \\
ebody\\
\texttt{end}\\
\end{array} \right \rrbracket\; \BR{a}\; \xi
\;=\;
\begin{array}{l} \mathtt{let}\\
\qquad \Sigma_a = \xi \BR{a},
 \qquad\Comment{Signature of abstract machine}\\
\qquad \CommentNF{Use $\Sigma_h$, $\sigma_h$ to select only refined events:} \\
\qquad \Sigma_h = \langle \Sigma_a.S,\, \Sigma_a.\Omega, \,\Sigma_a.\Pi,\\
\qquad \hfill \{\BR{e_1}, \ldots, \BR{e_n}\} \lhd \Sigma_a.E, \, \Sigma_a.V \rangle,\\
\qquad \sigma_h : \Sigma_h \hookrightarrow \Sigma_a,\\
\qquad \CommentNF{Use $\sigma_m$ to reassign refined event sentences to $e_c$:}\\
\qquad \sigma_m: \Sigma_h \rightarrow \Sigma\\
\qquad \sigma_m = \left\langle\begin{array}{l}
 \Sigma_h.S \hookrightarrow\Sigma.S,\;
 \Sigma_h.\Omega \hookrightarrow\Sigma.\Omega,
 \Sigma_h.\Pi\hookrightarrow\Sigma.\Pi,\\
 \Sigma_h.E \mapsto \{\BR{e_c}\} \lhd \Sigma.E,\\
 \Sigma_h.V \hookrightarrow\Sigma.V
\end{array}\right\rangle\\
\mathtt{in }\\
\qquad (\BR{a} \mathtt{\;\ikw{hide\;via}\;} \sigma_h)\mathtt{\;\ikw{with}\;} \sigma_m.
\end{array}\\
\\
%%%%%
\bullet & \mathbb{B}:& Context \rightarrow Env \rightarrow \lvert\textbf{Spec}_{\iFOPEQ}\rvert
\Comment{Build a {\iFOPEQ} structured specification for one context}\\[2pt]
&\mathbb{B} & \left\llbracket\begin{array}{l}
\texttt{context } ctx \\
\texttt{extends } ctx_1,\ldots,ctx_n \\
cbody\\
\texttt{end}
\end{array}\right\rrbracket \xi
\;=\;
\begin{array}{l}
\left\langle\begin{array}{l}
\Sigma, \;
\left[\begin{array}{l}
\SPEC{\BR{ctx}}{\iFOPEQ} \\
\qquad \BR{ctx_1} \mathtt{\;\ikw{and}\;} \ldots \mathtt{\;\ikw{and}\;}  \BR{ctx_n}\\
\mathtt{\ikw{then}}\\
\qquad \mathbb{S}_\Sigma \BR{cbody}
\end{array}\right]\\
\end{array}\right\rangle \\
\\[-1.75ex]
\qquad where \ \Sigma = \xi \BR{ctx}.
\end{array}
\end{array}$}
\caption{The semantics for the Event-B superstructure sub-language is defined by translating Event-B specifications into structured specifications over {\iEVT} using the function $\mathbb{B}$ and the specification-building operators defined in the theory of institutions (Table~\ref{table: sbo}). Note that $\textbf{Spec}$ denotes the category of specifications as per~\cite[\S 5.5]{sanella_foundations_2012}.}%
\label{fig:sbotrans}
\end{figure*}

The construct that enables a context to extend others is used in Event-B to add more details to a context. Since a context in Event-B only refers to elements of the {\iFOPEQ} component of an {\iEVT} signature it is straightforward to translate this using the specification-building operator \ikw{then}.
As outlined in Table~\ref{table: sbo}, \ikw{then} is used to enrich the signature with new sorts/operations etc.~\cite[\S 5.2]{sanella_foundations_2012}.

A context can extend and be extended by more than one context. A context that extends multiple others contains all constants and axioms of the contexts that it is extending and the additional specification that it adds to extend these contexts~\cite{abrial2010rodin}.
To give a semantics for context extension using the specification-building operators we use the \ikw{and} operator to merge all extended contexts and the \ikw{then} operator to incorporate the extending context itself.
The specification-building operator \ikw{and} gives the sum of two specifications that can be written over different signatures (Table~\ref{table: sbo}).
It is the most straightforward way to combine specifications over different signatures~\cite[\S 5.2]{sanella_foundations_2012}.

In Event-B, a machine \texttt{sees} one or more contexts. This construct is used to add a context(s) to a machine so that the machine can refer to elements of the context(s). We know that the relationship between {\iFOPEQ} and {\iEVT} is that of an institution comorphism (Definition~\ref{def:comorevtfopeq})~\cite{farrell_institution_2017}. This enables us to directly use {\iFOPEQ}-sentences, as given by the context in this case, in an {\iEVT} specification. We use the specification-building operation \ikw{with} $\rho$ which indicates  translation by an institution comorphism $\rho$~\cite[\S 10.4]{sanella_foundations_2012}. The resulting machine specification is heterogeneous as it links two institutions, {\iEVT} and {\iFOPEQ}, by the institution comorphism.

An Event-B machine can refine at most one other machine and there are two types of machine refinement: superposition and data refinement~\cite{abrial2010rodin}. The specification-building operator, \ikw{then}, can account for both of these types of refinement because either new signature components or constraints on the data (gluing invariants) are added to the specification. In Figure~\ref{fig:sbotrans} the function $\mathbb{A}_{\Sigma}$ is used to process the events in the concrete machine which refine those in the abstract machine.

Event refinement in Event-B is superposition refinement~\cite{abrial2010rodin}. By superposition refinement all of the components of the corresponding abstract event are implicitly included in the refined version. This approach is useful for gradually adding more detail to the event. In {\iEVT}, we have not prohibited multiple definitions of the same event name. When there are multiple definitions we combine them by taking the conjunction of their respective formulae which will constrain the model. As mentioned above, when refining an abstract machine we use the function $\mathbb{A}_{\Sigma}$ in Figure~\ref{fig:sbotrans} to process the events in the concrete machine which refine those in the abstract machine. $\mathbb{A}_{\Sigma}$ in turn calls the function $\mathbb{R}_{\Sigma}$ (Figure~\ref{fig:sbotrans}).

$\mathbb{R}_{\Sigma}$ restricts the event component of the abstract
machine signature to those events contained in the \texttt{refines}
clause of the event definition using domain restriction
($\dres$). $\mathbb{A}_{\Sigma}$ also extracts the invariant sentences
from the abstract machine. This new signature is included in the
abstract machine via the signature morphism $\sigma_h$. We then form
the signature morphism $\sigma_m$ by letting it be the identity on the sort,
operation, predicate and variable components of $\Sigma_h$.
For events, $\sigma_m$ maps each of the abstract event signature components that are being refined by the concrete event $e_c$ to the signature component corresponding to $e_c$. The resulting specification uses \ikw{hide via} and \ikw{with} to apply these signature morphisms ($\sigma_h$ and $\sigma_m$) correctly. Note that we could have used the \ikw{reveal} specification-building operator rather than \ikw{hide via} to achieve the same result~\cite[\S 5.2]{sanella_foundations_2012}.

Thus, we have formalised an institution-based  semantics for the three layers of the Event-B language.
In previous empirical work, we discovered that the Event-B language is lacking a well-defined set of modularisation constructs~\cite{farrell_clones_2017}. It is clear that the specification-building operators available in the theory of institutions can provide a solution to this. We validate our  semantics and show how these specifications can be modularised using the specification-building operators in the next sections.

\section{The Semantics in Action: a worked example}%
\label{sec:impl}
%%%%%%%%%%%%%%%%%%%%%%%%%%%%%%%%%%%%%%%%%%%%%%%%%%%%%%%%%%%%%%%%%%%%%%%%%%%%

In order to validate our  semantics we developed a tool to take Event-B specifications and generate the corresponding {\iEVT}-specifications, following the definitions given in Section~\ref{sec:transsemantics}.
The tool is called {\EBtoEVT} and takes as input an Event-B specification in the XML format used by {\Rodin}, and generates as output the {\iEVT}-specifications, syntactically sugared in a {\Hets}-like notation.
{\EBtoEVT} is written in Haskell~\cite{thompson1999haskell}, a choice that allows our code to closely mirror the function definitions, and is compatible with the {\Hets} code-base. Specifically, \textsc{Hets} facilitates heterogeneous specification and proof using formalisms that have been encoded as institutions.
To illustrate our approach, Figure~\ref{fig:source} shows the
associated Haskell code snippet of the $\mathbb{F}$ function from
Figure~\ref{fig:sentrans}. The reader  need only refer back to the
definition of $\mathbb{F}_\Sigma$ in Figure~\ref{fig:sentrans} to see their correspondence. Since refinement is central to systems developed using Event-B, we demonstrate how a chain of Event-B refinement steps is captured using our translational semantics. We illustrate this via the application of our {\EBtoEVT} tool to the running example in this paper.

\begin{figure}
\begin{programsc}
bbEventDefines ::[(Ident, Ident)]-> [EventDef] -> [Sentence]
bbEventDefines xx' evtList =
  let pp eBody = (eany eBody)
      gg eBody = F.and (ewhere eBody)
      ww eBody = F.and (ewith eBody)
      ba eBody = F.and (ethen eBody)
      bbEvent eBody = (F.exists (xx' ++(pp eBody)) (F.and [gg eBody, ww eBody, ba eBody]))
      bbEventDef evt = (Sentence (ename evt) (bbEvent (ebody evt)))
  in (map bbEventDef evtList)
\end{programsc}
\caption{The Haskell implementation of the $\mathbb{F}$ function from Figure~\ref{fig:sentrans} as applied to a list of Event-B event definitions.}%
\label{fig:source}
\end{figure}

\subsection{The Abstract Model}

We illustrate the use of {\EBtoEVT} using the cars on a bridge example as introduced in Section~\ref{sec:bg}. This Event-B specification has three refinement steps, resulting in four machines and three contexts. The final refinement step results in quite a large specification and so we omit it here but more details can be found in~\cite{farrell_phdthesis}.
We present the resultant {\iEVT}-specification corresponding to each of the first two refinement steps.

Figure~\ref{fig:ebm0} introduced an abstract Event-B specification and using our {\EBtoEVT} translator tool we obtain the corresponding {\iEVT}-specification that is presented in Figure~\ref{fig:evtm0}.

\begin{description}
\item[Lines 1--5] This is a {\iCASL}-specification that describes the context from Figure~\ref{fig:ebm0}. Note that {\iFOPEQ} is a sub-logic of {\iCASL}~\cite{caslref_2004} and we use {\iCASL} in its place here because it is available in {\Hets}~\cite{grumberg_heterogeneous_2007}.  The constant $d$ is represented as an operation of the appropriate type and the non-typing axioms are included as predicates. This is achieved via the function $\mathbb{P}$ that was defined in the {\iFOPEQ} interface (Figure~\ref{fig: magic}).
\item[Lines 6--20] This is the {\iEVT}-specification that is generated corresponding to the abstract machine in Figure~\ref{fig:ebm0} (lines 8--29). The machine variable, $n$, is represented as an operation and the context specification is included using the \ikw{then} specification-building operator on lines 7 and 8. This corresponds to the application of the function $\mathbb{B}$ (Figure~\ref{fig:sbotrans}) to the superstructure components of the machine.
\end{description}

With regard to the infrastructure translation described in Figure~\ref{fig:sentrans}, note that the invariant and event sentences have
been syntactically sugared in order to simplify the notation.  In
this and future examples, we use
\bkw{thenAct} in place of the \bkw{then} Event-B keyword to
distinguish it from the \ikw{then} specification-building operator.

\subsection{The First Refinement}
In Figure~\ref{fig:ebm1} we presented a refinement of the abstract specification that was introduced in Figure~\ref{fig:ebm0}. Using our {\EBtoEVT} translator tool, we obtain the {\iEVT}-specification corresponding to this machine as shown in Figure~\ref{fig:evtref1}.

\begin{description}
\item[Lines 1--3] The application of $\mathbb{B}$ to the
  \texttt{refines} and \texttt{sees} clauses of Figure~\ref{fig:ebm1} (lines 2--3) results in the application of the specification-building operators (\ikw{and}, \ikw{then}) to include the abstract machine and context specifications into the generated {\iEVT}-specification.
\item[Lines 4--9] The new variables and (non-theorem) invariants are included as operations and predicates respectively. For readability, we use a simple notation for the variant expression on line 9. The reader should understand that the semantics of this is evaluated as described in Figure~\ref{fig:sentrans}.
\item[Lines 10--31] $\mathbb{R}_{\Sigma}$ returns (\texttt{m$o$ \ikw {hide via}} $\sigma_h$) \texttt{\ikw{with}} $\sigma_m$ for each refined event. The event names have remained the same during event refinement in this example, and since events with the same name are implicitly merged we can omit this notation here.
\end{description}

\begin{figure}
\begin{minipage}{0.45\textwidth}
\begin{programs}
\SPECH{cd}
  \ikw{sort} $\mathbb{N}$
  \ikw{ops}  d:$\mathbb{N}$
  .  d $> 0$
\ikw{end}

\SPECH{m0}
  \textsc{cd}
  \ikw{then}
    ops  n:$\mathbb{N}$
    .  $n\mathbin{\leq}d$
 %      $n>0 \lor n<d$
    \EVENTS
      \INITIALISATION
        \bkw{thenAct}  n := 0
      \EVT{ML\_out}{ordinary}
        \bkw{when} $n<d$
        \bkw{thenAct}  n := n+1
      \EVT{ML\_in}{ordinary}
        \bkw{when} $n>0$
        \bkw{thenAct}  n := n-1
\ikw{end}
\end{programs}

\caption{A structured specification in the {\iEVT} institution (syntactically
  sugared)  as generated by {\EBtoEVT}.  This specification corresponds
  to the Event-B specification of the abstract machine given in Figure~\ref{fig:ebm0}.}%
\label{fig:evtm0}
\end{minipage}
\qquad
\begin{minipage}{0.45\textwidth}
\begin{programs}
\SPECH{m1}
  \textsc{m0} \ikw{and} \textsc{cd}
    \ikw{then}
    \ikw{ops}  a:$\mathbb{N}$
         b:$\mathbb{N}$
         c:$\mathbb{N}$
    .  $n=a+b+c$
       $a=0 \lor c=0$
    \bkw{variant} $2*a+b$
    \EVENTS
      \INITIALISATION
        \bkw{thenAct}  a := 0
                 b := 0
                 c := 0
      \EVT{ML\_out}{ordinary}
        \bkw{when} $a+b<d$
              $c=0$
        \bkw{thenAct} a := a+1
      \EVT{IL\_in}{convergent}
        \bkw{when}  $a>0 $
        \bkw{thenAct} a := a-1
                b := b+1
      \EVT{IL\_out}{convergent}
        \bkw{when} $0<b $
              $a=0$
        \bkw{thenAct} b := b-1
                c := c+1
      \EVT{ML\_in}{ordinary}
        \bkw{when}  $c>0$
        \bkw{thenAct}  c := c-1
\ikw{end}

\end{programs}
\caption{An {\iEVT} specification corresponding to the results of the
  first refinement step, as represented by the Event-B specification in
  Figure~\ref{fig:ebm1}.}%
\label{fig:evtref1}

\end{minipage}
\end{figure}

\begin{figure}
\begin{minipage}{0.47\textwidth}
\begin{programsc}
\CONTEXT{Color}
  \SETS Color
  \CONSTANTS
    red, green
  \AXIOMS
    \bLabel{axm4}{Color = \{green,red\}}
    \bLabel{axm3}{green \neq red}
\END

\MACHINE{m2}
  \REFINES{m1}
  \SEES{cd, Color}
  \VARIABLES a, b, c, ml\_tl, il\_tl,
    il\_pass, ml\_pass
  \INVARIANTS
   \bLabel{inv1}{ml\_tl \in \{red,green\}}
   \bLabel{inv2}{il\_tl \in \{red,green\}}
   \bLabel{inv3}{ml\_tl=green \Rightarrow c=0}
   \bLabel{inv12}{ml\_tl=green \Rightarrow a+b+c<d}
   \bLabel{inv4}{il\_tl=green \Rightarrow a=0}
   \bLabel{inv11}{il\_tl=green \Rightarrow b>0}
   \bLabel{inv6}{il\_pass\in\{0,1\}}
   \bLabel{inv7}{ml\_pass\in\{0,1\}}
   \bLabel{inv8}{ml\_tl=red \Rightarrow ml\_pass=1}
   \bLabel{inv9}{il\_tl=red \Rightarrow il\_pass=1}
   \bLabel{inv5}{il\_tl=red \lor ml\_tl=red}
   \bLabel{thm2}{0\mathbin{\geq}a \Rightarrow a=0} theorem
   \bLabel{thm3}{0\mathbin{\geq}b \Rightarrow b=0} theorem
   \bLabel{thm4}{0\mathbin{\geq}c \Rightarrow c=0} theorem
   \bLabel{thm5}{\lnot(d\mathbin{\leq}0)} theorem
   \bLabel{thm6}{b+1\mathbin{\geq}d \land \lnot(b+1=d)}
          $ \Rightarrow \lnot(b<d)$ theorem
   \bLabel{thm7}{b\mathbin{\leq}1 \land \lnot(b=1)}
          $ \Rightarrow \lnot(b>0)$ theorem
   \bLabel{thm1}{(ml\_tl= green \land a+b+1<d)}
          $\lor (ml\_tl=green \land a+b+1=d)$
          $\lor (il\_tl=green \land b>1)$
          $\lor (il\_tl=green \land b=1)$
          $\lor (ml\_tl=red \land a+b<d$
              $\land c=0 \land il\_pass=1)$
          $\lor (il\_tl=red \land 0<b \land a=0 $
              $\land ml\_pass=1)$
          $\lor 0<a \lor 0<c$ theorem
  \VARIANT ml\_pass + il\_pass
  \EVENTS
    \INITIALISATION
      \bkw{then} \bLabel{act2}{a := 0}
           \bLabel{act3}{b := 0}
           \bLabel{act4}{c := 0}
           \bLabel{act1}{ml\_tl := red}
           \bLabel{act5}{il\_tl := red}
           \bLabel{act6}{ml\_pass := 1}
           \bLabel{act7}{il\_pass := 1}

\end{programsc}
\end{minipage}
\qquad
\begin{minipage}{0.45\textwidth}
\begin{programsc}\setcounter{@@lineno}{52}
    \EVT{ML\_out1}{ordinary}
      \REFINES{ML\_out}
      \bkw{when} \bLabel{grd1}{ml\_tl = green}
           \bLabel{grd2}{a+b+1<d}
      \bkw{then} \bLabel{act1}{a := a+1}
           \bLabel{act2}{ml\_pass := 1}
    \EVT{ML\_out2}{ordinary}
      \REFINES{ML\_out}
        \bkw{when} \bLabel{grd1}{ml\_tl=green}
             \bLabel{grd2}{a+b+1=d}
        \bkw{then} \bLabel{act1}{a := a+1}
             \bLabel{act2}{ml\_tl := red}
             \bLabel{act3}{ml\_pass := 1}
     \EVT{IL\_out1}{ordinary}
       \REFINES{IL\_out}
       \bkw{when} \bLabel{grd1}{il\_tl=green}
            \bLabel{grd2}{b>1}
       \bkw{then} \bLabel{act1}{b := b-1}
            \bLabel{act2}{c := c+1}
            \bLabel{act3}{il\_pass := 1}
     \EVT{IL\_out2}{ordinary}
       \REFINES{IL\_out}
       \bkw{when} \bLabel{grd1}{il\_tl=green}
            \bLabel{grd2}{b=1}
       \bkw{then} \bLabel{act1}{b := b-1}
            \bLabel{act2}{il\_tl := red}
            \bLabel{act3}{c := c+1}
            \bLabel{act4}{il\_pass := 1}
     \EVT{ML\_tl\_green}{convergent}
       \bkw{when} \bLabel{grd1}{ml\_tl=red}
            \bLabel{grd2}{a+b<d}
            \bLabel{grd3}{c=0}
            \bLabel{grd4}{il\_pass=1}
       \bkw{then} \bLabel{act1}{ml\_tl := green}
            \bLabel{act2}{il\_tl := red}
            \bLabel{act3}{ml\_pass := 0}
     \EVT{IL\_tl\_green}{convergent}
       \bkw{when} \bLabel{grd1}{il\_tl=red}
            \bLabel{grd2}{0<b}
            \bLabel{grd3}{a=0}
            \bLabel{grd4}{ml\_pass=1}
       \bkw{then} \bLabel{act1}{il\_tl := green}
            \bLabel{act2}{ml\_tl := red}
            \bLabel{act3}{il\_pass := 0}
      \EVT{IL\_in}{ordinary}
        \REFINES{IL\_in}
        \bkw{when} \bLabel{grd11}{0<a}
        \bkw{then} \bLabel{act11}{a := a-1}
             \bLabel{act12}{b := b+1}
      \EVT{ML\_in}{ordinary}
        \REFINES{ML\_in}
        \bkw{when} \bLabel{grd1}{0<c}
        \bkw{then} \bLabel{act1}{c := c+1}
\END
\end{programsc}
\end{minipage}
\caption{An Event-B specification of a machine \texttt{m2} that
  refines the Event-B machine in Figure~\ref{fig:ebm1} by adding new
  events \texttt{Ml\_tl\_green} and \texttt{Il\_tl\_green}. The
  context \texttt{Color} on lines 1--8 adds a new data type which is
  used by the \texttt{ml\_tl} and \texttt{il\_tl} traffic light variables.}%
\label{fig:ebm2}
\end{figure}

\begin{figure}
\begin{minipage}{0.45\textwidth}
\begin{programsc}
\SPECH{COLOR}
  \ikw{sorts}  Color
  \ikw{ops}  red: Color
       green: Color
  .  Color = \{green,red\}
     green $\neq$ red
\ikw{end}

\SPECH{m2}
  \textsc{m1} \ikw{and} \textsc{cd} \ikw{and} \textsc{COLOR} \ikw{and}
  (\textsc{m1} \ikw{ hide via } $\sigma_{h_{\texttt{ ML\_out}}})$ \ikw{with} $ \sigma_{m_{\texttt{ ML\_out}\mapsto\texttt{ML\_out1}}}$ \ikw{and}
  (\textsc{m1} \ikw{ hide via } $\sigma_{h_{\texttt{ ML\_out}}})$ \ikw{with} $ \sigma_{m_{\texttt{ ML\_out}\mapsto\texttt{ML\_out2}}}$ \ikw{and}
  (\textsc{m1} \ikw{ hide via } $\sigma_{h_{\texttt{ IL\_out}}})$ \ikw{with} $ \sigma_{m_{\texttt{ IL\_out}\mapsto\texttt{IL\_out1}}}$ \ikw{and}
  (\textsc{m1} \ikw{ hide via } $\sigma_{h_{\texttt{ IL\_out}}})$ \ikw{with} $ \sigma_{m_{\texttt{ IL\_out}\mapsto\texttt{IL\_out2}}}$
  \ikw{then}
    \ikw{ops}  ml\_tl : \{red,green\}
      il\_tl :  \{red,green\}
      il\_pass :  \{0,1\}
      ml\_pass :  \{0,1\}
    .  ml\_tl=green $\Rightarrow$ c=0
       ml\_tl=green $\Rightarrow$ a+b+c<d
       il\_tl=green $\Rightarrow$ a=0
       il\_tl=green $\Rightarrow$ b>0
       ml\_tl=red $\Rightarrow$ ml\_pass=1
       il\_tl=red $\Rightarrow$ il\_pass=1
       il\_tl=red $\lor$ ml\_tl=red
    \bkw{variant}  ml\_pass+il\_pass
    \EVENTS
      \INITIALISATION
        \bkw{thenAct}  a := 0
                 b := 0
                 c := 0
                 ml\_tl := red
                 il\_tl := red
                 ml\_pass := 1
                 il\_pass := 1
    \EVT{ML\_out1}{ordinary}
        \bkw{when}  ml\_tl=green
              a+b+1<d
        \bkw{thenAct}  a := a+1
                 ml\_pass := 1

\end{programsc}
\end{minipage}
\qquad
\begin{minipage}{0.45\textwidth}
\begin{programsc}\setcounter{@@lineno}{40}
    \EVT{ML\_out2}{ordinary}
        \bkw{when}  ml\_tl=green
              a+b+1=d
        \bkw{thenAct}  a := a+1
                 ml\_tl := red
                 ml\_pass := 1
    \EVT{IL\_out1}{ordinary}
        \bkw{when}  il\_tl=green
              b>1
        \bkw{thenAct}  b := b-1
                 c := c+1
                 il\_pass := 1
    \EVT{IL\_out2}{ordinary}
        \bkw{when}  il\_tl=green
              b=1
        \bkw{thenAct}  b := b-1
                 il\_tl := red
                 c := c+1
                 il\_pass := 1
    \EVT{ML\_tl\_green}{anticipated}
        \bkw{when}  ml\_tl=red
              a+b<d
              c=0
              il\_pass=1
        \bkw{thenAct}  ml\_tl := green
                 il\_tl := red
                 ml\_pass := 0
    \EVT{IL\_tl\_green}{anticipated}
        \bkw{when}  il\_tl=red
              0<b
              a=0
              ml\_pass=1
        \bkw{thenAct}  il\_tl := green
                 ml\_tl := red
                 il\_pass := 0
    \EVT{IL\_in}{ordinary}
        \bkw{when}  0<a
        \bkw{thenAct}  a := a-1
                 b := b+1
    \EVT{ML\_in}{ordinary}
        \bkw{when}  0<c
        \bkw{thenAct} c := c-1
\ikw{end}
\end{programsc}
\end{minipage}
\caption{A structured specification in the {\iEVT} institution corresponding to
  the Event-B specification of Figure~\ref{fig:ebm2}, representing the
  results of the second refinement step.}%
\label{fig:evtref2}
\end{figure}

\subsection{The Second Refinement}

Figure~\ref{fig:ebm2} contains an Event-B specification that shows the
second refinement step, which adds a substantial amount of new
behaviour to the specification. The first addition is a context (lines
1--8) containing colours which provide values for the new variables
(\texttt{ml\_tl} and \texttt{il\_tl}). These are used to control the
behaviour of a pair of traffic lights (line 12). The new variables
\texttt{ml\_pass} and \texttt{il\_pass} on line 13 are used to record whether or
not a car has passed through the mainland or island traffic light
respectively. The objective of these variables is to ensure that the
traffic lights do not change so rapidly that no car can pass through
them~\cite{abrial_modeling_2010}.  Events for these lights were added
to the machine and the current events were modified to account for
this behaviour. In particular, each of the abstract events
(\texttt{ML\_out} and \texttt{IL\_out} in Figure~\ref{fig:ebm1}) are
refined by two events (\texttt{ML\_out1}, \texttt{ML\_out2},
\texttt{IL\_out1} and \texttt{IL\_out2} in Figure~\ref{fig:ebm2}).

As can be seen from Figure~\ref{fig:ebm2}, there is a good degree of
repetition in the Event-B specification resulting from common cases
for the numeric variables.  These can be represented much more
succinctly as a structured specification in our institution as shown in Figure~\ref{fig:evtref2}.
\begin{description}
\item[Lines 1--7] This {\iCASL} specification specifies the new data type $\texttt{Color}$. This corresponds to the context described in the Event-B model in Figure~\ref{fig:ebm2}.
\item[Lines 8--14] The superstructure component of the translation includes the abstract machine (\texttt{m$0$}), and the contexts (\texttt{cd} and \texttt{COLOR}). In this case the names of the events are changed in the refinement step. The event \texttt{ML\_out} is decomposed into the events \texttt{Ml\_out1} and \texttt{ML\_out2}. The event \texttt{IL\_out} follows a similar decomposition into \texttt{IL\_out1} and \texttt{IL\_out2}. We account for this refinement using the signature morphisms and specification-building operators on lines 10--13. We have not explicitly described these signature morphisms but their functionality should be obvious from their respective subscripts.
\item[Line 26] As before, we use a simple variant notation for the variant expression.
\item[Lines 27--83] The events are translated as before for the remainder of the specification.
\end{description}
This example illustrates the use of {\EBtoEVT} as a means for bridging
the gap between the {\Rodin} and {\Hets} software eco-systems. The
source code for {\EBtoEVT} along with more examples of its use can be found on our git repository\footnote{\url{https://github.com/mariefarrell/EB2EVT.git}}.

%%%%%%%%%%%%%%%%%%%%%%%%%%%%%%%%%%%%%%%%%%%%%%%%%%%%%%%%%%%%%%%%%%%%%%%%%%%%
\section{Enhanced Refinement and Modularisation in {\iEVT}}%
\label{sec:refmod}
%%%%%%%%%%%%%%%%%%%%%%%%%%%%%%%%%%%%%%%%%%%%%%%%%%%%%%%%%%%%%%%%%%%%%%%%%%%%

Our formalisation of Event-B in terms of the theory of institutions offers a number of benefits. In this section, we provide a discussion of refinement in this setting and how we can exploit the use of specification-building operators for modularisation of Event-B specifications.
\subsection{Representing Refinement Explicitly}
As is evidenced by our running example and Section~\ref{sec:eb}, refinement is a central aspect of the Event-B methodology, therefore any formalisation of Event-B must be capable of capturing refinement. The theory of institutions equips us with a basic notion of refinement as \emph{model-class inclusion} where the class of models of a specification comprises the models that satisfy the specification. We formally define what is meant by \textit{model-class} in Definition~\ref{def:modclass}.

\begin{defi}[Model-Class]%
\label{def:modclass}
For any $\Phi \subseteq\mathbf{Sen}(\Sigma)$ of $\Sigma$-sentences, the class $Mod_{\Sigma}(\Phi) \subseteq\lvert\mathbf{Mod}(\Sigma)\rvert$ of models of $\Phi$ is defined as the class of all $\Sigma$-models that satisfy all the sentences in $\Phi$.
\end{defi}
Note that we omit the $\Sigma$ subscript where the signature is clear from the context.
With regard to institution-theoretic refinement, the class of models of the concrete specification is a subset of the class of models of the abstract specification~\cite[\S 7.1]{sanella_foundations_2012}.

We consider two cases: (1) when the signatures are the same and (2) when the signatures are different. In the case where the signatures are the same, refinement is denoted as:\\
\centerline{$SP_A \sqsubseteq SP_C \quad\iff\quad Mod(SP_C) \subseteq Mod(SP_A)$}
where $SP_A$ is an abstract specification that refines ($\sqsubseteq$) to a concrete specification $SP_C$, $Mod(SP_A)$ and $Mod(SP_C)$ denote the class of models of the abstract and concrete specifications, respectively. This means that, given specifications with the same signature, the concrete specification should not exhibit any model that was not possible in the abstract specification. This equates to the definition of general refinement outlined in~\cite{reeves_general_2008,reeves_general_2008-1}.

In terms of Event-B, data refinement is not supported in this way because it would involve a change of signature. Instead, refinement occurs between machines by strengthening the invariants and the guards but does not provide any new variables or events. In this case, the specification is made more deterministic by further constraining the invariants and guards. This increase in determinism is supported by the refinement calculi~\cite{back1990refinement, back_refinement_1998, morgan_refinement_1988, morris_theoretical_1987}.

When the signatures are different and related by a signature morphism $\sigma: Sig[SP_A] \rightarrow Sig[SP_C]$ then we can use the corresponding model morphism in order to express refinement. This model morphism is used to interpret the concrete specification as containing only the signature items from the abstract specification. Here, refinement is the model-class inclusion of the models of the concrete specification restricted into the class of models of the abstract specification using the model morphism. In this case we write:
\[SP_A \sqsubseteq SP_C \quad\iff\quad Mod(\sigma)(SP_C) \subseteq Mod(SP_A)\]
where $Mod(\sigma)(SP_C)$ is the model morphism applied to the model-class of the concrete specification $SP_C$. This interprets each of the models of $SP_C$ as models of $SP_A$ before a refinement relationship is determined. Note that we could have alternatively represented this kind of refinement using the \ikw{hide via} specification-building operator to transform $SP_C$ into a specification over the signature $Sig[SP_A]$ using the appropriate morphism.

In this scenario, both data refinement and superposition refinement are supported directly in {\iEVT}, since the refined or added variables and events can be removed, using a suitable model morphism as outlined above, before refinement is completed. Schneider et al.\ used a similar notion of hiding when they provided a CSP semantics for Event-B refinement~\cite{schneider_behavioural_2014}.

\begin{figure}
\begin{programs}
\REFS{REF0}{M0}{M1}
   ML\_in $\mapsto$ ML\_in, ML\_out $\mapsto$ ML\_out
\ikw{end}

\REFS{REF1A}{M1}{M2}
   ML\_in $ \mapsto$ ML\_in, ML\_out $\mapsto$ ML\_out1,
   IL\_in $ \mapsto$ IL\_in, IL\_out $\mapsto$ IL\_out1
\ikw{end}

\REFS{REF1B}{M1}{M2}
   ML\_in $\mapsto$ ML\_in, ML\_out $\mapsto$ ML\_out2,
   IL\_in $\mapsto$ IL\_in, IL\_out $\mapsto$ IL\_out2,
\ikw{end} \end{programs}
\caption{A specification in the {\iEVT} institution that explicitly
  defines the refinement relationships between the abstract machine
  and its two concrete refinements. Note that, for readability, we omit the status mappings since all statuses are \texttt{ordinary} in this case and remain as so after signature morphism.}%
\label{refinement}
\end{figure}

Figure~\ref{refinement} uses the refinement syntax available in {\Hets} to specify each of the refinement steps in the cars on a bridge example.  Note that this allows us to express the refinement as a separate entity, whereas in Event-B the refinement step is woven into the definition of the refined machine.
In Figure~\ref{refinement}, lines 1--3 define the refinement step between the abstract specification \texttt{\textsc{m0}} and the concrete specification \textsc{m1}. In this step the events are refined but no renaming or statuses are changed so it is very straightforward.
Lines 4--11 define the refinement step between the specification \texttt{\textsc{m1}} and the concrete specification \texttt{\textsc{m2}}. There are two separate refinements (\texttt{REF1A} and \texttt{REF1B}) because the events \texttt{ML\_out} and \texttt{IL\_out} are split into two further events each as presented in Figure~\ref{fig:evtref2}.

The example that we presented in this paper does not make use of data refinement in Event-B but if it did, it would also be possible to capture this kind of refinement in the {\Hets}-style notation above. An example of this was presented in our previous work~\cite{farrell_institution_2017}.

\subsection{Modularisation using Specification-Building Operators}%
\label{sec:modular}

During the evolution of the Event-B formalism from Classical-B, certain facilities for the reuse of machine specifications disappeared. These include the modularisation properties supplied by the keywords \texttt{INCLUDES} and \texttt{USES} which facilitated the use of an existing machine in other developments~\cite{silva_supporting_2009}. Since then there have been numerous attempts at regaining such modularisation features~\cite{Silva2012,gondal_feature_2009,gondal_composing_2011,hoang2011survey,Butler2009, poppleton2008composition}. Modularisation in Event-B does not require re-engineering the language in the same way as Object-Z or VVSL, but rather, modifications are made by building plugins for {\Rodin}.

Modularisation, in terms of decomposition, in Event-B was first described by Abrial as the act of cutting a large system of events into smaller pieces that can each be refined independently of the others~\cite{abrial_refinement_2007}. This offers the advantage of separation of concerns even though there still needs to be a link between the parts. The central idea is that the specifications constructed via decomposition and further refinement may be easily recombined resulting in a specification that could have been obtained without using any decomposition techniques in the first place. Three primary types of Event-B decomposition have been identified~\cite{hoang2011survey}. These are \textit{shared variable}~\cite{abrial2009event}, \textit{shared event}~\cite{Butler2009} and \textit{modularisation}~\cite{iliasov_modules_2010}. Another related approach is \textit{generic instantiation}~\cite{abrial_refinement_2007}.

It is clear, from observations of industrial projects and the sheer volume of {\Rodin} plugins developed for modularisation in Event-B, that there is an underlying demand for modularisation~\cite{iliasov_modules_2010,hoang2011survey}.
In previous work, we assembled and analysed a corpus of Event-B specifications in light of their metrics and the incidence of specification clones throughout the corpus~\cite{farrell_clones_2017}. This work supported the claim that the Event-B formal specification language is in need of a set of well-defined modularisation constructs and we postulated that the specification-building operators of the theory of institutions could fill this void.

We have used the specification-building operators throughout our definitions in Section~\ref{sec:transsemantics} as a way of naturally adding modularisation to these specifications and this shines through in the example that we have shown above. Notably, we can use these specification-building operators to further modularise the specifications in our illustrative example. In particular, the {\iEVT}-specification in Figure~\ref{fig:ebm1} could be further modularised as illustrated in Figure~\ref{fig: modularm1}.

\begin{figure}
\begin{minipage}{0.35\textwidth}
\begin{programsc}
\SPECH{datam0}
  \textsc{cd} \ikw{then}
    ops n:$\mathbb{N}$
    . n $\leq$ d
  %    n > 0 $\lor$ n < d
    \EVENTS
      \INITIALISATION
        \bkw{thenAct} n := 0
\ikw{end}

\SPECH{datam1}
  \textsc{datam0} \ikw{then}
    \ikw{ops} a, b, c : $\mathbb{N}$
    . n = a + b + c
      a = 0 $\lor$ c = 0
    \bkw{variant} 2$*$a+b
    \EVENTS
      \INITIALISATION
        \bkw{thenAct}  a := 0
                 b := 0
                 c := 0
\ikw{end}
\end{programsc}
\end{minipage}
\qquad
\begin{minipage}{0.55\textwidth}
\begin{programsc}\setcounter{@@lineno}{20}
\SPECH{inout}
  ops v1, v2 : $\mathbb{N}$
  \EVENTS
    \EVT{out}{ordinary}
      \bkw{when} v1 = 0
      \bkw{thenAct} v2 := v2 + 1
    \EVT{in}{ordinary}
      \bkw{when} v1 > 0
      \bkw{thenAct} v1 := v1 + 1
\ikw{end}

\SPECH{m1}
  \textsc{datam1} \ikw{and} \textsc{m0} \ikw{and}
  \textsc{inout} \ikw{with} \{$\langle$out, ordinary$\rangle$ $\mapsto$ $\langle$ML\_out, ordinary$\rangle$,
               $\langle$in, ordinary$\rangle$ $\mapsto$ $\langle$ML\_in, ordinary$\rangle$,
                v1 $\mapsto$ c, v2 $\mapsto$ a\} \ikw{and}
  \textsc{inout} \ikw{with} \{$\langle$out, ordinary$\rangle$ $\mapsto$ $\langle$IL\_out, convergent$\rangle$,
              $\langle$in, ordinary$\rangle$ $\mapsto$ $\langle$IL\_in, convergent$\rangle$,
               v1 $\mapsto$ a, v2 $\mapsto$ c\}
   \ikw{then}
     \EVENTS
       \EVT{ML\_out}{ordinary}
         \bkw{when} a + b < d
       \EVT{IL\_in}{convergent}
         \bkw{thenAct} b:= b + 1
       \EVT{IL\_out}{convergent}
         \bkw{when} 0 < b
         \bkw{thenAct} b := b + 1
\ikw{end}
\end{programsc}
\end{minipage}
\caption{A structured specification in the {\iEVT} institution that is an even more
  modular version of the structured specification in Figure~\ref{fig:evtref1}. }%
\label{fig: modularm1}
\end{figure}

\begin{description}
\item[Lines 1--20] The specifications \texttt{\textsc{datam0}} (lines 1--8) and \texttt{\textsc{datam1}} (lines 9--20) capture the data parts of the \texttt{\textsc{m0}} and \texttt{\textsc{m1}} specifications and their \texttt{Initialisation} events (see Figures~\ref{fig:evtm0} and~\ref{fig:evtref1} for correspondence).
\item[Lines 21--30] Since the \texttt{ML$\_$in} and \texttt{ML$\_$out} events are quite similar, modulo renaming of variables, to the \texttt{IL$\_$in} and \texttt{IL$\_$out} events we construct the \texttt{\textsc{inout}} specification. This defines an \texttt{out} (lines 24--26) and an \texttt{in} event (lines 27--29) using natural number variables \texttt{v1} and \texttt{v2} (introduced on line 22).
\item[Lines 31--48] This is a modular specification of \texttt{\textsc{m1}} that instantiates two copies of the \texttt{\textsc{inout}} specification with the appropriate signature morphisms mapping the \texttt{out} event to \texttt{ML$\_$out} in the first instance and to \texttt{IL$\_$out} in the second. Correspondingly, \texttt{in} maps to \texttt{ML$\_$in} and then to \texttt{IL$\_$in}. Further to this, the sentences after the \ikw{then} keyword (lines 40--48) add extra behaviour to these events that was not made explicit in the \texttt{\textsc{inout}} specification.
\end{description}

This specification is evidently more modular than the original version by modelling the simple behaviour of coming in and going out in a separate specification and then instantiating it twice with the appropriate events and variables renamed. This kind of modularisation of specifications is currently not available in the Event-B formal specification language, but our {\iEVT} institution allows us to enhance Event-B with these powerful modularisation constructs.

As already mentioned, a number of plugins exist with the objective of providing modularisation features for Event-B. However, although all of these plugins offer some form of modularisation for Event-B specifications, it is not clear how specifications developed utilising more than one of these plugins could be combined in practice, or, if this is even possible. %In other work, we have shown how each of these approaches can actually be captured using the specification-building operators in the theory of institutions~\cite{farrell_phdthesis}.

\subsection{Capturing Current Event-B Modularisation}

In this subsection, we illustrate how the functionality of a selection
of the current \textsf{Rodin} modularisation plugins can be captured
using the specification-building operators and parametrisation made
available in the theory of institutions using {\iEVT}.
Most of these plugins embody a shared variable or shared event style
of modularisation (e.g~\cite{Silva2012,Butler2009,iliasov_modules_2010}), and this section
demonstrates how our semantics provides a theoretical foundation for
current modularisation approaches in Event-B.  A more thorough
analysis of the current modularisation approaches available in the
\textsf{Rodin} Platform can be found in~\cite{farrell_phdthesis}.

\subsubsection{Shared Variable and Shared Event Decomposition}

Modularisation is straightforward when the events in a machine each
reference their own, separate set of variables.  However, the more common
situation is for events to share variables, as is illustrated by the
machine \texttt{M} at the top of Figures~\ref{fig: sv} and~\ref{fig: se}.  In
machine \texttt{M}, each of the three variables $\{\mathtt{v1}, \mathtt{v2}, \mathtt{v3}\}$ is
referenced by two different events (we have omitted the event status and
variable sorts from this example to increase clarity).  In each case
we wish to split machine \texttt{M} into two equivalent sub-machines \texttt{M1} and
\texttt{M2}, and Figures~\ref{fig: sv} and~\ref{fig: se} show two
different approaches to achieving this in Event B.

Figure~\ref{fig: sv} illustrates the \textit{shared variable}
approach, where an Event-B machine is decomposed into sub-machines
by grouping events which share the same variables~\cite{abrial_refinement_2007}.
Since this may not produce a partition, the user must choose which
events to allocate to a sub-machine, e.g., in Figure~\ref{fig: sv} we
allocate events \texttt{e1} and \texttt{e2} to sub-machine \texttt{M1}
and events \texttt{e3} and \texttt{e4} to sub-machine \texttt{M2}.
Since \texttt{e2} and \texttt{e3} both
refer to \texttt{v2} in Figure~\ref{fig: sv}, `external' events
\texttt{e3\_e} and \texttt{e2\_e} are added to
abstractly describe the behaviour of the shared variable in the other
sub-machine.

%Neither these shared variables nor external events may
%be refined in their respective machine(s). Although there is no
%intention to recompose these submodels, it is theoretically possible
%to do so~\cite{abrial_refinement_2007}. This approach enables the user
%to independently refine submodels and facilitates several developers
%working on different submodels of the same large system at the same
%time~\cite{abrial2009event}.

\begin{figure}
\centering
\begin{minipage}{0.45\textwidth}
\centering
\includegraphics[scale=0.5]{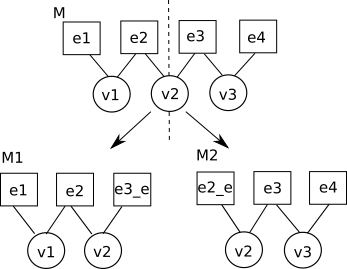}
\caption{The \emph{shared variable} decomposition of machine
  \texttt{M} into sub-machines \texttt{M1} and \texttt{M2}. Here we
  wish to partition machine \texttt{M} into two sets of events, as shown by
  the dotted line. }%
\label{fig: sv}
\end{minipage}
\qquad
\begin{minipage}{0.45\textwidth}
\begin{programsc}

\SPECH{M1}
  (\textsc{M} \ikw{hide via} $ \sigma_1$)
    \ikw{with} \{\,e3 $\mapsto$ e3\_e\,\}
\ikw{end}
  where $\sigma_1$ = \{\,v1 $\mapsto$ v1, v2 $\mapsto$ v2, e1 $\mapsto$ e1,
               e2 $\mapsto$ e2, e3 $\mapsto$ e3\,\}

\SPECH{M2}
  (\textsc{M} \ikw{hide via} $ \sigma_2$)
    \ikw{with} \{\,e2 $\mapsto$ e2\_e\,\}
\ikw{end}
  where $\sigma_2$ = \{\,v2 $\mapsto$ v2, v3 $\mapsto$ v3, e2 $\mapsto$ e2,
               e3 $\mapsto$ e3, e4 $\mapsto$ e4\,\}
\end{programsc}
\caption{Representing the \emph{shared variable} modularisation used
  in Figure~\ref{fig: sv} using specification-building operators.}%
\label{fig: sbodecomp}
\end{minipage}

\end{figure}

Figure~\ref{fig: sbodecomp} recasts the shared variable example of
Figure~\ref{fig: sv} using specification-building operators.
The specification-building operator \ikw{hide via}, used on lines 2
and 8 of Figure~\ref{fig: sbodecomp}, is used to hide auxiliary
signature items (variables and events in this case) by means of
omitting them from the signature morphisms $\sigma_1$ and
$\sigma_2$.  The two external events are created using the signature
morphisms defined in lines 3 and 9 of Figure~\ref{fig: sbodecomp}.

In contrast, Figure~\ref{fig: se} illustrates \textit{shared event}
decomposition, which partitions an Event-B machine
based on variables which participate in the same events~\cite{abrial_refinement_2007}.
In this example, a clean partition is not possible, and the user has
chosen to allocate variable \texttt{v1} to sub-machine \texttt{M1}
and variables \texttt{v2} and \texttt{v3} to sub-machine \texttt{M2}.
The problem now is caused by event \texttt{e2} which uses variables
\texttt{v1} and \texttt{v2} and thus cannot be cleanly allocated to
either machine.   The solution is to decompose event \texttt{e2} into
two events \texttt{e2{\_1}} and \texttt{e2{\_2}}. These new events
each correspond to a restricted version of the original event
\texttt{e2}, each confined to the guards and actions that refer to
each of the variable sets (i.e. \texttt{e2{\_1}} contains only the
guards and actions of \texttt{e2} that referenced \texttt{v1}).

\begin{figure}

\begin{minipage}{0.45\textwidth}
\centering
\includegraphics[scale=0.5]{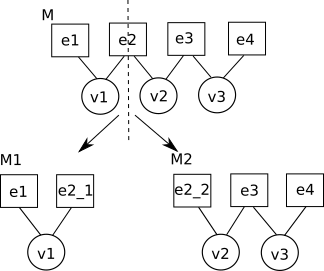}
\caption{The \emph{shared event} decomposition of machine \texttt{M}
  into sub-machines \texttt{M1} and \texttt{M2}. Here we
  wish to partition machine \texttt{M} into two sets of variables, as shown by
  the dotted line.}%
\label{fig: se}
\end{minipage}
\qquad
\begin{minipage}{0.45\textwidth}
\begin{programsc}
\SPECH{M1}
  (\textsc{M} \ikw{hide via} $\sigma_1$)
    \ikw{with} \{\,e2 $\mapsto$ e2\_1\,\}
\ikw{end}
  where
    $\sigma_1$ = \{\,v1 $\mapsto$ v1, e1 $\mapsto$ e1, e2 $\mapsto$ e2\,\}

\SPECH{M2}
  (\textsc{M} \ikw{hide via} $\sigma_2$)
    \ikw{with} \{\,e2 $\mapsto$ e2\_2\,\}
\ikw{end}
  where
    $\sigma_2$ = \{\,v2 $\mapsto$ v2, v3 $\mapsto$ v3, e2 $\mapsto$ e2,
           e3 $\mapsto$ e3, e4 $\mapsto$ e4\,\}

\end{programsc}
\caption{Representing the \emph{shared event} modularisation used
  in Figure~\ref{fig: se} using specification-building operators}%
\label{fig: sboparallel}
\end{minipage}

\end{figure}

Figure~\ref{fig: sboparallel} represents this decomposition using
specification-building operators.  In each case we use hiding with a
signature morphism to specify the signature components that we wish to
keep, and then the shared event is renamed.  The identity mappings for
the variables on lines 6, 12 and 13 have the effect of hiding those
variables that belong to the other machine.  Thus these variables can
still be used in constraints, but are no longer externally visible.
Since the machines \texttt{M1} and \texttt{M2} do not share any common
variables they can now be refined independently.

\subsubsection{Generic Instantiation}%
\label{sec:gen}

The \textit{generic instantiation} approach to re-use in Event B is
intended to facilitate the parametrisation of machines, so that a
refinement chain can be re-used~\cite{abrial_refinement_2007}.  The
idea is that a chain of existing machine refinements is made generic
with respect to all of the carrier sets, constants and axioms that
have been collected in the contexts used throughout. It is then
possible for a user to instantiate these sets and constants
differently in a new model.  In order to reuse the proofs of the old
model, its axioms must then be proven as theorems after instantiation
in the new model. This process facilitates the reuse of existing
Event-B models thus providing a more modular approach to specification
in Event-B.

%There are two different generic instantiation plugins, the first is by University of Southampton~\cite{silva_supporting_2009} and the second is by ETH Zurich and Hitachi. The first appears to be more mature and is the only one available directly through the ``Install new Software'' option in the \textsf{Rodin} Platform. The objective is to enable the reuse of generic developments in other formal developments and to provide a mechanism for extending the instantiation to a chain of refinements.

Figure~\ref{fig:geninsplugin} is an example of using the generic
instantiation plugin for {\Rodin}~\cite{silva_supporting_2009}. Here, a
machine \texttt{mac} has been specified (in outline) on lines 6--10,
and it depends on a context \texttt{ctx1} specified on lines 1--5.
This is standard Event B syntax, and the machine is not marked as
genericised in any way.  The instantiation is shown on lines 11--20 of
Figure~\ref{fig:geninsplugin}, where a machine \texttt{maci} is
specified (again, in outline), and declared to instantiate
\texttt{mac}, by providing a context \texttt{ctx2} as an
implementation of \texttt{ctx1} (lines 12--13).  The details of the
instantiation are made explicit in the following lines, where the
elements of the context \texttt{ctx1} are replaced by corresponding
elements from \texttt{ctx2} (lines 14--16), and the variables and events of
\texttt{mac} are renamed (lines 17--19).

The generic instantiation plugin generates extra proof
obligations to ensure that the instantiation of the original machine
is a valid one and that if the instantiated machine refines an
existing model that this is a valid refinement. The machine and its
instance are linked via refinement and once the proof obligations have
been carried out successfully then the rest of the pattern refinement
chain can be instantiated.

\begin{figure}
\begin{minipage}{0.49\textwidth}
\begin{programsc}
\CONTEXT{ctx1}
  \SETS S$_1$,\ldots,S$_m$
  \CONSTANTS C$_1$,\ldots,C$_n$
  \AXIOMS Ax$_1$,\ldots,Ax$_p$
\END

\MACHINE{mac}
  \SEES{ctx1}
  \VARIABLES v$_1$,\ldots,v$_q$
  \EVENTS ev$_1$,\ldots,ev$_r$
\END

\INSTANTIATEDMACHINE{maci}
  \INSTANTIATES{mac}{ctx1}
  \SEES{ctx2} //\textit{contains instance properties}
  \REPLACE \hfill//\textit{replace parameters defined in ctx1}
    \SETS S$_1$ := ctx2\_S$_1$,\ldots, S$_m$ := ctx2\_S$_m$
    \CONSTANTS C$_1$ := ctx2\_C$_2$,\ldots, C$_n$ := ctx2\_C$_n$
  \RENAME \hfill// \textit{rename elements in mac}
    \VARIABLES v$_1$ := ctx2\_v$_1$,\ldots, v$_q$ := ctx2\_v$_q$
    \EVENTS ev$_1$ := ctx2\_ev$_1$,\ldots, ev$_r$ := ctx2\_ev$_r$
\END

\end{programsc}
\caption{The machine \texttt{maci} is defined as an instantiation of
the machine \texttt{mac} using the generic instantiation plugin.}%
\label{fig:geninsplugin}
\end{minipage}
\qquad
\begin{minipage}{0.45\textwidth}
\begin{programsc}
\SPECH{ctx1}
  \ikw{sorts} S$_1$,\ldots, S$_m$
  \ikw{ops} C$_1$,\ldots, C$_n$
  . Ax$_1$,\ldots,Ax$_p$
\ikw{end}

\SPECH{mac}
  \textsc{ctx1} \ikw{then}
    \ikw{ops} v$_1$,\ldots,v$_q$
    \EVENTS ev$_1$,\ldots,ev$_r$
\ikw{end}

\SPECH{maci}
  (\textsc{ctx2} \ikw{then} \textsc{mac})
   \ikw{with}
     \{S$_1$ $\mapsto$ ctx2\_S$_1$,\ldots, S$_m$ $\mapsto$ ctx2\_S$_m$,
      C$_1$ $\mapsto$ ctx2\_C$_2$,\ldots, C$_n$ $\mapsto$ ctx2\_C$_n$,
      v$_1$ $\mapsto$ ctx2\_v$_1$,\ldots, v$_q$ $\mapsto$ ctx2\_v$_q$,
      ev$_1$ $\mapsto$ ctx2\_ev$_1$,\ldots, ev$_r$ $\mapsto$ ctx2\_ev$_r$\}
\ikw{end}
%\SPECH{macp[\ikw{sort} s1, \ikw{ops} c1, v1]}
%  \textsc{ctx1} \ikw{then}
%  \EVENTS
%    \EVT{e1}{st} k,ko
%    \ldots
%\ikw{end}

%\SPECH{maci}
%  \textsc{ctx2} \ikw{then}
%  \textsc{macp}[\textsc{mac1}]
%    \ikw{with} \{ s1 $\mapsto$ s2, c1 $\mapsto$ c2, v1 $\mapsto$ v2,
%    e1 $\mapsto$ e2 \}
%\ikw{end}
\end{programsc}

\caption{A representation of generic instantiation using
  specification-building operators in {\iEVT}.}%
\label{fig:genins}
\end{minipage}
\end{figure}

%Parameterisation and instantiation is supported in {\iEVT} using the standard specification-building operators.

There are many ways of implementing this using the specification-building operators, and we provide one example in Figure~\ref{fig:genins}. In fact, the theory of institutions is equipped with
a concept of parameterisation and it is easy to see how we could have
used this to represent the generic instantiation in Figure~\ref{fig:genins} by giving the context, \texttt{ctx1} as a parameter
to the specification for \texttt{mac} with the appropriate signature
morphisms in the {\iEVT} institution.  However, this would involve
changing the original specification of \texttt{mac}, and so we have
chosen a specification in Figure~\ref{fig:genins} that more closely
resembles the approach shown in Figure~\ref{fig:geninsplugin}.
The main elements of the instantiation are the inclusion of
\texttt{ctx2} and \texttt{mac} in \texttt{maci} (line 12), along
with a series of appropriate renamings (lines 13--17) to match those in
the original.

% To illustrate the flexibility of our approach we alternatively provide an illustration of this same machine rewritten using parametrisation on lines 7--17 of Figure~\ref{fig:genins}.

In fact, the possibilities provided by the specification-building
operators are much more general than those provided by the generic
instantiation plugin.  Using the generic instantiation plugin, only
user defined sets can be replaced so it is not possible, for example,
to replace $\mathbb{Z}$, $\mathbb{N}$ or $BOOL$.  Moreover, it is a
requirement that \emph{all} sets and contexts be replaced ensuring
that there are no uninstantiated parameters.  One of the main benefits
of using specification-building operators over {\iEVT} is that it is
now possible to replace any sets that a user wishes, once the
appropriate signature morphisms have been defined by the user.  It is
also possible to generalise machines in {\iEVT} using any signature
elements, including other machines, as parameterisation is not limited
to just sorts and constants.

\section{Conclusions and Future Work}%
\label{sec:conclude}
In this paper, we provide a detailed description of the Event-B institution, {\iEVT}. %This is an extension of previous work that contains a new kind of sentence to capture Event-B invariants and an associated satisfaction condition as well as more detailed proofs~\cite{farrell_institution_2017}.
Our definition of {\iEVT} allows the restructuring of Event-B specifications using the standard specification-building operators for institutions~\cite[\S 5.1]{sanella_foundations_2012}. Thus {\iEVT} provides a means for writing down and splitting up the components of an Event-B system, facilitating increased modularity for Event-B specifications. We have also shown how the institutional framework subsumes and can explicitly represent Event-B refinement.

Moreover, we contribute a formal semantics for the Event-B formal specification language. We achieved this by constructing a three-layer model for the Event-B language that decomposed it into its mathematical, infrastructure and superstructure sub-languages.
The  semantics that we have presented for each of these three constituent languages and thus for Event-B itself is grounded in the theory of institutions using our institution for Event-B ({\iEVT}) and the institution for first-order predicate logic with equality ({\iFOPEQ}). We have described a Haskell implementation of this semantics in our {\EBtoEVT} translator tool and illustrated how it is used to generate syntactically sugared {\iEVT}-specifications. We demonstrate our approach using the cars on a bridge example taken from Abrial's book~\cite{abrial_modeling_2010}.

Further to this, we have discussed refinement and illustrated the use of the specification-building operators as modularisation constructs in the {\iEVT} institution.
Institutions are designed with modularity and interoperability in mind and so our approach does not inhibit using multiple modelling domains to evaluate the semantics of an Event-B model. In fact, we have provided scope for the interoperability of Event-B with other formalisms that have been described in this institutional framework. Future work includes defining further institution-theoretic relationships between {\iEVT} and other formalisms that have been defined in the theory of institutions. This will provide a solid, mathematical foundation for their combination.

This work examined how the theory of institutions can provide a gateway to formalism-independent modularisation constructs for Event-B. The Event-B specification language does not natively embody strong modularisation features and one might wonder whether this work could motivate a more drastic modification to the Event-B formalism. It is our view that this work provides access to stronger modularisation for Event-B without the need to modify the formalism itself. This is particularly important since Event-B is industrial-strength and widely used in safety-critical systems already so dramatic modifications to the original language at this stage might cause issues in uptake. Although this paper is focused on Event-B, we have demonstrated how such modulrisation capabilities can be added to a formal specification language using the theory of institutions and future work seeks to examine how this approach can be applied to other similar formal languages. We have also illustrated the kinds of modularisation capabilities that are useful in a formal specification language and this work should inspire the development of moduilarisation features for these formalisms.
%%%%%%%%%%%%%%%%%%%%%%%%%%%%%%%%%%%%%%%%%%%%%%%%%%%%%%%%%%%%%%%%%%%%%%%%%%%%
\bibliographystyle{alphaurl}
\bibliography{bibliography}

\newcommand{\etalchar}[1]{$^{#1}$}
\begin{thebibliography}{EMCO92}

\bibitem[ABH{\etalchar{+}}10]{abrial2010rodin}
Jean-Raymond Abrial, Michael Butler, Stefan Hallerstede, Thai~Son Hoang, Farhad
  Mehta, and Laurent Voisin.
\newblock Rodin: an open toolset for modelling and reasoning in {Event-B}.
\newblock {\em International Journal on Software Tools for Technology
  Transfer}, 12(6):447--466, 2010.

\bibitem[Abr09]{abrial2009event}
Jean-Raymond Abrial.
\newblock Event model decomposition.
\newblock Technical report, Department of Computer Science, ETH Zurich, 2009.

\bibitem[Abr10]{abrial_modeling_2010}
Jean-Raymond Abrial.
\newblock {\em Modeling in {Event}-{B}: {System} and {Software} {Engineering}}.
\newblock Cambridge University Press, 1st edition, 2010.

\bibitem[AH07]{abrial_refinement_2007}
Jean-Raymond Abrial and Stefan Hallerstede.
\newblock Refinement, decomposition, and instantiation of discrete models:
  Application to {Event-B}.
\newblock {\em Fundamenta Informaticae}, 77(1-2):1--28, 2007.

\bibitem[AJ16]{AstarteJones-Algol-16}
Troy~K. Astarte and Cliff~B. Jones.
\newblock An exegesis of four formal descriptions of {ALGOL 60}.
\newblock Technical Report CS-TR-1498, Computing Science, Newcastle University,
  September 2016.

\bibitem[Bac90]{back1990refinement}
Ralph-Johan Back.
\newblock Refinement calculus, part {II}: Parallel and reactive programs.
\newblock In {\em Stepwise Refinement of Distributed Systems Models,
  Formalisms, Correctness}, volume 430 of {\em Lecture Notes in Computer
  Science}, pages 67--93, 1990.

\bibitem[BG77]{burstall1977putting}
Rod~M. Burstall and Joseph~A. Goguen.
\newblock Putting theories together to make specifications.
\newblock In {\em 5th International Joint Conference on Artificial
  intelligence, IJCAI}, volume~2, pages 1045--1058. Morgan Kaufmann, 1977.

\bibitem[BG80]{burstall1980semantics}
Rod~M. Burstall and Joseph~A. Goguen.
\newblock The semantics of {C}lear, a specification language.
\newblock In {\em Abstract Software Specifications}, volume~86 of {\em Lecture
  Notes in Computer Science}, pages 292--332, 1980.

\bibitem[BHK89]{bergstra_algebraic_1989}
J.~A. Bergstra, J.~Heering, and P.~Klint, editors.
\newblock {\em Algebraic Specification}.
\newblock Addison Wesley, 1989.

\bibitem[Bj{\o}06]{bjorner_se_2006}
Dines Bj{\o}rner.
\newblock {\em Software Engineering 1: Abstraction and Modelling}.
\newblock Springer, 2006.
\newblock \href {https://doi.org/10.1007/3-540-31288-9_12}
  {\path{doi:10.1007/3-540-31288-9_12}}.

\bibitem[But09]{Butler2009}
Michael Butler.
\newblock Decomposition structures for {Event-B}.
\newblock In {\em 7th International Conference on Integrated Formal Methods},
  volume 5423 of {\em Lecture Notes in Computer Science}, pages 20--38, 2009.

\bibitem[BvW98]{back_refinement_1998}
Ralph-Johan Back and Joakim von Wright.
\newblock {\em Refinement Calculus: A systematic introduction}.
\newblock Springer, 1998.

\bibitem[CAPT15]{casto2015categorical}
Pablo~F. Castro, Nazareno Aguirre, Carlos~L. Pombo, and Maibaum TSE.
\newblock Categorical foundations for structured specifications in {Z}.
\newblock {\em Formal Aspects of Computing}, 27(5--6):831--865, 2015.

\bibitem[DGS91]{diaconescu_logical}
Razvan Diaconescu, Joseph Goguen, and Petros Stefaneas.
\newblock Logical support for modularisation.
\newblock In {\em Workshop on Logical Frameworks}, pages 83--130, 1991.

\bibitem[EMCO92]{ehrig1992introduction}
Hartmut Ehrig, Bernd Mahr, Ingo Classen, and Fernando Orejas.
\newblock Introduction to algebraic specification. part 1: Formal methods for
  software development.
\newblock {\em The Computer Journal}, 35(5):460--467, 1992.

\bibitem[Far17]{farrell_phdthesis}
Marie Farrell.
\newblock {\em {Event-B} in the Institutional Framework: Defining a Semantics,
  Modularisation Constructs and Interoperability for a Specification Language}.
\newblock PhD thesis, Department of Computer Science, Maynooth University,
  Ireland, 2017.

\bibitem[Fei94]{feijs-COLD-1994}
L.~M.~G. Feijs.
\newblock An overview of the development of {COLD}.
\newblock In {\em Semantics of Specification Languages (SoSL)}, pages 15--22.
  Springer London, 1994.

\bibitem[FMP17a]{farrell_csp_2017}
Marie Farrell, Rosemary Monahan, and James~F. Power.
\newblock Combining {Event-B} and {CSP}: {An Institution Theoretic Approach to
  Interoperability}.
\newblock In {\em 19th International Conference on Formal Engineering Methods,
  ICFEM}, volume 10610 of {\em Lecture Notes in Computer Science}, pages
  140--156, 2017.

\bibitem[FMP17b]{farrell_institution_2017}
Marie Farrell, Rosemary Monahan, and James~F. Power.
\newblock An institution for {Event-B}.
\newblock In {\em Recent Trends in Algebraic Development Techniques}, volume
  10644 of {\em Lecture Notes in Computer Science}, pages 104--119, 2017.

\bibitem[FMP17c]{farrell_clones_2017}
Marie Farrell, Rosemary Monahan, and James~F. Power.
\newblock Specification {Clones}: {An} empirical study of the structure of
  {Event-B} specifications.
\newblock In {\em 15th International {Conference} on {Software} {Engineering}
  and {Formal} {Methods}, SEFM}, volume 10469 of {\em Lecture Notes in Computer
  Science}, pages 152--167, 2017.

\bibitem[GB83]{goguen1983introducing}
Joseph~A. Goguen and Rod~M. Burstall.
\newblock Introducing institutions.
\newblock In {\em Workshop on Logic of Programs}, volume 164 of {\em Lecture
  Notes in Computer Science}, pages 221--256, 1983.

\bibitem[GB92]{goguen_institutions:_1992}
Joseph~A. Goguen and Rod~M. Burstall.
\newblock Institutions: abstract model theory for specification and
  programming.
\newblock {\em Journal of the ACM}, 39(1):95--146, 1992.

\bibitem[GGM96]{goguen1996algebraic}
Joseph~A Goguen, Joseph Goguen, and Grant Malcolm.
\newblock {\em Algebraic semantics of imperative programs}.
\newblock MIT Press, 1996.

\bibitem[GJM02]{ghezzi2002fundamentals}
Carlo Ghezzi, Mehdi Jazayeri, and Dino Mandrioli.
\newblock {\em Fundamentals of Software Engineering}.
\newblock Prentice Hall, 2002.

\bibitem[GPB11]{gondal_composing_2011}
Ali Gondal, Michael Poppleton, and Michael Butler.
\newblock Composing {Event}-{B} specifications -- case-study experience.
\newblock In {\em 10th International Conference on Software Composition},
  volume 6708 of {\em Lecture Notes in Computer Science}, pages 100--115, 2011.

\bibitem[GPS09]{gondal_feature_2009}
Ali Gondal, Michael Poppleton, and Colin Snook.
\newblock Feature composition-towards product lines of {Event}-{B} models.
\newblock In {\em 3rd International Workshop on Model-Driven Product Line
  Engineering}, volume 6698 of {\em Lecture Notes in Computer Science}, pages
  18--25, 2009.

\bibitem[Gun92]{gunter_semantics_1992}
Carl~A. Gunter.
\newblock {\em Semantics of {Programming} {Languages}: {Structures} and
  {Techniques}}.
\newblock MIT Press, Cambridge, Mass., November 1992.

\bibitem[Hal08]{hallerstede_purpose_2008}
Stefan Hallerstede.
\newblock On the purpose of {Event}-{B} proof obligations.
\newblock In {\em 1st International {Conference} on {Abstract} {State}
  {Machines}, {B} and {Z}, ABZ}, volume 5238 of {\em Lecture Notes in Computer
  Science}, pages 125--138, 2008.

\bibitem[Hen90]{hennessy_semantics_1990}
Matthew Hennessy.
\newblock {\em The {Semantics} of {Programming} {Languages}}.
\newblock Wiley, 1990.

\bibitem[HISW11]{hoang2011survey}
Thai~Son Hoang, Alexei Iliasov, Renato Silva, and Wei Wei.
\newblock A survey on {Event-B} decomposition.
\newblock {\em Electronic Communications of the EASST}, 46:1--15, 2011.

\bibitem[HJ98]{hoare1998unified}
Charles Anthony~Richard Hoare and He~Jifeng.
\newblock {\em {Unifying Theories of Programming}}.
\newblock Prentice Hall, 1998.

\bibitem[HL18]{he2018roadmap}
Jifeng He and Qin Li.
\newblock A new roadmap for linking theories of programming and its
  applications on {GCL} and {CSP}.
\newblock {\em Science of Computer Programming}, 162:3--34, 2018.

\bibitem[Hoa69]{hoare1969axiomatic}
Charles Antony~Richard Hoare.
\newblock An axiomatic basis for computer programming.
\newblock {\em Communications of the ACM}, 12(10):576--580, 1969.

\bibitem[Hoa78]{hoare1978communicating}
Charles Antony~Richard Hoare.
\newblock Communicating sequential processes.
\newblock In {\em The Origin of Concurrent Programming}, pages 413--443.
  Springer, 1978.

\bibitem[{ISO}02]{iso_z_2002}
{ISO/IEC}.
\newblock {\em Information Technology — Z Formal Specification Notation —
  Syntax, Type System and Semantics}.
\newblock ISO/IEC 13568:2002. International Organization for Standardization,
  July 2002.

\bibitem[ITL{\etalchar{+}}10]{iliasov_modules_2010}
Alexei Iliasov, Elena Troubitsyna, Linas Laibinis, Alexander Romanovsky, Kimmo
  Varpaaniemi, Dubravka Ilic, and Timo Latvala.
\newblock Supporting reuse in {Event-B} development: Modularisation approach.
\newblock In {\em 2nd International Conference on Abstract State Machines,
  Alloy, B and Z, ABZ}, volume 5977 of {\em Lecture Notes in Computer Science},
  pages 174--188, 2010.

\bibitem[Jon01]{jones-vdl-jucs-2001}
Cliff~B. Jones.
\newblock The transition from {VDL} to {VDM}.
\newblock {\em J. {UCS}}, 7(8):631--640, 2001.
\newblock \href {https://doi.org/10.3217/jucs-007-08-0631}
  {\path{doi:10.3217/jucs-007-08-0631}}.

\bibitem[KP98]{kesten1998modularization}
Yonit Kesten and Amir Pnueli.
\newblock Modularization and abstraction: The keys to practical formal
  verification.
\newblock In {\em 40th International Symposium on Mathematical Foundations of
  Computer Science, MFCS}, volume 9234 of {\em Lecture Notes in Computer
  Science}, pages 54--71, 1998.

\bibitem[Lam02]{lamporttla}
Leslie Lamport.
\newblock {\em Specifying Systems: The {TLA}+ Language and Tools for Hardware
  and Software Engineers}.
\newblock Addison-Wesley, 2002.

\bibitem[Lei10]{leino2010dafny}
K.~Rustan~M. Leino.
\newblock Dafny: An automatic program verifier for functional correctness.
\newblock In {\em International Conference on Logic for Programming Artificial
  Intelligence and Reasoning}, volume 6355 of {\em Lecture Notes in Computer
  Science}, pages 348--370, 2010.

\bibitem[LP95]{larsen-vdm-1995}
Peter~Gorm Larsen and Wieslaw Pawlowski.
\newblock The formal semantics of {ISO} {VDM-SL}.
\newblock {\em Computer Standards and Interfaces}, 17(5):585--601, 1995.
\newblock \href {https://doi.org/https://doi.org/10.1016/0920-5489(95)00026-Q}
  {\path{doi:https://doi.org/10.1016/0920-5489(95)00026-Q}}.

\bibitem[Mer12]{merz2012logic}
Stephan Merz.
\newblock On the logic of {TLA}+.
\newblock {\em Computing and Informatics}, 22(3-4):351--379, 2012.

\bibitem[Mid89]{middelburg1989vvsl}
Cornelis~Adam Middelburg.
\newblock {VVSL}: A language for structured {VDM} specifications.
\newblock {\em Formal Aspects of Computing}, 1:115--135, 1989.

\bibitem[MML07]{grumberg_heterogeneous_2007}
Till Mossakowski, Christian Maeder, and Klaus L{\"u}ttich.
\newblock The heterogeneous tool set, {H\textsc{ets}}.
\newblock In {\em 13th International Conference on Tools and {Algorithms} for
  the {Construction} and {Analysis} of {Systems}, TACAS}, volume 4424 of {\em
  Lecture Notes in Computer Science}, pages 519--522, 2007.

\bibitem[Mor87]{morris_theoretical_1987}
Joseph Morris.
\newblock A theoretical basis for stepwise refinement and the programming
  calculus.
\newblock {\em Science of Computer Programming}, 9(3):287--306, 1987.

\bibitem[Mos04]{caslref_2004}
Peter~D. Mosses, editor.
\newblock {\em {CASL} Reference Manual}, volume 2960 of {\em Lecture Notes in
  Computer Science}.
\newblock Springer, 2004.

\bibitem[MRG88]{morgan_refinement_1988}
Carroll Morgan, Ken Robinson, and Paul Gardiner.
\newblock {\em On the {Refinement} {Calculus}}.
\newblock Springer, 1988.

\bibitem[OCW09]{oliveira2009utp}
Marcel Oliveira, Ana Cavalcanti, and Jim Woodcock.
\newblock A {UTP} semantics for {C}ircus.
\newblock {\em Formal Aspects of Computing}, 21(1-2):3--32, 2009.

\bibitem[OMG11a]{uml_infrastructure_2011}
OMG.
\newblock {UML} {I}nfrastructure {S}pecification, v2.4.1.
\newblock Specification formal/2011-08-05, Object Management Group, August
  2011.

\bibitem[OMG11b]{uml_superstructure_2011}
OMG.
\newblock {UML} {S}uperstructure {S}pecification, v2.4.1.
\newblock Specification formal/2011-08-06, Object Management Group, August
  2011.

\bibitem[Pop08]{poppleton2008composition}
Michael Poppleton.
\newblock The composition of {Event-B} models.
\newblock In {\em 1st International Conference on Abstract State Machines, B
  and Z, ABZ}, volume 5238 of {\em Lecture Notes in Computer Science}, pages
  209--222, 2008.

\bibitem[Ros98]{roscoe1998theory}
Bill Roscoe.
\newblock {\em The Theory and Practice of Concurrency}.
\newblock Prentice Hall, 1998.

\bibitem[RS08a]{reeves_general_2008}
Steve Reeves and David Streader.
\newblock General refinement, part one: Interfaces, determinism and special
  refinement.
\newblock {\em Electronic Notes in Theoretical Computer Science}, 214:277--307,
  2008.

\bibitem[RS08b]{reeves_general_2008-1}
Steve Reeves and David Streader.
\newblock General refinement, part two: Flexible refinement.
\newblock {\em Electronic Notes in Theoretical Computer Science}, 214:309--329,
  2008.

\bibitem[SB09]{silva_supporting_2009}
Renato Silva and Michael Butler.
\newblock Supporting reuse of {Event}-{B} developments through generic
  instantiation.
\newblock In {\em 11th International {Conference} on {Formal} {Engineering}
  {Methods}, ICFEM}, volume 5885 of {\em Lecture Notes in Computer Science},
  pages 466--484, 2009.

\bibitem[SB12]{Silva2012}
Renato Silva and Michael Butler.
\newblock Shared event composition/decomposition in {Event-B}.
\newblock In {\em 9th International Symposium on Formal Methods for Components
  and Objects, FMCO}, volume 6957 of {\em Lecture Notes in Computer Science},
  pages 122--141, 2012.

\bibitem[Sch86]{schmidt_denotational_1986}
David~A. Schmidt.
\newblock {\em Denotational {Semantics}: a methodology for language
  development}.
\newblock Allyn and Bacon, 1986.

\bibitem[Sch01]{schneider2001b}
Steve Schneider.
\newblock {\em The B-method: An introduction}.
\newblock Palgrave, 2001.

\bibitem[Sch10]{schmalz_logic_2010}
Matthias Schmalz.
\newblock The {Logic} of {Event}-{B}.
\newblock Technical Report 698, Department of Computer Science, ETH Zurich,
  2010.

\bibitem[Smi95]{smith_oz_semantics_1995}
Graeme Smith.
\newblock A fully abstract semantics of classes for {Object-Z}.
\newblock {\em Formal Aspects of Computing}, 7(3):289--313, May 1995.

\bibitem[Smi12]{smith2012object}
Graeme Smith.
\newblock {\em The Object-Z Specification Language}.
\newblock Springer, 2012.

\bibitem[Spi88]{spivey1988understanding}
J~Michael Spivey.
\newblock {\em Understanding Z: a specification language and its formal
  semantics}, volume~3 of {\em Cambridge Tracts in Theoretical Computer
  Science}.
\newblock Cambridge University Press, 1988.

\bibitem[SS71]{scott1971toward}
Dana~S. Scott and Christopher Strachey.
\newblock Toward a mathematical semantics for computer languages.
\newblock Technical Monograph PRG-6, Oxford University Computing Laboratory,
  Programming Research Group, 1971.

\bibitem[ST88]{sannella1988specifications}
Donald Sannella and Andrzej Tarlecki.
\newblock Specifications in an arbitrary institution.
\newblock {\em Information and Computation}, 76(2-3):165--210, 1988.

\bibitem[ST12]{sanella_foundations_2012}
Donald Sannella and Andezej Tarlecki.
\newblock {\em Foundations of Algebraic Specification and Formal Software
  Development}.
\newblock Springer, 2012.

\bibitem[STW14]{schneider_behavioural_2014}
Steve Schneider, Helen Treharne, and Heike Wehrheim.
\newblock The behavioural semantics of {Event-B} refinement.
\newblock {\em Formal Aspects of Computing}, 26:251--280, 2014.

\bibitem[Tar44]{tarski1944semantic}
Alfred Tarski.
\newblock The semantic conception of truth: and the foundations of semantics.
\newblock {\em Philosophy and Phenomenological Research}, 4(3):341--376, 1944.

\bibitem[Tho99]{thompson1999haskell}
Simon Thompson.
\newblock {\em Haskell: the craft of functional programming}.
\newblock Addison-Wesley, 1999.

\bibitem[Wat91]{watt_programming_1991}
David~A. Watt.
\newblock {\em Programming {Language} {Syntax} and {Semantics}}.
\newblock Prentice Hall, 1991.

\bibitem[Win93]{winskel_semantics_1993}
Glynn Winskel.
\newblock {\em The Formal Semantics of Programming Languages}.
\newblock MIT Press, February 1993.

\bibitem[Wor92]{wordsworth1992software}
John~B. Wordsworth.
\newblock {\em Software Development with Z: a Practical Approach to Formal
  Methods in Software Engineering}.
\newblock Addison-Wesley, 1992.

\end{thebibliography}

%%%%%%%%%%%%%%%%%%%%%%%%%%%%%%%%%%%%%%%%%%%%%%%%%%%%%%%%%%%%%%%%%%%%%%%%%%%%
\clearpage
\appendix
\section*{Appendix}\renewcommand{\thesection}{A}

\signevt*
%\begin{lem}
%\label{lem:signevtA}
%{\iEVT}-signatures and signature morphisms define a category $\mathbf{Sign}_{\iEVT}$. The objects are signatures and the arrows are signature morphisms.
%\end{restatable}
\begin{proof}
Let $\Sigma = \langle S, \Omega, \Pi, E, V\rangle$ be an {\iEVT}-signature where $\langle S, \Omega, \Pi \rangle$ is a signature over {\iFOPEQ}, $\Sigma_{{\iFOPEQ}}$, the institution for first-order predicate logic with equality~\cite[\S 4.1]{sanella_foundations_2012}. $E$ is a set of tuples of the form $(event \ name \mapsto status )$ and $V$ is a set of sort-indexed variable names.

In a category, morphisms can be composed, their composition is associative and identity morphisms exist for every object in the category. We show that $\mathbf{Sign}_{{\iEVT}}$ preserves these three properties:
\begin{enumerate}[(\alph*)]
\item Composition of ${\iEVT}$-signature morphisms:\\
{\iEVT}-signature morphisms can be composed, the composition of $\sigma_S, \sigma_\Omega$ and $\sigma_\Pi$ is inherited from {\iFOPEQ}. Therefore, we only examine the composition of $\sigma_E$ and $\sigma_V$.
	\begin{itemize}
	\item[$\sigma_E$:] Event names do not have a sort or arity, so the only restrictions are (1) on functions of the form $( \texttt{Init}\mapsto \texttt{ordinary})$ and (2) that the ordering in the status poset is preserved. In both of these cases the composition is well-defined for $\sigma_E$.
	\item[$\sigma_V$:] Variable names are sort-indexed so $\sigma_V$ utilises $\sigma_S$ on these sorts.
	\begin{align*}
	\sigma_2(\sigma_1(v:s)) &= \sigma_2((\sigma_{1_{V}}(v): \sigma_{1_S}(s))) \\
	&= \sigma_{2_{V}}(\sigma_{1_{V}}(v)): \sigma_{2_S}(\sigma_{1_S}(s))
	\end{align*}

	\end{itemize}

	Let $\sigma_1: \Sigma_1 \rightarrow \Sigma_2$ and $\sigma_2: \Sigma_2 \rightarrow \Sigma_3$, then we prove that $\sigma_2 \circ \sigma_1$ is a morphism in the category of {\iEVT}-signatures.
	\begin{itemize}
	\item For all $(e_1 \mapsto st_1 ) \in \Sigma_1.E_1$, $\sigma_1(e_1\mapsto st_1) \in \Sigma_2.E_2$ and for all $(e_2\mapsto st_2) \in \Sigma_2.E_2$,  $\sigma_2(e_2\mapsto st_2 ) \in \Sigma_3.E_3$. Therefore $\sigma_2(\sigma_1( e_1\mapsto st_1 )) \in \Sigma_3.E_3$ so \[\forall ( e_1\mapsto st_1 ) \in \Sigma_1.E_1  \Rightarrow \sigma_2 \circ \sigma_1 (e_1 \mapsto st_1 ) \in \Sigma_3.E_3\]
		\item For all $(v_1:s_1) \in \Sigma_1.V_1$, $\sigma_1((v_1:s_1)) \in \Sigma_2.V_2$ and for all $(v_2:s_2) \in\Sigma_2.V_2$, $\sigma_2((v_2:s_2))  \in \Sigma_3.V_3$. Therefore $\sigma_2(\sigma_1((v_1:s_1))) \in \Sigma_3.V_3$ so
		\[\forall (v_1:s_1) \in \Sigma_1.V_1\Rightarrow \sigma_2 \circ \sigma_1 ((v_1:s_1)) \in \Sigma_3.V_3\]
\end{itemize}
Therefore, {\iEVT}-signature morphisms can be composed.\\

\item Composition of ${\iEVT}$-signature morphisms is associative:
\[(\sigma_3 \circ \sigma_2) \circ \sigma_1 = \sigma_3 \circ (\sigma_2 \circ \sigma_1)\]

	For $(e\mapsto st )\in \Sigma.E$:
	\begin{align*}
	\sigma_2 \circ \sigma_1( e\mapsto st ) &= \sigma_2(\sigma_1(e\mapsto st ))
	\end{align*}
	Then, by definition of composition
	\begin{align*}
	\sigma_3 \circ (\sigma_2 \circ \sigma_1)(e\mapsto st ) &= \sigma_3(\sigma_2(\sigma_1(e\mapsto st ))) \\
	&= \sigma_3 \circ \sigma_2 \circ(\sigma_1(e\mapsto st ))\\
	& = (\sigma_3 \circ \sigma_2) \circ \sigma_1( e\mapsto st )
	\end{align*}

Similarly, for $(v:s) \in \Sigma.V$,  \[\sigma_3 \circ (\sigma_2 \circ \sigma_1)((v:s)) = (\sigma_3 \circ \sigma_2) \circ \sigma_1((v:s))\]
Therefore, the composition of {\iEVT}-signature morphisms is associative.

\item Identity morphisms for ${\iEVT}$-signatures:\\
For any {\iEVT}-signature $\Sigma$, there exists an identity signature morphism $id_{\Sigma}: \Sigma \rightarrow \Sigma$.
$id_E$ and $id_V$ are such that $id_E(e\mapsto st ) = (e\mapsto st )$ and $id_V((v:s)) = (v:s)$. This morphism satisfies the following signature morphism condition \[( e\mapsto st)\in \Sigma.E \Rightarrow id_E(e\mapsto st ) \in \Sigma.E \ \ \land \ \ (v:s) \in \Sigma.V \Rightarrow id_V(v:s) \in \Sigma.V\]
\end{enumerate}
We have thus shown that $\mathbf{Sign}_{{\iEVT}}$ forms a category as instructed by the definition of an institution (Definition~\ref{def:ins}).
\end{proof}

\lemsen*

\begin{proof}
$\mathbf{Sen}_{\iEVT}$ is a functor therefore it is necessary to map the {\iEVT}-signature morphisms to corresponding functions over sentences. The functor maps morphisms to functions between sets of sentences respecting sort, arity and \texttt{Init} events. The image of a signature $\Sigma_i$ ($i \in \{1,2,..\}$) in $\mathbf{Sign}_{{\iEVT}}$ is an object $\mathbf{Sen}(\Sigma_i)$ in the category $\mathbf{Set}$. By the definition of the extension of the signature morphism to sentences~\cite[\S 4]{sanella_foundations_2012}, the translation of the object $\mathbf{Sen}(\Sigma_1)$ coincides with an object in $\mathbf{Sen}(\Sigma_2)$ with sort, operation, predicate, event and variable names translated with respect to the signature morphism $\sigma: \Sigma_1 \rightarrow \Sigma_2$. Thus, the image of a morphism in $\mathbf{Sign}_{{\iEVT}}$ is a function $Sen(\sigma): Sen(\Sigma_1) \rightarrow Sen(\Sigma_2)$ in the category $\mathbf{Set}$.

Then we prove that $\mathbf{Sen}$ preserves the composition of {\iEVT}-signature morphisms and identities as follows.
\begin{enumerate}[(\alph*)]
\item Composition of ${\iEVT}$-sentence morphisms:
\[Sen(\sigma_2 \circ \sigma_1) = Sen(\sigma_2) \circ Sen(\sigma_1)\]

Let $\Sigma_i$ ($i = 1..3$) be {\iEVT}-signatures and let $\sigma_1:\Sigma_1 \rightarrow \Sigma_2$ and $\sigma_2: \Sigma_2 \rightarrow \Sigma_3$ be {\iEVT}-signature morphisms. Given an {\iEVT}-sentence of the form $\langle e, \phi(\overline{x}, \overline{x}\prime)\rangle$, then by expanding each side of the following equivalence, since signature morphisms can be composed, we show that composition is preserved.
\begin{align*}
Sen(\sigma_2 \circ \sigma_1)(\langle e, \phi(\overline{x}, \overline{x}\prime)\rangle) &= Sen(\sigma_2) \circ Sen(\sigma_1) (\langle e, \phi(\overline{x}, \overline{x}\prime)\rangle)\\
\langle \sigma_2 \circ \sigma_1(e), \sigma_2 \circ \sigma_1(\phi(\overline{x}, \overline{x}\prime))\rangle &= Sen(\sigma_2)(\langle \sigma_1(e), \sigma_1(\phi(\overline{x}, \overline{x}\prime))\rangle)\\
\langle \sigma_2 (\sigma_1(e)), \sigma_2 (\sigma_1(\phi(\overline{x}, \overline{x}\prime)))\rangle &= \langle \sigma_2(\sigma_1(e)), \sigma_2(\sigma_1(\phi(\overline{x}, \overline{x}\prime)))\rangle
\end{align*}
Thus, composition is preserved.\\

\item Preservation of identities:\\
Let $id_{\Sigma_1}: \Sigma_1 \rightarrow \Sigma_1$ be an identity {\iEVT}-signature morphism as in Lemma~\ref{lem:signevt}. Since {\iEVT}-signature morphisms already preserve identity and $Sen(id_{\Sigma_1})$ is the application of the identity signature morphisms to every element of the sentence, then the identities are preserved. We can illustrate this as follows. Given a $\Sigma_1$-sentence $\langle e, \phi(\overline{x}, \overline{x}\prime)\rangle$,
\begin{align*}
& Sen(id_{\Sigma_1})(\langle e, \phi(\overline{x}, \overline{x}\prime)\rangle)\\
&= \langle id_{\Sigma_1}(e), id_{\Sigma_1}(\phi(\overline{x}, \overline{x}\prime))\rangle\\
&= \langle e, \phi(\overline{x}, \overline{x}\prime)\rangle
\end{align*}
and so, identities are preserved.
\end{enumerate}
Thus $\mathbf{Sen}_{{\iEVT}}$ is a functor.
\end{proof}

\evtmodcatlem*

\begin{proof}
We begin by describing {\iEVT}-model morphisms and then prove that composition and identities are preserved.

In {\iFOPEQ} a model morphism $h:A_1 \rightarrow A_2$ is a family of functions $h = \langle h_s:\lvert A_1\rvert _s \rightarrow \lvert A_2\rvert _s \rangle_{s \in S}$ which respects the sorts and arities of the operations and predicates. Recall from Definition~\ref{def:evtmod} that {\iEVT}-models have the form $\langle A,L, R \rangle$, therefore {\iEVT}-model morphisms are given by extending the corresponding {\iFOPEQ}-model morphisms for the $A$ component of the model to the initialising set $L$ and the relations in $R$.

Thus for each {\iEVT}-model morphism $\mu : \langle A_1,L_1, R_1 \rangle \rightarrow \langle A_2,L_2, R_2 \rangle$ there is an underlying {\iFOPEQ}-model morphism $h: A_1 \rightarrow A_2$, and we extend this to the states in the set $L_1$ and in the relation $R_1$.  That is, for any element \[\{x_1 \mapsto a_{1},\ldots,x_n \mapsto a_{n}, x_1\prime \mapsto a_{1}\prime,\ldots,x_n\prime \mapsto a_{n}\prime\} \in R_1.e\] in $R_1$ there is
\[ \{x_1 \mapsto h(a_{1}),\ldots,x_n \mapsto h(a_{n}),x_1\prime \mapsto h(a_{1}\prime),\ldots,x_n\prime \mapsto h(a_{n}\prime)\} \in R_2.e \]
  in $R_2$ where $x_1, \ldots, x_n, x_1\prime, \ldots, x_n\prime$ are variable names and their primed versions drawn from $V$. A similar construction follows for $L_1$.
The composition of model morphisms, their associativity and identity derives from that of {\iFOPEQ}.

\begin{enumerate}[(\alph*)]
\item Composition of ${\iEVT}$-model morphisms:\\
Let $M_i = \langle A_i,L_i, R_i \rangle$ be a model and $h_i : M_i \rightarrow M_{i+1}$ be an {\iEVT}-model morphism where $i \in \{1,2,3\}$. We can now show that the composition of {\iEVT}-model morphisms is associative as follows:

We know that
\[(h_3 \circ h_2) \circ h_1(A_1)  =  h_3 \circ (h_2 \circ h_1)(A_1)\] since model morphisms and thus their associativity for $A$ are inherited from {\iFOPEQ}.

Next, we consider $L_1$ and $R_1$ whose fundamental components are $\Sigma$-states (variable to value mappings). Signature morphisms on $\Sigma$-states follow the associated sort mappings in {\iFOPEQ} and thus model morphisms on $\Sigma$-states follow model morphisms on $A$ as above. In this way, compositionality of model morphisms is associative for $L_1$ and $R_1$.
%\begin{eqnarray*}
% (h_3 \circ h_2) \circ h_1 & = & h_3 \circ (h_2 \circ h_1)\\
% (h_3 \circ h_2)(h_1(M_1)) & = & h_3 \circ (h_2(h_1(M_1)))\\
% (h_3 \circ h_2)(M_2) & = & h_3 \circ (h_2(M_2))\\
% h_3(h_2(M_2)) & = & h_3(h_2(M_2))\\
% h_3(M_3) & = & h_3(M_3)\\
% M_4 & = & M_4
%\end{eqnarray*}

\item Identity morphism for ${\iEVT}$-models:\\
For any {\iEVT}-model $M_i$ there exists an identity model morphism $h_{id} : M_i \rightarrow M_i$. If $M_i = \langle A_i,L_i, R_i\rangle$ then $h_{id}(M_i) = \langle A_i, L_i, R_i\rangle$.
\end{enumerate}
Thus $\mathbf{Mod}_{{\iEVT}}(\Sigma)$ forms a category.
\end{proof}

\lemmodfun*
Recall that each $\Sigma$-State$_{A}$ is a set of variable-to-value mappings of the form
\[\{x_1 \mapsto a_{1}, \ldots, x_n \mapsto a_{n}\}\]
where $x_1, \ldots, x_n \in \Sigma.V $ (Definition~\ref{def:state}).

\begin{proof}

Let $M_2 = \langle A_2,L_2, R_2 \rangle$ be a $\Sigma_2$-model. Then the reduct $M_2\rvert _{\sigma}$ collapses the {\iEVT}-model to only contain signature items supported by $\Sigma_1$ and consists of the tuple
$M_2\rvert _{\sigma} = \langle A_2\rvert _{\sigma}, L_2\rvert _{\sigma}, R_2\rvert _{\sigma} \rangle$ such that
\begin{itemize}
\item $A_2 \rvert_{\sigma}$ is the reduct of the {\iFOPEQ}-component of the {\iEVT}-model along the {\iFOPEQ}-components of $\sigma: \Sigma \rightarrow \Sigma'$.
\item $L_2\rvert _{\sigma}$ and $R_2\rvert _{\sigma}$ are based on the reduction of the states of $A_2$ along $\sigma$. In particular, given $e \in dom(E_1)$ and $e \neq \texttt{Init}$ and $R_2.\sigma(e) \in R_2$
\[R_2.\sigma(e) = \{s_1, \ldots, s_m\} \]
where each $s_i$ is a $\Sigma_2$-state$_{A_2}$ ($1 \leq i \leq m$) is of the form
\[\{\sigma(x_1) \mapsto a_1,\ldots, \sigma(x_n) \mapsto a_n, \sigma(x_1\prime) \mapsto a_1\prime,\ldots, \sigma(x_n\prime) \mapsto a_n\prime\}\]
with $x_1,\ldots,x_n \in \Sigma.V $ and $x_1\prime,\ldots,x_n\prime \in \Sigma.V^{\prime}$.

Then for each $e \in dom(E1)$, $e \neq \texttt{Init}$ and $R_2\rvert _{\sigma}.e \in R_2\rvert _{\sigma}$ there is
\[R_2\rvert_{\sigma}.e=\{s_1\rvert _{\sigma},\ldots, s_m\rvert _{\sigma}\}\]
where each $s_i\rvert _{\sigma}$ ($1\leq i \leq m$) is of the form
\[\{x_1 \mapsto a_1, \ldots, x_n \mapsto a_n, x_1\prime \mapsto a_1\prime, \ldots, x_n\prime \mapsto a_n\prime\}\]
\end{itemize}

\noindent
In order to prove that the model reduct is a functor, we show that it preserves composition and identities as follows:
\begin{enumerate}[(\alph*)]
\item Preservation of composition for ${\iEVT}$-model reducts:\\
Our objective here is to show that the reduct of a composition of two {\iEVT}-model morphisms is equal to the composition of the reducts of those {\iEVT}-model morphisms.
Given {\iEVT}-model morphisms $h_1: M_1 \rightarrow M_2$ and $h_2:  M_2 \rightarrow M_3$, then we show that
\[(h_2 \circ h_1)\rvert _{\sigma} = h_2\rvert _{\sigma} \circ h_1\rvert _{\sigma}\]
for some {\iEVT}-signature morphism $\sigma: \Sigma \rightarrow \Sigma'$.
Given an {\iEVT}-model of the form $M_1 = \langle A, L, R\rangle$ over $\Sigma$, then for any $R.e \in R$, as outlined above, of the form \[\{x_1 \mapsto a_1, \ldots, x\prime_n \mapsto a\prime_n\}_e\] Then $(h_2 \circ h_1)|_{\sigma}$ is defined as
\[(h_2 \circ h_1)\{\sigma(x_1) \mapsto a_1, \ldots, \sigma(x\prime_n) \mapsto a\prime_n\}_\sigma(e)\]
This is equal to
\[h_2 \{\sigma(x_1) \mapsto h_1(a_1), \ldots, \sigma(x\prime_n) \mapsto h_1(a\prime_n)\}_\sigma(e)\]
Since {\iEVT}-model morphisms can be composed, this is thus equal to $h_2\rvert _{\sigma} \circ h_1\rvert _{\sigma}$.

\item Preservation of identities for ${\iEVT}$-model reducts:\\
The reduct of the identity is the identity.
Let $id_{M_2}$ be an identity $\Sigma_2$-morphism then $id_{M_2}\rvert _{\sigma}$ is an identity $\Sigma_1$-morphism $h_1$ defined by $h_1(R.e) = id_{M_2}\rvert _{\sigma} (R.e) = R.e$ for any $R.e \in R$ and $ e $ is an event other than \texttt{Init}.
\end{enumerate}

\noindent
For the components belonging to $A$ these proofs follow the corresponding proofs in {\iFOPEQ}.\end{proof}

\lemmodfunmor*

\begin{proof}
For each $\sigma: \Sigma_1 \rightarrow \Sigma_2$ in $\mathbf{Sign}_{\iEVT}$ there is an arrow in $\mathbf{Sign}_{\iEVT}^{op}$ going in the opposite direction. By Lemma 4, the image of this arrow in $\mathbf{Sign}_{\iEVT}^{op}$ is $\mathbf{Mod}(\sigma): \mathbf{Mod}(\Sigma_2) \rightarrow \mathbf{Mod}(\Sigma_1)$ in $\mathbf{Cat}$. By Lemma 3, the image of a signature in $\mathbf{Sign}_{\iEVT}$ is an object $\mathbf{Mod}(\Sigma)$ in $\mathbf{Cat}$. Therefore, domain and co-domain of the image of an arrow are the images of the domain and co-domain respectively.

\begin{enumerate}[(\alph*)]
\item Preservation of composition:
\[\mathbf{Mod}(\sigma_2 \circ \sigma_1) = \mathbf{Mod}(\sigma_2) \circ \mathbf{Mod}(\sigma_1)\]
Let $\sigma_1: \Sigma_1 \rightarrow \Sigma_2$ and $\sigma_2: \Sigma_2 \rightarrow \Sigma_3$ be {\iEVT}-signature morphisms and let $M_i = \langle A_i, L_i, R_i \rangle$ be an {\iEVT}-model over $\Sigma_i$ and let $h_i$ be a $\Sigma_i$-model morphism with $i \in \{1,2,3\}$.
	\begin{itemize}
	\item $M_3\rvert _{\sigma_2 \circ \sigma_1} = (M_3\rvert _{\sigma_2})\rvert _{\sigma_1}$\\
	By definition of reduct
	$M_3\rvert _{\sigma_2} = \langle A_3, L_3, R_3 \rangle\rvert _{\sigma_2} = \langle A_2, L_2, R_2 \rangle = M_2$.\\
	Then
	$(M_3\rvert _{\sigma_2})\rvert _{\sigma_1} = M_2\rvert_{\sigma_1} = \langle A_2, L_2, R_2 \rangle\rvert _{\sigma_1} = \langle A_1, L_1, R_1 \rangle = M_1$.\\
	By composition of signature morphisms $\sigma_2 \circ \sigma_1: \Sigma_1 \rightarrow \Sigma_3$, then \[M_3\rvert _{\sigma_2 \circ \sigma_1}=\langle A_3, L_3, R_3 \rangle\rvert _{\sigma_2 \circ \sigma_1} = \langle A_1,L_1, R_1 \rangle = M_1\]
	Therefore $M_3\rvert _{\sigma_2 \circ \sigma_1} = (M_3\rvert _{\sigma_2})\rvert _{\sigma_1}$
	\item $h_3\rvert _{\sigma_2 \circ \sigma_1} = (h_3\rvert _{\sigma_2})\rvert _{\sigma_1}$\\
	Proof similar to above.
	\end{itemize}

\item Preservation of identities:\\
Let $id_{\Sigma_1}$ be an identity signature morphism as defined in Lemma~\ref{lem:signevt}. Since {\iEVT}-signature morphisms already preserve identity and $Mod(id_{\Sigma_1})$ is the application of the identity signature morphisms to every part of the {\iEVT}-model, the identities are preserved.
\qedhere
\end{enumerate}
\end{proof}

%\begin{prop}Pushouts exist in \textbf{Sign}$_{{\iEVT}}$.\label{prop:push}\end{prop}
\pushoutprop*
\begin{proof}

Given two {\iEVT}-signature morphisms $\sigma_1: \Sigma \rightarrow \Sigma_1$
and $\sigma_2 : \Sigma \rightarrow \Sigma_2$ a pushout is a triple
$(\Sigma', \sigma_1',\sigma_2')$ that satisfies the universal
property: for all triples $(\Sigma'', \sigma_1'',\sigma_2'')$ there
exists a unique morphism $u: \Sigma' \rightarrow \Sigma''$ such that
the diagram below commutes. Our pushout construction
follows {\iFOPEQ} for the elements that {\iFOPEQ} has in common with
{\iEVT}. In $\textbf{Sign}_{{\iEVT}}$ the additional elements are $E$
and $V$ as presented below.

\begin{center}
	\begin{tikzpicture}[scale=0.4]
	\node (P0) at (90:2cm) {$\Sigma$};
	\node (P1) at (90+90:2.5cm) {$\Sigma_1$} ;
	\node (P2) at (90+2*90:2.5cm) {$\Sigma'$};
	\node (P3) at (90+3.75*72:2.5cm) {$\Sigma_2$};
	\node (P4) at (90+2*90:5cm) {$\Sigma''$};
	\path[commutative diagrams/.cd, every arrow, every label]
	(P0) edge node [swap]{$\sigma_1$} (P1)
	(P1) edge node  {$\sigma_1'$} (P2)
	(P3) edge node[swap] {$\sigma_2'$} (P2)
	(P0) edge node {$\sigma_2$} (P3)
	(P1) edge node [swap]{$\sigma_1''$} (P4)
	(P3) edge node {$\sigma_2''$} (P4)
	(P2) edge[dashed] node {$u$} (P4);
	\end{tikzpicture}
\end{center}
We base our constructions of the pushout in $E$ and $V$ on the canonical pushout in $\mathbf{Set}$.
\begin{itemize}
\item \textit{Set of functions of the form $($event name $\mapsto$ status$)$ in $E$:}
The set of all event names in the pushout is the pushout in $\mathbf{Set}$ (disjoint union)
on event names only.  Then, the status of an event in the pushout is the supremum of all event statuses that are mapped to it, according to the ordering given in Definition~\ref{def:evtsigmor}.
Since {\iEVT}-signature morphisms map $(\texttt{Init}\mapsto \texttt{ordinary})$ to $( \texttt{Init} \mapsto \texttt{ordinary})$ the pushout does likewise. The universality property for $E$ follows from that of $\mathbf{Set}$.
% $$\sigma'_1 \circ \sigma_1( e \mapsto status ) \sim \sigma'_2 \circ \sigma_2(e \mapsto status )$$ for $( e \mapsto status )\in \Sigma.E$.\\

The canonical injections in $\mathbf{Set}$ for $E$ are denoted by $inl$ for inject left and $inr$ for inject right. Then, let $(e_1\mapsto status_1) = \sigma_1(e \mapsto status)$, $(e_2\mapsto status_2) = \sigma_2(e \mapsto status)$ and $status' = status_1 \sqcup status_2$ (where $\sqcup$ denotes the supremum).   % supremum operator
Thus, the pushout in $E$ is given by the formula $E_1 \dot{\cup} E_2 / \sim$ where $\sim$ is the least equivalence relation such that
   \[(inl(e_1)\mapsto status') \sim (inr(e_2)\mapsto status')\]

\item \textit{Set of sort-indexed variable names $V$:}
The set of sort-indexed variable names in the pushout is the pushout in {\iFOPEQ} for the sort components and the pushout in $\mathbf{Set}$ for the variable names.
This is a similar construction to the pushout for operation names in {\iFOPEQ} as these also have to follow the sort pushout.
Thus, the universality property for $V$ follows from that of $\mathbf{Set}$ and the {\iFOPEQ} pushout for sorts.Thus the pushout in $V$ is given by the formula $V_1 \dot{\cup} V_2 / \sim$ where $\sim$ is the least equivalence relation such that \[\sigma'_1 \circ \sigma_1(v:s) \sim \sigma'_2 \circ \sigma_2(v:s)\] for $(v:s)\in \Sigma.V$.
\qedhere
\end{itemize}
\end{proof}
%\begin{prop}
%Every pushout diagram in \textbf{Sign}$_{{\iEVT}}$ admits weak model amalgamation.
\weakamalgprop*
%\end{prop}
%The following diagram in \textbf{Sign} admits amalgamation if:
%\begin{center}
%	\begin{tikzpicture}[scale=0.7]
%	\node (P0) at (90:2.3cm) {$\Sigma'$};
%	\node (P1) at (90+72:2cm) {$\Sigma_1$} ;
%	\node (P2) at (90+2*90:1cm) {$\Sigma$};
%	\node (P3) at (90+4*72:2cm) {$\Sigma_2$};
%	\path[commutative diagrams/.cd, every arrow, every label]
%	(P1) edge node {$\sigma_1'$} (P0)
%	(P2) edge node {$\sigma_1$} (P1)
%	(P2) edge node[swap] {$\sigma_2$} (P3)
%	(P3) edge node[swap] {$\sigma_2'$} (P0);
%	\end{tikzpicture}
%\end{center}
%\begin{enumerate}
%	\item[(a)]for $M_1 \in |\textbf{Mod}(\Sigma_1)|$ and $M_2 \in |\textbf{Mod}(\Sigma_2)|$ such that $M_1|_{\sigma_1} = M_2|_{\sigma_2}$, there exists a unique model (the amalgamation of $M_1$ and $M_2$) $ M' \in |\textbf{Mod}(\Sigma')|$ such that $M'|_{\sigma'_1} = M_1$ and $M'|_{\sigma'_2} = M_2$.
%    \item[(b)] for any two model morphisms $f_1:M_{11} \rightarrow M_{12}$ in $\textbf{Mod}(\Sigma_1)$ and $f_2:M_{21} \rightarrow M_{22}$ in $\textbf{Mod}(\Sigma_2)$ such that $f_1|_{\sigma_1} = f_2|_{\sigma_2}$, there exists a unique model morphism (the amalgamation of $f_1$ and $f_2$) $f':M'_1 \rightarrow M'_2$ in $\textbf{Mod}(\Sigma')$ such that $f'|_{\sigma'_1} = f_1$ and $f'|_{\sigma'_2} = f_2$.
%\end{enumerate}
We decompose this proposition into two further sub-propositions:

\begin{subprop}For $M_1 \in \lvert\textbf{Mod}(\Sigma_1)\rvert$ and $M_2 \in \lvert\textbf{Mod}(\Sigma_2)\rvert$ such that $M_1\rvert _{\sigma_1} = M_2\rvert _{\sigma_2}$, there exists an {\iEVT}-model (the amalgamation of $M_1$ and $M_2$) $ M' \in \lvert\textbf{Mod}(\Sigma')\rvert$ such that $M'\rvert _{\sigma'_1} = M_1$ and $M'\rvert _{\sigma'_2} = M_2$.
\end{subprop}

\begin{proof}
%Consider the commutative diagram with {\iEVT}-signature morphisms $\sigma_1,\sigma_2,\sigma_1'$ and $\sigma_2'$ below:
Given the {\iEVT}-models $M \in \lvert \mathbf{Mod}(\Sigma)\rvert, M_1\in \lvert \mathbf{Mod}(\Sigma_1)\rvert, M_2\in \lvert \mathbf{Mod}(\Sigma_2)\rvert$ and {\iEVT}-signature morphisms $\sigma_1: \Sigma \rightarrow \Sigma_1, \sigma_2: \Sigma: \rightarrow \Sigma_2$ in the commutative diagram below.
\begin{center}
	\begin{tikzpicture}[scale=0.6]
	\node (P0) at (90:4cm) {$M' = \langle A',L',R'\rangle$};
	\node (P1) at (90+72:5cm) {$M_1 = \langle A_1,L_1,R_1\rangle$} ;
	\node (P2) at (90+2*90:1cm) {$M = \langle A,L,R\rangle$};
	\node (P3) at (90+4*72:5cm) {$M_2 = \langle A_2,L_2,R_2\rangle$};
	\path[commutative diagrams/.cd, every arrow, every label]
	(P0) edge node [swap] {$Mod(\sigma_1')$} (P1)
	(P1) edge node[swap]  {$Mod(\sigma_1)$} (P2)
	(P3) edge node{$Mod(\sigma_2)$} (P2)
	(P0) edge node {$Mod(\sigma_2')$} (P3);
	\end{tikzpicture}
\end{center}
%We define $M=M_1|_{\sigma_1}=M_2|_{\sigma_2}$ where $A = A_1|_{\sigma_1} = A_2|_{\sigma_2}$ is as expected in {\iFOPEQ}, $L = L_1|_{\sigma_1} = L_2|_{\sigma_2}$ and $R = R_1|_{\sigma_1} = R_2|_{\sigma_2}$.
 We compute the model amalgamation (following the corresponding pushout diagram in $\mathbf{Sign}$ as illustrated in Proposition~\ref{prop:push}) $M' = M_1 \otimes M_2$ which is of the form $\langle A', L', R' \rangle$ where
$A' = A_1 \otimes A_2$ is the %unique
{\iFOPEQ}-model amalgamation of $A_1$ and $A_2$. We
construct the initialising set $L' = L_1 \otimes L_2$ by amalgamating the initialising sets $L_1$ and $L_2$ to
get the set of all possible combinations of variable mappings, while
respecting the amalgamations induced on variable names via the pushout
$\Sigma.V'$.

For example, suppose that the sort-indexed variable $x$ is in $\Sigma.V$. Then we can apply the {\iEVT}-signature morphisms $\sigma_1: \Sigma \rightarrow \Sigma_1$ and $\sigma_2: \Sigma \rightarrow \Sigma_2$ to get $\sigma_1(x) = x_1 \in \Sigma_1.V$ and $\sigma_2(x) = x_2 \in \Sigma_2.V$.
From Proposition~\ref{prop:push}, we know that the pushout $ (\Sigma_1.V \; \cup \; \Sigma_2.V) /{}_{\sim}$ contains the variable name $x'$ such that $\sigma_1'(x_1) = \sigma_2'(x_2) = x' \in \Sigma'.V$. We follow this pushout construction in order to construct the corresponding model amalgamation. A model $M \in \lvert \mathbf{Mod}(\Sigma)\rvert$ contains an initialising set $L$ which, in turn, contains maplets of the form $x \mapsto a$ where $a$ is a sort-appropriate value for the variable $x$. In light of the {\iEVT}-signature morphisms outlined above, we know that $M_1 \in \lvert \mathbf{Mod}(\Sigma_1)\rvert$ contains maplets of the form $x_1 \mapsto a_1$ and $M_2 \in \lvert \mathbf{Mod}(\Sigma_2)\rvert$ contains maplets of the form $x_2 \mapsto a_2$ where $a_1$ and $a_2$ are sort-appropriate values for $x_1$ and $x_2$ respectively. Then the amalgamation $M' \in \lvert \mathbf{Mod}(\Sigma) \rvert$ has the initialising set $L' = L_1 \otimes L_2$ which contains all maplets of the form $x' \mapsto a'$ where $a'$ is a sort-appropriate value for $x'$ drawn from $A'$ (the {\iFOPEQ}-amalgamation of $A_1$ and $A_2$). Note that initialising sets contain sets of variable-to-value mappings for all of the variables in the $V$ component of their signatures, as such, the amalgamation contains the above maplets in all possible combinations with those corresponding to any other variables that may be in the signature.

We construct the relation $R' = R_1 \otimes R_2$, which is the
amalgamation of $R_1$ and $R_2$, in a similar manner.
Specifically, starting from any $R.e = \{s_1,\ldots,s_m\} \in R$ where $s_1,\ldots,s_m$ are states of the form \[\{x_1 \mapsto a_1,\ldots, x_n\prime \mapsto a_n\prime\}\]
where $x_1,\ldots, x_n\prime$ are the variable names (and their primed versions drawn from $\Sigma.V$).
We construct the corresponding relation $R'.\sigma'(e)$ in $R'$ so that the diagram in Figure~\ref{fig:bigcomm} commutes.

In Figure~\ref{fig:bigcomm}, $h' = (h_1 + h_2)$ is the corresponding function over the carrier-sets in $M'$ obtained from {\iFOPEQ}, and $\sigma' = (\sigma_1' \; \circ \; \sigma_1)+(\sigma_2'\; \circ\; \sigma_2)$
is the mapping for variable and event names obtained from the corresponding construction in $\mathbf{Sign}_{\iEVT}$.
\end{proof}

\begin{figure}
\vspace{50pt}
\begin{tikzpicture}[scale=0.26]
  \node (P0) at (90:22cm)
        {$R'.\sigma'(e) = \{\sigma'(x_i) \mapsto h'(a_i)\} \in R'$};
\node (P1) at (70+72:22.5cm)
      {$R_1.\sigma_1(e) =\{\{\sigma_1(x_i) \mapsto h_1(a_i)\}\in R_1$} ;
\node (P3) at (110+4*72:22.5cm)
      {$R_2.\sigma_2(e) =\{\sigma_2(x_i) \mapsto h_2(a_i)\} \in R_2$};
      \node (P2) at (90:4cm)
            {$R.e =\{x_i \mapsto a_i\} \in R$};
\path[commutative diagrams/.cd, every arrow, every label]
(P0) edge node[swap] {$Mod(\sigma_1')$} (P1)
(P1) edge node[swap] {$Mod(\sigma_1)$} (P2)
(P3) edge node {$Mod(\sigma_2)$} (P2)
(P0) edge node {$Mod(\sigma_2')$} (P3);
\end{tikzpicture}
\caption{The construction of $R' = R_1 \otimes R_2$, the amalgamation of $R_1$ and $R_2$.}%
\label{fig:bigcomm}
\end{figure}

\begin{subprop}
  For any two {\iEVT}-model morphisms $f_1:M_{11} \rightarrow M_{12}$ in $\textbf{Mod}(\Sigma_1)$ and $f_2:M_{21} \rightarrow M_{22}$ in $\textbf{Mod}(\Sigma_2)$ such that $f_1\rvert _{\sigma_1} = f_2\rvert _{\sigma_2}$, there exists an {\iEVT}-model morphism (the amalgamation of $f_1$ and $f_2$) called $f':M'_1 \rightarrow M'_2$ in $\textbf{Mod}(\Sigma')$, such that $f'\rvert _{\sigma'_1} = f_1$ and $f'\rvert _{\sigma'_2} = f_2$.
\end{subprop}
\begin{proof}
Given the {\iEVT}-model morphisms $f_1$ and $f_2$ and their common reduct $f_0$, we construct $f'$ so that the following diagram commutes:
\begin{center}
	\begin{tikzpicture}[scale=0.7]
	\node (P0) at (90:4cm) {$f' : M_1' \rightarrow M_2'$};
	\node (P1) at (90+72:5cm) {$f_1 : M_{11} \rightarrow M_{12}$} ;
	\node (P2) at (90+2*90:1cm) {$f_0 : M_{01} \rightarrow M_{02}$};
	\node (P3) at (90+4*72:5cm) {$f_2 : M_{21} \rightarrow M_{22}$};
	\path[commutative diagrams/.cd, every arrow, every label]
	(P0) edge node [swap] {$Mod(\sigma_1')$} (P1)
	(P1) edge node[swap]  {$Mod(\sigma_1)$} (P2)
	(P3) edge node{$Mod(\sigma_2)$} (P2)
	(P0) edge node {$Mod(\sigma_2')$} (P3);
	\end{tikzpicture}
\end{center}

Since each {\iEVT}-model has a {\iFOPEQ} model as its first component, each of the {\iEVT}-model morphisms $f_0$, $f_1$, $f_2$ and $f'$ must have an underlying {\iFOPEQ}-model morphism, which we denote $f_0^-$, $f_1^-$, $f_2^-$ and $f^{'-}$ respectively. To build the amalgamation for {\iEVT}-models we must show how to extend these to cover the data states of the {\iEVT}-models. This {\iEVT}-model morphism follows the underlying {\iFOPEQ}-model morphism on sort carrier sets for the values in the data states.\smallskip

Given $R.e \in R$, suppose we start with any $f_0$-maplet of the form
\begin{align*}
& \{\ldots,\{x_1 \mapsto a_1, \ldots, x_n \mapsto a_n\}, \ldots\}\\
& \mapsto \{\ldots,\{ x_1 \mapsto f_0^-(a_{1}), \ldots, x_n \mapsto f_0^-(a_{n})\},\ldots\}_e \;\in\; f_0
\end{align*}

\noindent where $f_0^-$ is the underlying map on data types from the {\iFOPEQ}-model morphism. Then the original two functions in $f_1$ and $f_2$ must have maplets of the form
\begin{align*}
& \{\ldots,\{\sigma_1(x_1) \mapsto h_1(a_1), \ldots, \sigma_1(x_n\prime) \mapsto h_1(a_n\prime)\}, \ldots\}\\
& \mapsto \{\ldots,\{ \sigma_1(x_1) \mapsto f_1^-(h_1(a_{1})), \ldots, \sigma_1(x_n\prime) \mapsto f_1^-(h_1(a_{n}\prime))\},\ldots\} \;\in\; f_1
\end{align*}
and
\begin{align*}
& \{\ldots,\{\sigma_2(x_1) \mapsto h_2(a_1), \ldots, \sigma_2(x_n\prime) \mapsto h_2(a_n\prime)\}, \ldots\}\\
& \mapsto \{\ldots,\{ \sigma_2(x_1) \mapsto f_2^-(h_2(a_{1})), \ldots, \sigma_2(x_n\prime) \mapsto f_2^-(h_2(a_{n}\prime))\},\ldots\} \;\in\; f_2
\end{align*}
where $f_1^-$ and $f_2^-$ are again the data type maps from the underlying {\iFOPEQ}-model morphism, and $h_1$ and $h_2$ are obtained from $Mod(\sigma_1)$ and $Mod(\sigma_2)$.

We then can construct the elements of the {\iEVT}-model morphism $f'$, which is the amalgamation of $f_1$ and $f_2$, as $f'$-maplets of the form:

\begin{align*}
& \{\ldots,\{\sigma'(x_1) \mapsto h'(a_1), \ldots, \sigma'(x_n\prime) \mapsto h'(a_n\prime)\}, \ldots\}\\
& \mapsto \{\ldots,\{ \sigma'(x_1) \mapsto f'^-(h'(a_{1})), \ldots, \sigma'(x_n\prime) \mapsto f'^-(h'(a_{n}\prime))\},\ldots\} \;\in\; f'
\end{align*}
As before, $h' = (h_1 + h_2)$ is the corresponding function over the carrier-sets in $M'$ obtained from {\iFOPEQ}, and $\sigma' = (\sigma_1' \; \circ \; \sigma_1)+(\sigma_2'\; \circ\; \sigma_2)$ is
the mapping for variable and event names obtained from the corresponding construction in $\mathbf{Sign}$.  Here $f^{'-} = f_1^-+f_2^-$ is the model morphism amalgamation from the corresponding diagram for model morphisms in {\iFOPEQ}, which ensures that the data states are mapped to corresponding states in the model $M_2'$.
\end{proof}

\end{document}